\documentclass[journal,twoside,web]{ieeecolor}

\usepackage{generic}
\usepackage{xargs}
\usepackage{xparse}
\usepackage{amsmath,amssymb,amsfonts}
\usepackage{bm,bbm}
\usepackage{mathtools}
\usepackage{nicefrac}
\usepackage{graphicx}
\usepackage{xcolor}
\usepackage{pgf}
\usepackage{tikz}
\usepackage{booktabs}
\usepackage{multirow}
\usepackage{algorithm,algorithmic}
\usepackage{subcaption}
\usepackage{hyperref}
\hypersetup{hidelinks,colorlinks,urlcolor=black}
\usepackage{orcidlink}
\usepackage[backend=biber,style=ieee,citestyle=numeric-comp,sorting=none]{biblatex}
\addbibresource{ref.bib}

\newtheorem{theorem}{Theorem}

\newcommand{\N}{\mathbb{N}} 
\newcommand{\R}{\mathbb{R}} 

\newcommand{\vct}[1]{\bm{#1}} 
\newcommand{\mtx}[1]{\bm{#1}} 
\newcommand{\tns}[1]{\bm{\mathcal{#1}}} 


\newcommandx{\seq}[3][2=k\in\N,3={}]{(#1)_{#2}^{#3}}

\DeclareMathOperator*{\minimize}{minimize} 
\DeclareMathOperator*{\argmin}{arg\,min} 

\renewcommand{\toprule}{\specialrule{0.1em}{0em}{0em}}
\renewcommand{\midrule}{\specialrule{0.05em}{0em}{0em}}
\renewcommand{\bottomrule}{\specialrule{0.1em}{0em}{0em}}


\newcommand{\remove}[1]{}


\NewDocumentCommand{\includegraphix}{O{\textwidth} m g g g g g}{%
    \begin{tikzpicture}
        \node[anchor=south west] (image) at (0,0) {\includegraphics[width=#1]{#2}};
        \begin{scope}[shift={(image.south west)}]
            \IfValueT{#3}{%
                \node[anchor=south west, text=white, fill=black] at (4pt,4pt) {#3};
            }
            \IfValueT{#4}{%
                \IfValueT{#5}{%
                    \draw[orange, very thick] #4 circle (#5);
                }
            }
            \IfValueT{#6}{%
                \IfValueT{#7}{%
                    \draw[orange, very thick] #6 circle (#7);
                }
            }
        \end{scope}
    \end{tikzpicture}%
}

\def\BibTeX{{\rm B\kern-.05em{\sc i\kern-.025em b}\kern-.08em
    T\kern-.1667em\lower.7ex\hbox{E}\kern-.125emX}}
\markboth{}{P. Ashtari \MakeLowercase{\textit{et al.}}: Deconver}

\begin{document}

\title{Deconver: A Deconvolutional Network for Medical Image Segmentation}

\author{
    Pooya Ashtari\,\orcidlink{0000-0002-5032-7808}, Shahryar Noei\,\orcidlink{0009-0006-1708-5591}, Fateme Nateghi Haredasht\,\orcidlink{0000-0002-8874-8835},
    Jonathan H. Chen\,\orcidlink{0000-0002-4387-8740}, Giuseppe Jurman\,\orcidlink{0000-0002-2705-5728},
    Aleksandra Pi\v{z}urica\,\orcidlink{0000-0002-9322-4999}, \IEEEmembership{Senior Member, IEEE}, and
    Sabine Van Huffel\,\orcidlink{0000-0001-5939-0996}
    \thanks{The research was partially funded by the Flanders AI Research Program and the National Plan for Complementary Investments to the NRRP (D34H project, code: PNC0000001).}
    \thanks{Pooya Ashtari is with the Department of Electrical Engineering (ESAT), STADIUS Center, KU Leuven, Belgium, and with the Department of Telecommunications and Information Processing, Ghent University, B-9000 Gent, Belgium (e-mail: \href{mailto:pooya.ashtari@esat.kuleuven.be}{pooya.ashtari@esat.kuleuven.be}, \href{mailto:pooya.ashtari@ugent.be}{pooya.ashtari@ugent.be}). Corresponding author.}
    \thanks{Shahryar Noei is with the Data Science for Health Unit, Fondazione Bruno Kessler, Via Sommarive 18, Povo, Trento, Italy (e-mail: \href{mailto:snoei@fbk.eu}{snoei@fbk.eu}).}
    \thanks{Fateme Nateghi Haredasht is with the Center for Biomedical Informatics Research, Stanford University, Stanford, CA, USA (e-mail: \href{mailto:fnateghi@stanford.edu}{fnateghi@stanford.edu}).}
    \thanks{Jonathan H. Chen is with the Center for Biomedical Informatics Research and with the Division of Hospital Medicine, Stanford University, Stanford, CA, USA (e-mail: \href{mailto:jonc101@stanford.edu}{jonc101@stanford.edu}).}
    \thanks{Giuseppe Jurman is with the Data Science for Health Unit, Fondazione Bruno Kessler, Via Sommarive 18, Povo, Trento, Italy and the Department of Biomedical Sciences, Humanitas University, Via Rita Levi Montalcini, 4, 20072 Pieve Emanuele MI (e-mail: \href{mailto:giuseppe.jurman@fbk.eu}{giuseppe.jurman@fbk.eu}).}
    \thanks{Aleksandra Pi\v{z}urica is with the Department of Telecommunications and Information Processing, Ghent University, B-9000 Gent, Belgium (e-mail: \href{mailto:aleksandra.pizurica@ugent.be}{aleksandra.pizurica@ugent.be}).}
    \thanks{Sabine Van Huffel is with the Department of Electrical Engineering (ESAT), STADIUS Center, KU Leuven, Belgium and with Leuven.AI - KU Leuven institute for AI, B-3000, Leuven, Belgium (e-mail: \href{mailto:sabine.vanhuffel@esat.kuleuven.be}{sabine.vanhuffel@esat.kuleuven.be}).}
    \thanks{Pooya Ashtari and Shahryar Noei contributed equally to this work.}
    \thanks{The project is available at \href{https://github.com/pashtari/deconver}{\textcolor{magenta}{https://github.com/pashtari/deconver}}.}
}

\maketitle

\begin{abstract}
    While convolutional neural networks (CNNs) and vision transformers (ViTs) have advanced medical image segmentation, they face inherent limitations such as local receptive fields in CNNs and high computational complexity in ViTs. This paper introduces Deconver, a novel network that integrates traditional deconvolution techniques from image restoration as a core learnable component within a U-shaped architecture. Deconver replaces computationally expensive attention mechanisms with efficient nonnegative deconvolution (NDC) operations, enabling the restoration of high-frequency details while suppressing artifacts. Key innovations include a backpropagation-friendly NDC layer based on a provably monotonic update rule and a parameter-efficient design. Evaluated across four datasets (ISLES'22, BraTS'23, GlaS, FIVES) covering both 2D and 3D segmentation tasks, Deconver achieves state-of-the-art performance in Dice scores and Hausdorff distance while reducing computational costs (FLOPs) by up to 90\% compared to leading baselines. By bridging traditional image restoration with deep learning, this work offers a practical solution for high-precision segmentation in resource-constrained clinical workflows.
\end{abstract}

\begin{IEEEkeywords}
    Deconvolution, Medical Image Segmentation, U-Net.
\end{IEEEkeywords}
\section{Introduction} \label{sec:introduction}

\IEEEPARstart{M}edical image segmentation is a fundamental task in modern healthcare, enabling precise delineation of anatomical structures and pathological regions essential for computer-assisted diagnosis, treatment planning, and surgical guidance. Despite advancements, achieving accurate segmentation remains challenging due to inherent complexities of medical images, such as low contrast, heterogeneous textures, and acquisition artifacts such as motion blur or noise.

Convolutional Neural Networks (CNNs), particularly U-Net \cite{ronneberger2015u} and its variants, have dominated medical image segmentation due to their ability to hierarchically extract spatial features. Extensions like 3D U-Net \cite{cciccek2016} and nnU-Net \cite{isensee2021nnu} further improved performance by adapting to volumetric data and automating architecture configurations. However, CNNs are inherently limited by their local receptive fields, hindering their ability to model long-range spatial dependencies, often critical for segmenting anatomically dispersed or structurally complex regions.

Recent efforts to address this limitation include enlarging kernel sizes \cite{roy2023mednext} or adopting Vision Transformers (ViTs). ViT-based models like nnFormer \cite{zhou2021nnformer} and MISSFormer \cite{huang2022missformer} excel at capturing global context via self-attention but suffer from quadratic computational complexity relative to input resolution. This restricts their practicality in high-resolution medical imaging and resource-constrained clinical environments. Hybrid architectures, such as TransUNet \cite{chen2024transunet} and Swin UNETR \cite{hatamizadeh2021swin}, attempt to balance locality and globality by combining convolutional and self-attention layers but often come at the cost of increased architectural complexity.

Deconvolution is a classical technique in image processing widely used for deblurring and image restoration through methods like Wiener filtering \cite{wiener1949extrapolation} and Richardson-Lucy algorithms \cite{lucy1974iterative}, iteratively refine estimates of latent sources under physical constraints like nonnegativity. While effective as a pre-processing step in traditional pipelines, the integration of deconvolution into deep learning frameworks remains underexplored, where it could synergize data-driven feature learning with the image enhancement capability for improved segmentation performance.

In this work, we propose \textbf{Deconver}, a novel segmentation network that integrates deconvolution as a core learnable layer within a U-shaped architecture. Our key insight is to replace computationally expensive attention mechanisms with efficient deconvolution operations, enabling the restoration of high-frequency details while suppressing artifacts. The main contributions of this work are threefold:
\begin{itemize}
    \item \textbf{Architectural Innovation:} Deconver is the first network to incorporate deconvolution principles as a learnable component within a deep architecture.
    \item \textbf{Nonnegative deconvolution layer:} We introduce a backpropagation-friendly, differentiable layer based on a provably monotonic update rule for nonnegative deconvolution, enabling stable end-to-end training using current deep learning frameworks.
    \item \textbf{Performance and efficiency:} Deconver achieves state-of-the-art performance on both 2D and 3D segmentation tasks with substantially fewer computational costs and parameters than leading baselines.
\end{itemize}
Extensive experiments across four datasets (ISLES'22, BraTS'23, GlaS, and FIVES) demonstrate Deconver's superiority in Dice scores and boundary accuracy (Hausdorff distance). By bridging classical image restoration with modern deep learning, Deconver provides a practical solution for high-precision segmentation, particularly in resource-constrained clinical workflows.
\section{Related Work} \label{sec:related_work}

\subsection{Convolutional Neural Networks}
Convolutional Neural Networks (CNNs) have played a central role in medical image segmentation, primarily due to their ability to extract hierarchical spatial features. The introduction of U-Net \cite{ronneberger2015u} established an encoder–decoder architecture with skip connections that has since become the backbone of many segmentation models. Variants have since emerged to improve performance across different settings: 3D U-Net \cite{cciccek2016} extends U-Net to volumetric data using 3D operations; UNet++ \cite{zhou2018unet++} incorporates nested dense skip connections; and nnU-Net \cite{isensee2021nnu} presents a self-adapting framework capable of configuring itself to a wide range of tasks. Other notable CNN-based advances include SegResNet \cite{myronenko20193d}, a residual U-Net variant that won the Brain Tumor Segmentation Challenge (BraTS) 2018.

Despite their effectiveness, a key limitation of CNNs is their inherently local receptive field, which restricts their capacity to capture long-range spatial dependencies. This poses challenges for segmenting anatomically complex or spatially dispersed structures. Two major approaches have emerged to address this issue. One involves using large  kernels, as seen in MedNeXt \cite{de2024robust}, which expands the receptive field by iteratively increase kernel sizes by upsampling small kernel networks. The other approach incorporates attention mechanisms to explicitly model global context, paving the way for Transformer-based and hybrid segmentation architectures.

\subsection{Transformers}
Originally introduced for natural language processing, Transformer architectures have been successfully adapted to computer vision tasks through Vision Transformers (ViTs) \cite{dosovitskiy2020image}, which model images as sequences of patch tokens. Fully Transformer-based segmentation models incorporate self-attention mechanisms in both the encoder and decoder, offering a shift from traditional convolution-based designs.

Among these, nnFormer  \cite{zhou2021nnformer} proposes an interleaved architecture that combines local and global self-attention layers with convolutional downsampling. MISSFormer  \cite{huang2022missformer} builds a hierarchical Transformer-based encoder-decoder tailored for medical image segmentation, enhancing both local precision and long-range contextual reasoning. Self-attention mechanisms are known to be computationally intensive, especially on long sequences, which can hinder their scalability in high-resolution medical imaging tasks. One solution has been proposed in  \cite{ashtari2023factorizer} by replacing attention with non-negative matrix factorization (NMF) which was shown to reduce the computational cost significantly while maintaining high performance. Other approaches include using hybrid models.

\subsection{Hybrid Models}
To balance the strengths of convolutional and attention-based methods, hybrid architectures have emerged as a practical solution. These models typically use a Transformer in the encoder and adopt a CNN-based decoder. One of the earliest hybrid models in medical imaging, TransUNet  \cite{chen2024transunet}, incorporates a ViT encoder into the bridge of a U-Net. Extending this work, UNETR  \cite{hatamizadeh2022unetr} employs a full Transformer-based encoder directly connected to a convolutional decoder via skip connections. Swin UNETR  \cite{hatamizadeh2021swin} further improves upon this by replacing the ViT encoder with Swin Transformer blocks  \cite{liu2021swin}, introducing a hierarchical structure that models both local and global dependencies efficiently across scales. While hybrid models are effective they still come with increased complexity and computational demands.

\subsection{Deconvolution}
Deconvolution is a fundamental technique in image processing aimed at reversing the effects of blurring and restoring an image closer to its original form \cite{satish2020comprehensive}. Early methods such as Wiener deconvolution \cite{wiener1949extrapolation} applied frequency-domain filtering to restore degraded signals, while iterative approaches like the Richardson-Lucy algorithm \cite{richardson1972bayesian, lucy1974iterative} refined the image estimate through successive likelihood-based updates.

Deconvolution techniques have been applied in medical image analysis to enhance image quality by mitigating blurring effects inherent in various imaging modalities. In fluorescence microscopy, deconvolution algorithms are employed to restore high-resolution details from blurred images, thereby improving the visualization of cellular structures \cite{katoh2024recent}. Similarly, in magnetic resonance imaging (MRI) \cite{debnath2013deblurring}, computed tomography (CT) \cite{sharma2016mse,liu2008deconvolution} and positron emission tomography (PET) \cite{sample2024neural}, deconvolution methods are utilized to mitigate blurring and improve image resolution. Despite the widespread use of deconvolution in traditional medical image processing, its full integration into deep learning architectures for improved segmentation has not been explored.

\section{Methods} \label{sec:methods}

\begin{figure*}[!t]
    \centering
    \includegraphics[width=.9\textwidth]{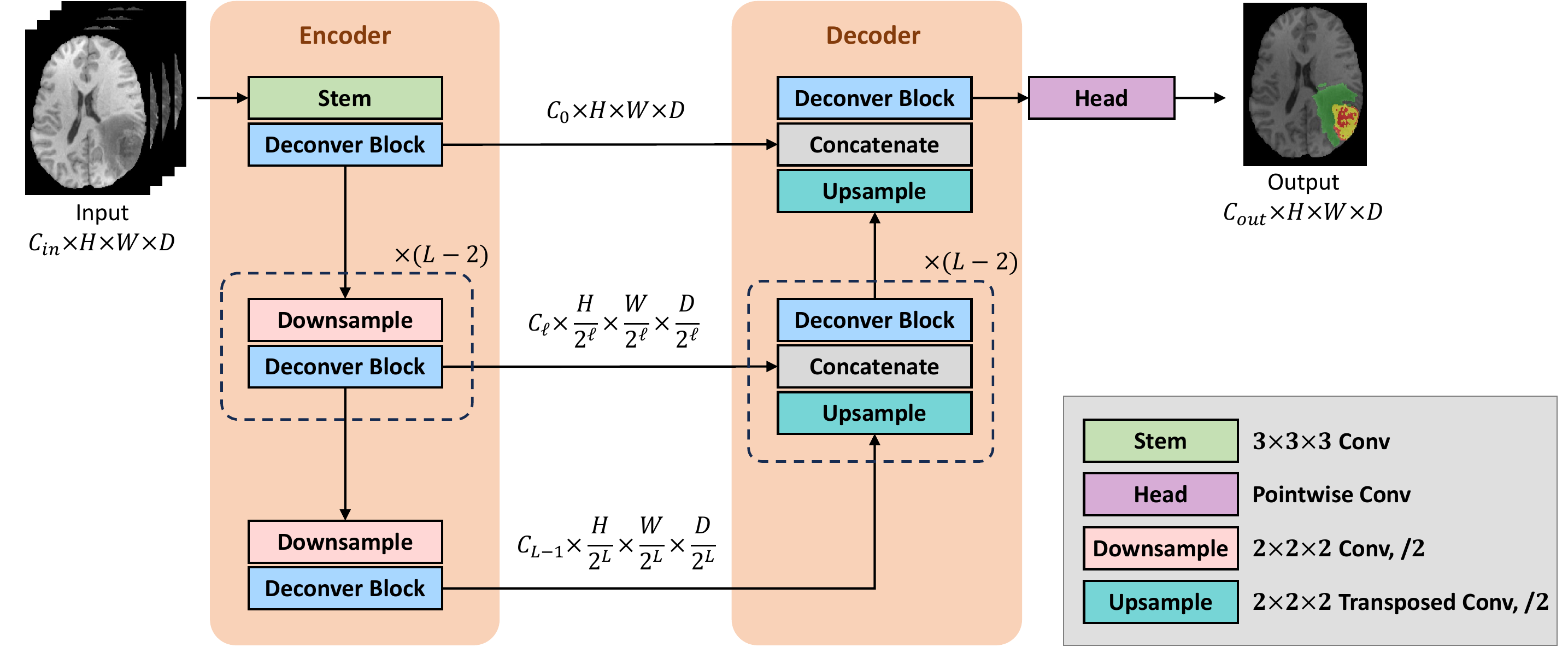}
    \vspace{1em}
    \caption{Overview of Deconver architecture.}
    \label{fig:deconver}
\end{figure*}

\subsection{Notation} \label{sec:notation}

We denote vectors by boldface lowercase letters (e.g., \( \vct{x}\)), matrices by boldface uppercase letters (e.g., \( \mtx{X} \)), and tensors by boldface calligraphic letters  (e.g., \( \tns{X} \)). For clarity, a 2D input image is represented as a 3D tensor \( \tns{X} \in \R^{C \times H \times W} \), where \(C\) denotes the number of channels, and \(H\) and \(W\) represent the spatial height and width. A 3D volumetric input is represented as a 4D tensor \( \tns{X} \in \R^{C \times H \times W \times D}\), with \( D \) denoting depth. Individual elements in a tensor are accessed via indices matching its dimensions, such as \( \tns{X}[c, h, w]\) for a 3D tensor or \( \tns{X}[c, h, w, d]\) for a 4D tensor.  

The inner product between two tensors \( \tns{X} \) and \( \tns{Y} \) of identical dimensions is denoted by 
\begin{equation*} \label{eq:inner_product}
    \langle \tns{X}, \tns{Y} \rangle = \sum_{i_1, \ldots, i_N} \tns{X}[i_1, \ldots, i_N] \ \tns{Y}[i_1, \ldots, i_N],
\end{equation*}
where \(N\) is the number of tensor dimensions. The Frobenius norm is defined as \( \| \tns{X} \|_\text{F} = \sqrt{\langle \tns{X}, \tns{X}  \rangle} \).

\subsection{Revisiting Deconvolution} \label{sec:deconvolution}

Deconvolution is a fundamental technique in image processing that seeks to reverse the effects of convolution to restore images degraded by blurring, noise, or other distortions. In medical image segmentation, deconvolution is particularly valuable for enhancing fine anatomical details and mitigating acquisition artifacts. This enables more precise delineation of structures such as tumors and blood vessels by recovering high-frequency components often lost during image acquisition.

\begin{figure*}[!t]
    \centering
    \begin{subfigure}{0.3\textwidth}
        \centering
        \includegraphics[width=.55\textwidth]{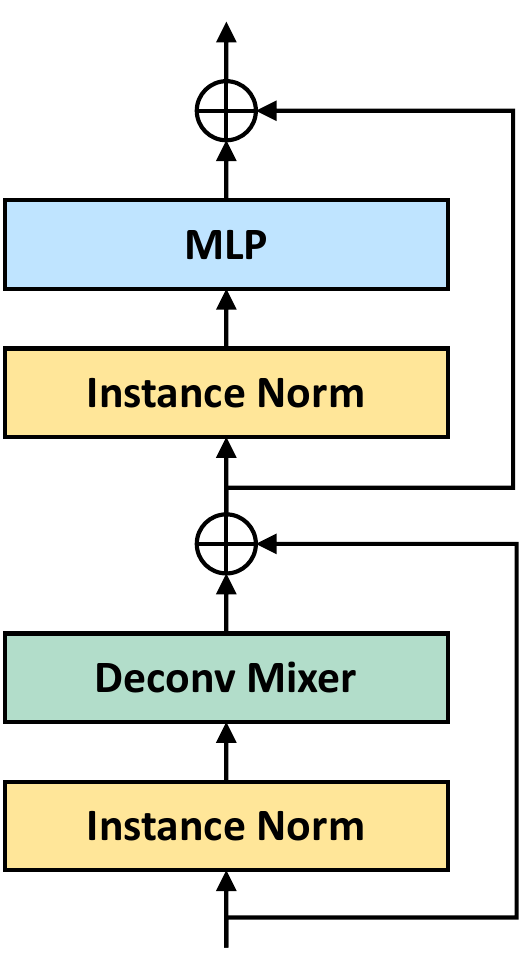}
        \caption{Deconver block}
        \label{fig:deconver_block}
    \end{subfigure}%
    \begin{subfigure}{0.3\textwidth}
        \centering
        \includegraphics[width=.55\textwidth]{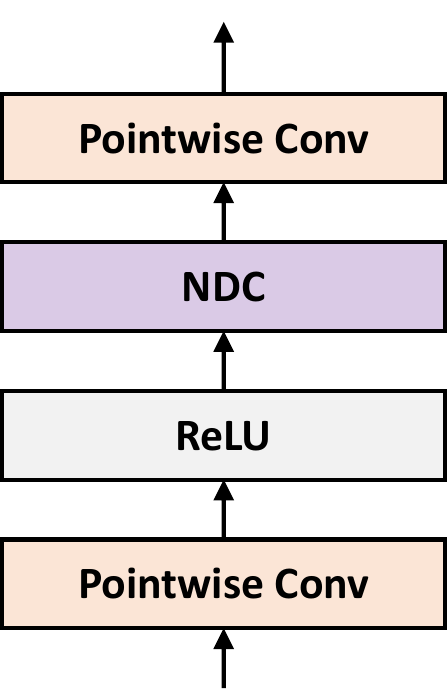}
        \caption{Deconv Mixer}
        \label{fig:deconv_mixer}
    \end{subfigure}%
    \begin{subfigure}{0.3\textwidth}
        \centering
        \includegraphics[width=.7\textwidth]{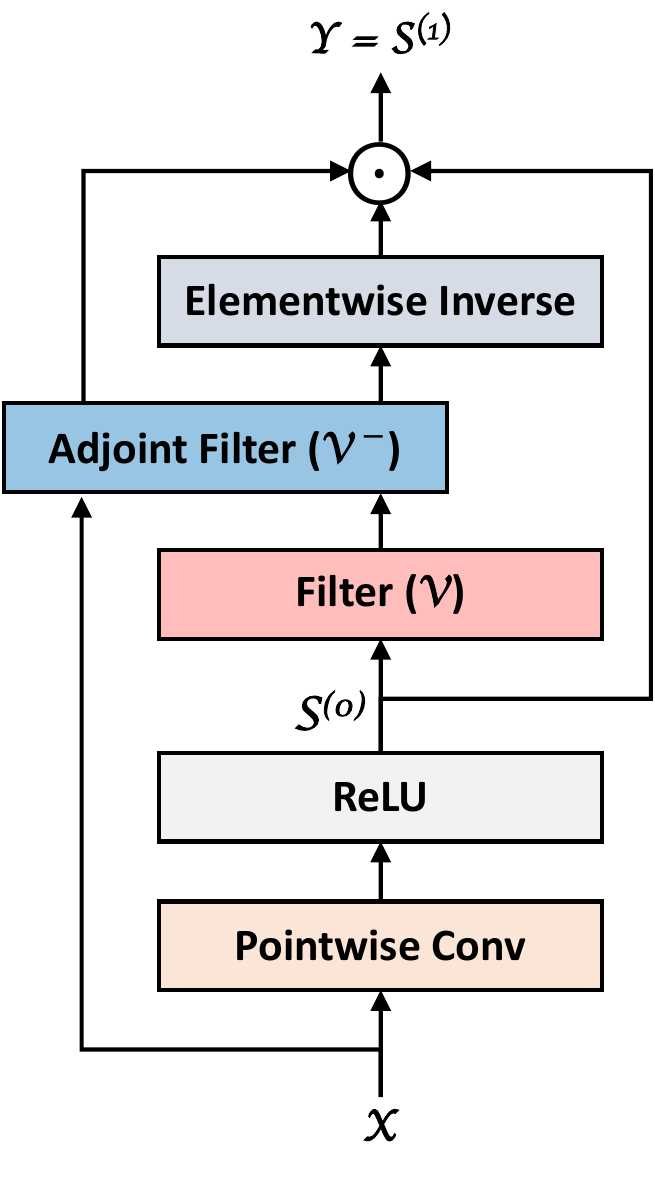}
        \caption{NDC layer}
        \label{fig:ndc}
    \end{subfigure}

    \vspace{1em}

    \caption{Overview of Deconver block and its components.}
    \label{fig:deconver_components}
\end{figure*}

\paragraph{Problem Formulation} Let a 2D input image be represented as \( \tns{X} \in \R^{C \times H \times W} \), where \( C \) denotes the number of channels, and \( H \) and \( W \) represent the height and width of the image, respectively. The objective is to recover the latent \emph{source image} \( \tns{S} \in \R^{E \times H \times W} \) from the observed \( \tns{X} \), given a known \emph{filter tensor} \( \tns{V} \in \R^{C \times E \times (2M+1) \times (2N+1)} \), where \( E \) represents the number of channels of the source image, and \( (2M+1, 2N+1) \) is the spatial size of the filter. Deconvolution aims to approximate  \( \tns{X} \) as the cross-correlation of \( \tns{S} \) and \( \tns{V} \), i.e., \( \tns{X} \approx \hat{\tns{X}} = \tns{S} \ast \tns{V} \), defined as
\begin{equation} \label{eq:cross-correlation}
    \hat{\tns{X}}[c,h,w] \triangleq \sum_{e=0}^{E-1} \sum_{m=0}^{2M} \sum_{n=0}^{2N} \tns{S}_{p}[e,h+m,w+n] \tns{V}[c,e,m,n],
\end{equation}
for \( c \in \{0, \ldots, C-1\} \), \( h \in \{0, \ldots, H-1\} \), and \( w \in \{0, \ldots, W-1\} \). Here, \( \tns{S}_p = \text{pad}(\tns{S}, (M,N)) \in \R^{E \times (H+2M) \times (W+2N)} \) denotes the zero-padded source image, ensuring to preserve spatial dimensions post-filtering (note that this definition aligns with CNN conventions, applying the filter without flipping). This formulation extends standard convolution by allowing multi-channel inputs and outputs, making it suitable for modern deep learning architectures.

\paragraph{Nonnegative Deconvolution (NDC)} In this work, we focus on the \emph{nonnegative deconvolution (NDC)}, where \( \tns{X} \geq 0 \), \( \tns{V} \geq 0 \), and \( \tns{S} \geq 0 \). The nonnegativity constraint aligns with the physical nature of medical imaging systems, where intensities are inherently positive, helping suppress negative artifacts that could mislead segmentation models. The goal is to estimate the source image \( \tns{S} \geq 0 \) by minimizing the reconstruction error:
\begin{align} \label{eq:deconv_problem}
    \minimize_{\tns{S}} & \quad \mathcal{E}(\tns{S}) = \| \tns{X} - \tns{S} \ast \tns{V} \|_\text{F}^2 \\ \nonumber \
    \text{subject to}    & \quad \tns{S} \geq 0,
\end{align}
where \( \|\cdot\|_\text{F} \) denotes the Frobenius norm. This formulation implicitly assumes additive Gaussian noise, which approximates complex noise in many imaging modalities while ensuring computational tractability.

\paragraph{Multiplicative Update Rule} To address problem \eqref{eq:deconv_problem}, we can derive an iterative update rule inspired by Richardson-Lucy algorithm \cite{lucy1974iterative} and nonnegative matrix factorization \cite{lee2000algorithms}. Starting with an initial guess \( \tns{S}^{(0)} \geq 0 \), the source image at iteration \( t+1 \) is updated as:
\begin{align} \label{eq:update_source}
    \tns{S}^{(t+1)} = \tns{S}^{(t)} \odot \frac{\tns{X} \ast \tns{V}^{-}}{(\tns{S}^{(t)} \ast \tns{V}) \ast \tns{V}^{-}},
\end{align}
where \( \odot \) denotes element-wise multiplication, \( \tns{V}^{-} \) is the \emph{adjoint filter}, which is transposed and spatially flipped (i.e., \(\tns{V}^{-}[d,c,m,n] = \tns{V}[c,d,2M-m,2N-n]\)), and the division is element-wise. The numerator correlates residuals with the filter, amplifying regions where \( \tns{S} \) underestimates \( \tns{X} \), while the denominator normalizes the update to prevent overshooting. This multiplicative form inherently preserves nonnegativity when \(\tns{S}^{(0)} \geq 0\).

\paragraph{Monotonicity} The key advantage of the multiplicative update \eqref{eq:update_source} is its guarantee of the reduction of the reconstruction error, which we will prove in theorem \ref{the:monotonicity}.

\begin{theorem}[Monotonicity] \label{the:monotonicity}
    Let \( \tns{S}^{(t)} \) be the source image at iteration \( t \). With a nonnegative initial source \( \tns{S}^{(0)} \geq 0 \) and under the update \eqref{eq:update_source}, the reconstruction error \( e^{(t)} \triangleq \|\tns{X} - \tns{S}^{(t)} \ast \tns{V}\|_\text{F}^2 \) is non-increasing, i.e., \( e^{(t+1)} \leq e^{(t)} \) for all \( t \geq 0 \).
\end{theorem}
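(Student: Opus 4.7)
The plan is to follow the classical majorization-minimization strategy used by Lee and Seung in their monotonicity proof for nonnegative matrix factorization. The strategy is to construct an auxiliary function $G(\tns{S}, \tns{S}')$ satisfying (i) $G(\tns{S}, \tns{S}) = \mathcal{E}(\tns{S})$ and (ii) $G(\tns{S}, \tns{S}') \geq \mathcal{E}(\tns{S})$ for all nonnegative $\tns{S}, \tns{S}'$. Once such a $G$ is available, choosing $\tns{S}^{(t+1)} \in \argmin_{\tns{S} \geq 0} G(\tns{S}, \tns{S}^{(t)})$ immediately yields the chain of inequalities $\mathcal{E}(\tns{S}^{(t+1)}) \leq G(\tns{S}^{(t+1)}, \tns{S}^{(t)}) \leq G(\tns{S}^{(t)}, \tns{S}^{(t)}) = \mathcal{E}(\tns{S}^{(t)})$, which is precisely the claim of the theorem. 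It then remains only to verify that the minimizer of $G(\cdot, \tns{S}^{(t)})$ over the nonnegative orthant coincides with the multiplicative update \eqref{eq:update_source}.

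To build $G$, I expand $\mathcal{E}(\tns{S}) = \|\tns{X}\|_\text{F}^2 - 2\langle \tns{X}, \tns{S} \ast \tns{V} \rangle + \|\tns{S} \ast \tns{V}\|_\text{F}^2$. The first two terms are already constant or linear in $\tns{S}$, so only the quadratic term needs to be majorized. Writing each output entry $\hat{\tns{X}}[c,h,w]$ as a sum of nonnegative contributions indexed by $(e,m,n)$, I introduce the weights $\lambda^{(t)}_{e,m,n}(c,h,w) = \tns{S}^{(t)}_p[e,h+m,w+n]\tns{V}[c,e,m,n]/\hat{\tns{X}}^{(t)}[c,h,w]$, which are nonnegative and sum to one in $(e,m,n)$ whenever the denominator is positive. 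Applying Jensen's inequality to the convex function $y \mapsto y^2$ produces a separable quadratic upper bound on $\hat{\tns{X}}[c,h,w]^2$ that is tight at $\tns{S} = \tns{S}^{(t)}$. Summing over $(c,h,w)$ and reattaching the linear and constant parts yields $G$ and verifies both defining properties.

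With $G(\cdot, \tns{S}^{(t)})$ in hand, the minimization is straightforward because the upper bound is separable across the entries of $\tns{S}$. Setting $\partial G / \partial \tns{S}[e',h',w'] = 0$ reduces to a scalar equation in each $\tns{S}[e',h',w']$. After substituting the definition of $\lambda^{(t)}$, the linear piece contributes $-2(\tns{X} \ast \tns{V}^-)[e',h',w']$ and the quadratic piece contributes $2 \tns{S}[e',h',w'] \cdot ((\tns{S}^{(t)} \ast \tns{V}) \ast \tns{V}^-)[e',h',w'] / \tns{S}^{(t)}[e',h',w']$, where the adjoint filter $\tns{V}^-$ arises through the identity $\langle \tns{Y}, \tns{S} \ast \tns{V} \rangle = \langle \tns{S}, \tns{Y} \ast \tns{V}^- \rangle$. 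Solving for $\tns{S}[e',h',w']$ reproduces exactly \eqref{eq:update_source}, and the update manifestly preserves nonnegativity since every factor is nonnegative.

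The main technical obstacle is the index bookkeeping surrounding the zero padding, multi-channel filter, and the adjoint $\tns{V}^-$. The entire argument hinges on the adjoint identity above, and to establish it one must change summation variables $m \mapsto 2M-m$, $n \mapsto 2N-n$ and invoke the padding convention so that out-of-range terms vanish. I will prove this identity once, in a short separate lemma, and then invoke it twice (for the linear cross-term and for the quadratic term). A secondary, minor issue is the degenerate case in which some entry of $\hat{\tns{X}}^{(t)}$ or $\tns{S}^{(t)}$ vanishes; by the standard convention those terms drop out of the Jensen bound and the update $0 \cdot (\cdot) = 0$ is consistent with stationarity, so the monotonicity conclusion persists.
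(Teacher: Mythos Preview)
Your proposal is correct and follows essentially the same majorization-minimization route as the paper: expand $\mathcal{E}(\tns{S})$, majorize the quadratic term $\|\tns{S}\ast\tns{V}\|_\text{F}^2$ by a separable surrogate that is tight at $\tns{S}^{(t)}$, then minimize to recover \eqref{eq:update_source}. The only cosmetic difference is that you invoke Jensen's inequality with the convex weights $\lambda^{(t)}_{e,m,n}$ whereas the paper applies Cauchy--Schwarz to a specific split $A_{d,m,n}B_{d,m,n}$; both produce the identical surrogate $\langle \tfrac{\tns{S}^2}{\tns{S}^{(t)}}\ast\tns{V},\,\tns{S}^{(t)}\ast\tns{V}\rangle$, and your packaging of the gradient computation via the adjoint identity $\langle\tns{Y},\tns{S}\ast\tns{V}\rangle=\langle\tns{S},\tns{Y}\ast\tns{V}^-\rangle$ is exactly what the paper derives by explicit index chasing.
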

\begin{proof}
    See Appendix for a detailed proof.
\end{proof}

The update rule \eqref{eq:update_source} also generalizes naturally to 3D volumes, making it suitable for modalities like MRI and CT. In Section \ref{sec:ndc_layer}, we integrate this deconvolution technique as a learnable layer within a deep neural network, enhancing multi-scale feature maps to improve segmentation performance.

\subsection{Overall Architecture} \label{sec:overall_architecture}

The Deconver architecture adopts a U-shaped structure (see Fig. \ref{fig:deconver}), comprising an encoder and decoder with skip connections in between at equal resolutions. Given a 3D input image \( \tns{X} \in \R^{C_{in} \times H \times W \times D} \) with \( C_{in} \) input channels and spatial dimensions \((H, W, D)\), the network outputs a \emph{logit} map of shape \( (C_{out}, H, W, D) \), where \( C_{out} \) denotes the number of target classes.

The encoder consists of \( L \) stages, each containing a \emph{Deconver block} described in Section \ref{sec:deconver_block}. At the initial stage, a \emph{stem} layer increases the input channels to $ C_0 $ (typically 32 or 64, depending on the dataset) using a single convolutional layer with a kernel size of \( (3,3,3) \). Subsequent stages downsample feature maps using strided convolutions (stride=2), halving the spatial dimensions while doubling the channel count until a maximum of 512 channels is reached. Formally, the number of output channels at stage \( \ell \) is set to \( C_\ell = \min(C_0 \times 2^\ell, 512) \). This design balances computational efficiency with the capacity to learn abstract, high-level representations. As the encoder deepens, the growing receptive field enables the capture of global contextual relationships essential for distinguishing semantically similar but spatially distant structures.

The decoder mirrors the encoder’s hierarchical structure but reverses the spatial reduction via transposed convolutions (stride=2) to upsample feature maps. At each stage, the decoder incorporates skip connections that concatenate upsampled features with their encoder counterparts at corresponding resolutions. These skip connections mitigate information loss during downsampling and enhance feature reuse, ensuring more precise localization of fine-grained structures. At the deepest decoder layer, a pointwise convolution head (\( 1 \times 1 \times 1 \) kernel) generates the final \emph{logit} map, which can be activated via sigmoid or softmax to produce class probabilities for segmentation.

\subsection{Deconver Block} \label{sec:deconver_block}

A Deconver block forms the main building unit of the proposed model. In contrast to Vision Transformer (ViT) blocks \cite{dosovitskiy2020image} that rely on attention mechanism, our Deconver block replaces the multi-head self-attention module with a learnable \emph{Deconv Mixer} module (presented in Section \ref{sec:deconv_mixer}), and substitutes layer normalization with instance normalization \cite{ulyanov2016instance} to better accommodate the small batch sizes often used with 3D or high-resolution medical images.

As illustrated in Fig. \ref{fig:deconver_block}, the block consists of two sequential sub-modules: \emph{Deconv Mixer} and Multi-Layer Perceptron (MLP). Each sub-module is preceded by instance normalization and followed by a residual connection. Formally, given an intermediate feature map \( \tns{X} \in \R^{C_\ell \times H \times W \times D} \) at stage \( \ell \), the block’s operations are defined as
\begin{align}
    \label{eq:deconver_block}
     & \tns{Z} = \text{DeconvMixer}(\text{InstanceNorm}(\tns{X})) + \tns{X}, \nonumber \\
     & \tns{Y} = \text{MLP}(\text{InstanceNorm}(\tns{Z})) + \tns{Z},
\end{align}
where the MLP is composed of two pointwise convolution layers separated by a Gaussian Error Linear Unit (GELU) activation:
\begin{equation}
    \label{eq:mlp}
    \text{MLP}(\tns{X}) = \text{PointwiseConv}(\text{GELU}(\text{PointwiseConv}(\tns{X}))).
\end{equation}
The MLP expands the channel dimension by a factor of \( \alpha \) before projecting back to the original dimension, enabling nonlinear interaction across channels while preserving spatial structure.

\subsection{Deconv Mixer} \label{sec:deconv_mixer}

As illustrated in Fig. \ref{fig:deconv_mixer}, the Deconv Mixer module processes input features through three sequential stages: an initial pointwise convolution, an \emph{NDC layer}, and a final pointwise convolution.

First, Deconv Mixer applies a pointwise convolution to linearly project each position. The output is then passed through a ReLU activation function, enforcing nonnegativity to ensure compatibility with the subsequent \emph{NDC layer} (described in Section \ref{sec:ndc_layer}). This nonnegative feature map is then processed by the NDC layer, which captures spatial dependencies and restores high-frequency details that may have been lost in previous layers. Finally, a second pointwise convolution is applied to produce the output. Formally, given an input feature map \( \tns{X} \), the Deconv Mixer can be expressed as:
\begin{align}
    \label{eq:deconv_mixer}
     & \tns{X}^1 = \text{PointwiseConv}(\tns{X}), \nonumber           \\
     & \tns{X}^2 = \text{NDC}(\text{ReLU}(\tns{X}^1)), \nonumber      \\
     & \text{DeconvMixer}(\tns{X}) = \text{PointwiseConv}(\tns{X}^2),
\end{align}
where the intermediate tensors \( \tns{X}^1 \) and \( \tns{X}^2 \), and the final output share the same shape as the input \( \tns{X} \).

\subsection{Nonnegative Deconvolution Layer} \label{sec:ndc_layer}

The nonnegative deconvolution (NDC) layer forms the core innovation of Deconver, incorporating nonnegative deconvolution (presented in Section \ref{sec:deconvolution}) as a learnable layer to enhance feature representations.

The NDC layer operates in a grouped manner, partitioning an input feature map \( \tns{X} \in \R_{\geq 0}^{C \times H \times W \times D} \) into \( G \) groups \( \{\tns{X}_g\}_{g=1}^G \) along the channel dimension. Each group \( \tns{X}_g \in \R_{\geq 0}^{C/G \times H \times W \times D} \) is processed independently, maintaining its own learnable filter \( \tns{V}_g \geq 0 \) and source image \( \tns{S}_g \geq 0 \). 

The layer introduces a source channel ratio \( R \), defined as \( R = E/C \), where \( E \) denotes the number of source channels. This ratio controls the channel expansion of the source image relative to the input.

The initial source image \( \tns{S}_g^{(0)} \in \R_{\geq 0}^{E/G \times H \times W \times D} \) is derived from the input feature map \( \tns{X} \) through a pointwise convolution followed by ReLU activation (See Fig. \ref{fig:ndc}). This ensures nonnegativity while providing an adaptive and learnable initialization of the source. The filter \( \tns{V}_g \in \R_{\geq 0}^{C/G \times R C/G \times (2M+1) \times (2N+1)} \) is initialized using the Kaiming uniform distribution \cite{he2015delving} and clamped to nonnegative values via ReLU before being plugged into the update rule.

The NDC layer applies a single iteration of the multiplicative update rule \eqref{eq:update_source} to refine the source image. For computational efficiency, we empirically found one iteration sufficient to achieve a good trade-off between accuracy and computational cost. The update for group \( g \) is:
\begin{equation}\label{eq:update_source_ndc}
    \tns{S}_g^{(1)} = \tns{S}_g^{(0)} \odot \frac{\tns{X}_g * \tns{V}_g^{-} + \epsilon}{\left(\tns{S}_g^{(0)} * \tns{V}_g\right) * \tns{V}_g^{-} + \epsilon},
\end{equation}
where \( \epsilon =  10^{-8}\) is a small positive constant to avoid division by zero.  The numerator amplifies regions where the source underestimates the input, while the denominator normalizes the update to prevent overshooting. The final output is the channel-wise concatenation of all group outputs \( \{\tns{S}_g^{(1)}\}_{g=1}^G \), preserving spatial resolution and expanding channel dimensions by \(R\). The NDC layer is fully differentiable and backpropagation-friendly, enabling end-to-end training using modern deep learning frameworks. Additionally, the filter \( \tns{V}_g \) and its adjoint \( \tns{V}_g^{-} \) share the same learnable parameters, reducing parameter overhead and potentially improving generalization performance.
\section{Experiments} \label{sec:experiments}

This section details the experimental setup, comparative analyses, and ablation studies conducted to assess Deconver's effectiveness in 2D and 3D, as well as binary and multi-class medical image segmentation tasks. We evaluate its performance across four datasets, benchmark it against state-of-the-art baselines, and analyze the impact of key architectural design choices.

\subsection{Experimental Setup} \label{sec:setup}

\subsubsection{Datasets} \label{sec:datasets}

We evaluated Deconver on four publicly available datasets, covering both 3D (ISLES'22 and BraTS'23) and 2D (GlaS and FIVES) medical imaging modalities:
\begin{itemize}
    \item \textbf{ISLES'22} \cite{ hernandez2022isles}: This dataset includes 250 multi-center MRI scans, targeting ischemic stroke lesions via diffusion-weighted imaging (DWI) and apparent diffusion coefficient (ADC) maps. We excluded FLAIR images to simplify the pipeline and avoid registration challenges.
    \item \textbf{BraTS'23} \cite{menze2015multimodal, baid2021rsna, bakas2017advancing}: This dataset consists of 1,251 multi-parametric MRI (mpMRI) scans, including native T1-weighted, post-Gadolinium T1-weighted, T2-weighted, and FLAIR sequences. Ground truth labels delineate three tumor subregions: enhancing tumor (ET), tumor core (TC), and whole tumor (WT).
    \item \textbf{GlaS} \cite{sirinukunwattana2017gland, sirinukunwattana2015stochastic}: Containing 165 high-resolution H\&E-stained colorectal histopathology images, this dataset features expert-annotated gland segmentations.
    \item \textbf{FIVES} \cite{jin2022fives}: This dataset includes 800 high-resolution fundus photographs with manual segmentation of retinal blood vessels.
\end{itemize}

\subsubsection{Baseline Models} \label{sec:baselines}

We compare Deconver against several state-of-the-art baseline models, including CNNs such as nnU-Net \cite{isensee2021nnu} and SegResNet \cite{myronenko20193d}; hybrid convolution-transformer architectures like UNETR \cite{hatamizadeh2022unetr} and Swin UNETR \cite{hatamizadeh2021swin}; and Factorizer \cite{ashtari2023factorizer}, which leverages non-negative matrix factorization (NMF). These baselines represent diverse approaches to medical image segmentation.

\subsubsection{Implementation Details} \label{sec:impelementation_details}

All models were implemented using PyTorch and the MONAI framework, trained on a single NVIDIA H100 GPU. We used the AdamW optimizer with an initial learning rate of 0.0001 and weight decay of 0.00001. A cosine annealing learning rate scheduler was used, incorporating a 1\% warm-up phase during which the learning rate was scaled by 10. Training was conducted for 500 epochs for ISLES'22, GlaS, and FIVES, and 300 epochs for BraTS'23. The batch size was set to 2 for BraTS'23 and 8 for all other datasets. Random patches were extracted during training, with sizes of \( (64, 64, 64) \) for ISLES'22, \( (128, 128, 128) \) for BraTS'23, \( (256, 256, 256) \) for GlaS, and \( (512, 512, 512) \) for FIVES. Data augmentation included random affine transformations, flipping, Gaussian noise, Gaussian smoothing, and intensity scaling/shifting. We use the sum of soft Dice \cite{milletari2016v} and cross-entropy losses as the training objective. For inference, a patch-based sliding window approach with a 50\% overlap and the same patch size as training was adopted. The final binary segmentation maps were obtained by thresholding predicted probabilities (sigmoid outputs).

\subsubsection{Model Configurations} \label{sec:model_configs}

Deconver was configured with dataset-specific hyperparameters to balance model capacity and computational efficiency. The encoder depth (\( L \)) was set to 4 for ISLES'22, 5 for BraTS'23, and 6 for GlaS and FIVES. The base number of channels (\( C_0 \)) set to 64 for ISLES'22, and 32 for BraTS'23, GlaS, and FIVES. At each encoder stage (\( \ell \)), the channel dimension (\( C_\ell \)) was determined by the formula \( C_\ell = \min(C_0 \times 2^\ell, 512) \), doubling the channels after each downsampling step until reaching a maximum of 512 channels.

In the NDC layers, the number of groups (\( G \)) was set equal to the number of input channels by default. Inspired by depthwise separable convolutions and validated through ablation studies (Section \ref{sec:ablation_studies}), this design choice reduces computational complexity while preserving the ability to model diverse spatial patterns. Additionally, the source channel ratio (\( R \)) was fixed at 4, as our experiments found this to optimally balance accuracy and efficiency. The MLP expansion factor (\( \alpha \)) was fixed at 4 across all experiments.

\subsubsection{Evaluation Metrics} \label{sec:metrics}

We performed stratified 5-fold cross-validation to assess generalization to unseen data. Segmentation performance was quantified using two metrics: Dice Similarity Coefficient (DSC) and Hausdorff Distance 95th percentile (HD95). The DSC is defined as:
\begin{equation}
    \label{eq:dice}
    \text{DSC}(\vct{g}, \vct{y}) = \frac{2 \sum_{n=1}^{N} \vct{g}[n]\,\vct{y}[n]}{\sum_{n=1}^{N} \vct{g}[n] + \sum_{n=1}^{N} \vct{y}[n]},
\end{equation}
where \( \vct{g}[n], \vct{y}[n] \in \{0,1\} \) denote the ground truth and predicted labels for voxel \( n \), respectively, and \( N \) represents the total number of voxels. The DSC is defined as 1 when both the ground truth and the prediction contain only zeros.
. Hausdorff Distance is computed as:
\begin{equation}
    \label{eq:hd95}
    \text{HD}(\mathbb{G}, \mathbb{Y}) = \max \left\{ \max_{\vct{g} \in \mathbb{G}} \min_{\vct{y} \in Y} d(\vct{g},\vct{y}), \max_{\vct{y} \in \mathbb{Y}} \min_{\vct{g} \in \mathbb{G}} d(\vct{g},\vct{y}) \right\}.
\end{equation}
where \( d(\vct{g},\vct{y}) \) represents the Euclidean distance between points \(\vct{g}\) and \(\vct{y}\); and \( \mathbb{G} \) and \( \mathbb{Y} \) are sets of all pixel (or voxel) positions on the surface of the ground truth and prediction, respectively. The HD95 metric computes the 95th percentile of distances rather than the maximum, providing a more robust measure against outliers. All the results are reported as the average over the 5-fold cross-validation.

\subsection{Results: 3D Segmentation} \label{sec:results_3d}

\subsubsection{Ischemic Stroke Lesions (ISLES'22)} \label{sec:results_isles}

Table \ref{tab:quantitative_results_isles} presents the quantitative results for ISLES'22. Both variants of Deconver outperform all the baselines in terms of DSC, with the version using a kernel size of \(3 \times 3 \times 3 \) achieving the highest DSC (78.16\%) followed closely by the variant using a larger kernel size of \(5 \times 5 \times 5 \) (77.37\%). In terms of boundary delineation accuracy, measured by the HD95 metric, Deconver demonstrated superior results, with HD95 values of 4.99 and 4.89 for the \( 3 \times 3 \times 3 \) and \( 5 \times 5 \times 5 \) kernels, respectively.

\begin{table}[!t]
    \caption{Segmentation performance comparison on ISLES'22. Best results are \textbf{bold}, second-best are \underline{underlined}.}
    \label{tab:quantitative_results_isles}
    \centering
    \renewcommand{\arraystretch}{1.2}
    \resizebox{\linewidth}{!}{
        \begin{tabular}{l | c | c | c | c}
            \toprule
            \textbf{Model}                     & \textbf{Params} & \textbf{FLOPs / voxel} & \textbf{DSC (\%)} & \textbf{HD95}    \\
            \midrule
            nnU-Net                            & 22.4M           & 3423.5K                & 76.76             & 5.54             \\
            SegResNet                          & 75.9M           & 2228.6K                & 76.85             & 5.18             \\
            UNETR                              & 133.2M          & 525.9K                 & 73.74             & 6.54             \\
            Swin UNETR                         & 62.2M           & 4356.8K                & 76.58             & 5.55             \\
            Factorizer                         & 7.5M            & 2266.7K                & 76.73             & 5.93             \\
            \midrule
            Deconver (3\(\times\)3\(\times\)3) & 10.5M           & 607.0K                 & \textbf{78.16}    & \underline{4.99} \\
            Deconver (5\(\times\)5\(\times\)5) & 11.0M           & 607.0K                 & \underline{77.37} & \textbf{4.89}    \\
            \bottomrule
        \end{tabular}
    }
\end{table}

Notably, both variants of Deconver significantly reduce computational complexity compared to the best-performing baselines. Deconver requires over 70\% fewer FLOPs per voxel compared to the SegResNet, the best performing baseline.  Additionally, Deconver uses around 85\% fewer parameters, resulting in a highly compact architecture without sacrificing segmentation performance. Refer to Fig. \ref{fig:dice_comparison} for a comparison of model performance versus computational efficiency on ISLES'22.

\begin{figure*}[!t]
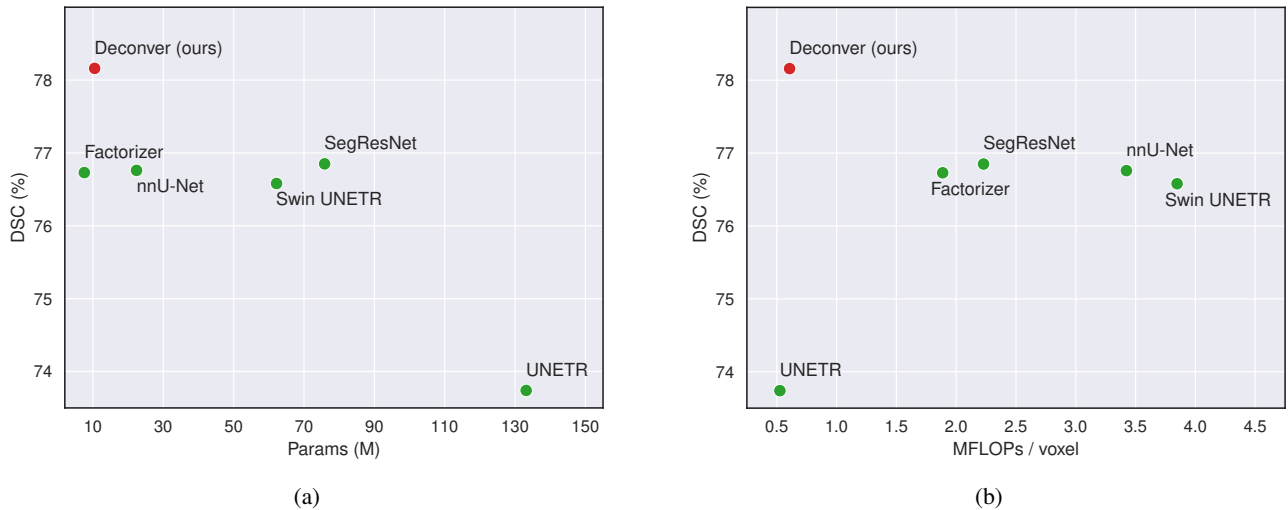

    \centering
    \begin{subfigure}{0.5\textwidth}
        \centering
        \resizebox{.9\textwidth}{!}{\input{figures/dice_params.pgf}}
        \caption{}
        \label{fig:dice_params}
    \end{subfigure}%
    \begin{subfigure}{0.5\textwidth}
        \centering
        \resizebox{.9\textwidth}{!}{\input{figures/dice_flops.pgf}}
        \caption{}
        \label{fig:dice_flops}
    \end{subfigure}
    \caption{Comparison of Dice against the number of parameters (left) and FLOPs/voxel (right) for different models on ISLES'22. Deconver (the variant using \(3 \times 3 \times 3\) kernel) manages to maintain the highest DSC with fewer parameters and FLOPs/voxel.}
    \label{fig:dice_comparison}
\end{figure*}

We also qualitatively examined the segmentation results, as shown in Fig. \ref{fig:qualitative_isles}. The figure presents a visual comparison of a representative slice from the ISLES'22 dataset, showcasing the predictions of different models. For Deconver, we illustrate the results form the \( 3 \times 3 \times 3 \) kernel variant. As evident from the figure, Deconver provides superior segmentation quality compared to other baselines, achieving a more accurate delineation of the lesion while maintaining minimal false positives and false negatives.

The qualitative results show that nnU-Net, Swin UNETR, and UNETR undersegment the lesion in both examples, leading to clinically relevant false negatives. SegResNet captures the lesion more completely but has introduced a false-positive area in the first example marked by the orange circle.

\begin{figure*}[!t]
    \captionsetup[subfigure]{aboveskip=0.25em, belowskip=0.5em, font={bf,small}}
    \centering

    \begin{tikzpicture}
        \fill[blue] (0,0) rectangle (0.4,0.4);
        \node[right] at (0.4,0.2) {True Positives};

        \fill[green] (4,0) rectangle (4.4,0.4);
        \node[right] at (4.4,0.2) {False Positives};

        \fill[red] (8,0) rectangle (8.4,0.4);
        \node[right] at (8.4,0.2) {False Negatives};
    \end{tikzpicture}

    \vspace{5pt}

    \begin{subfigure}{0.2\textwidth}
        \includegraphix[.98\textwidth]{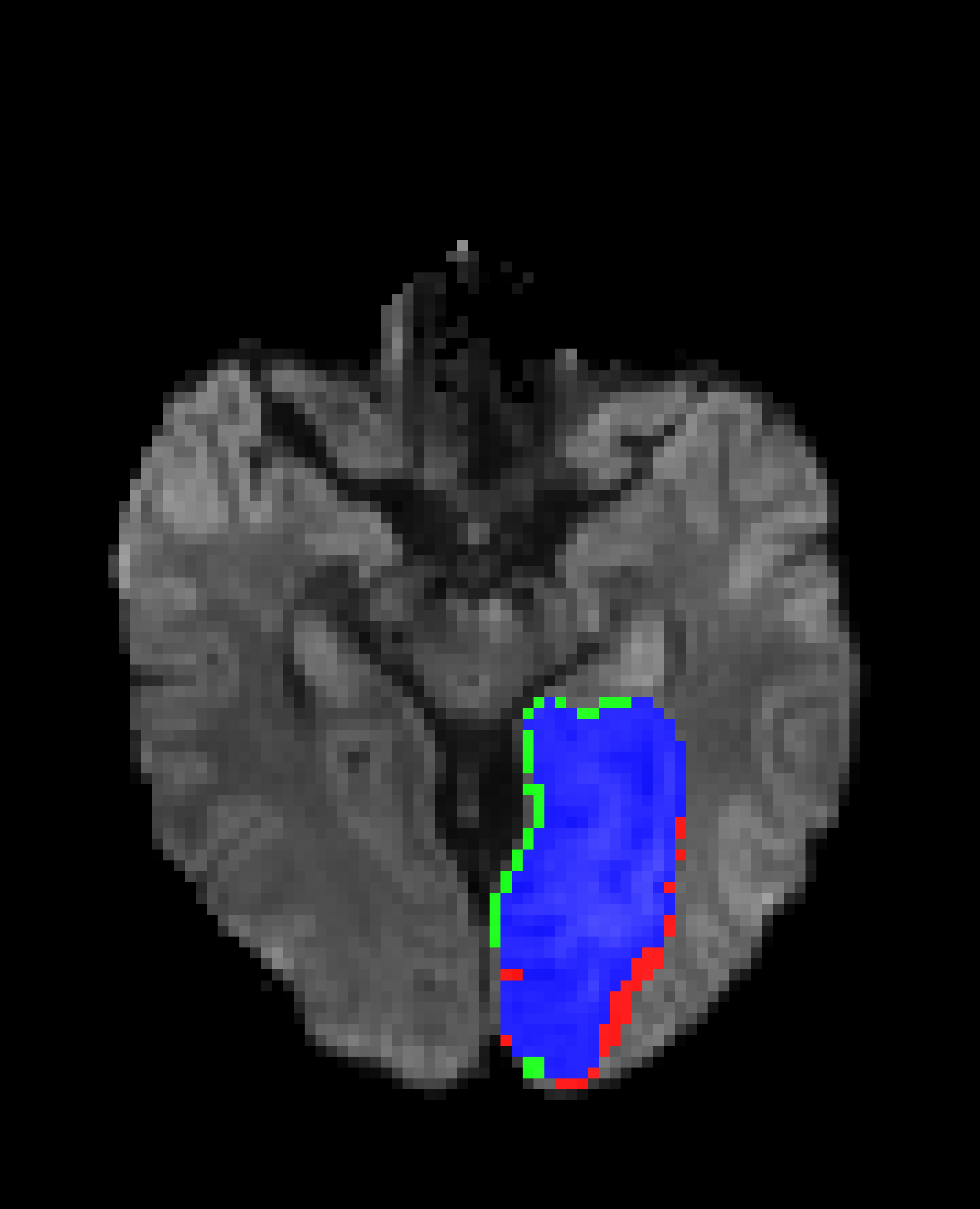}{78.3}
    \end{subfigure}%
    \begin{subfigure}{0.2\textwidth}
        \includegraphix[.98\textwidth]{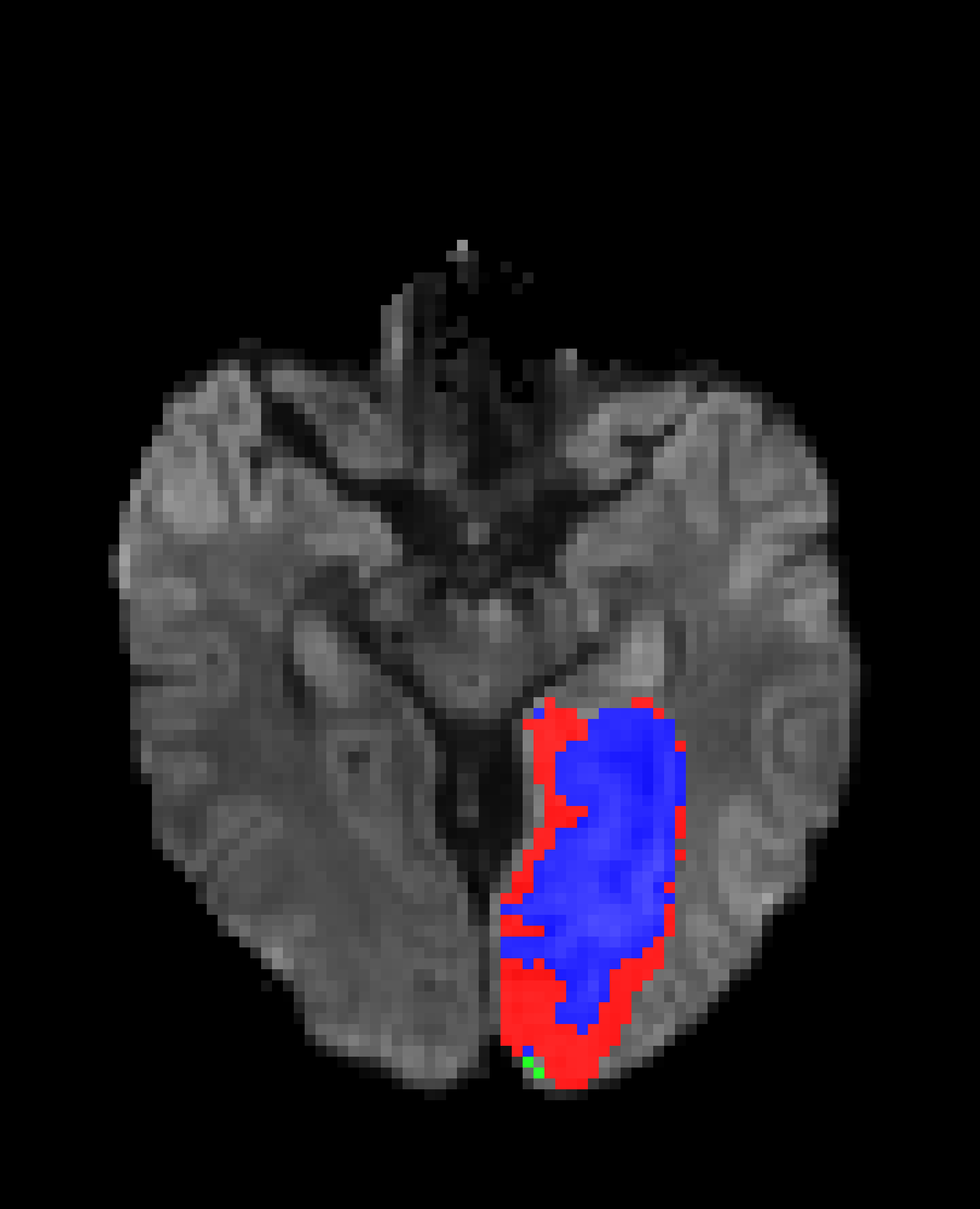}{57.4}
    \end{subfigure}%
    \begin{subfigure}{0.2\textwidth}
        \includegraphix[.98\textwidth]{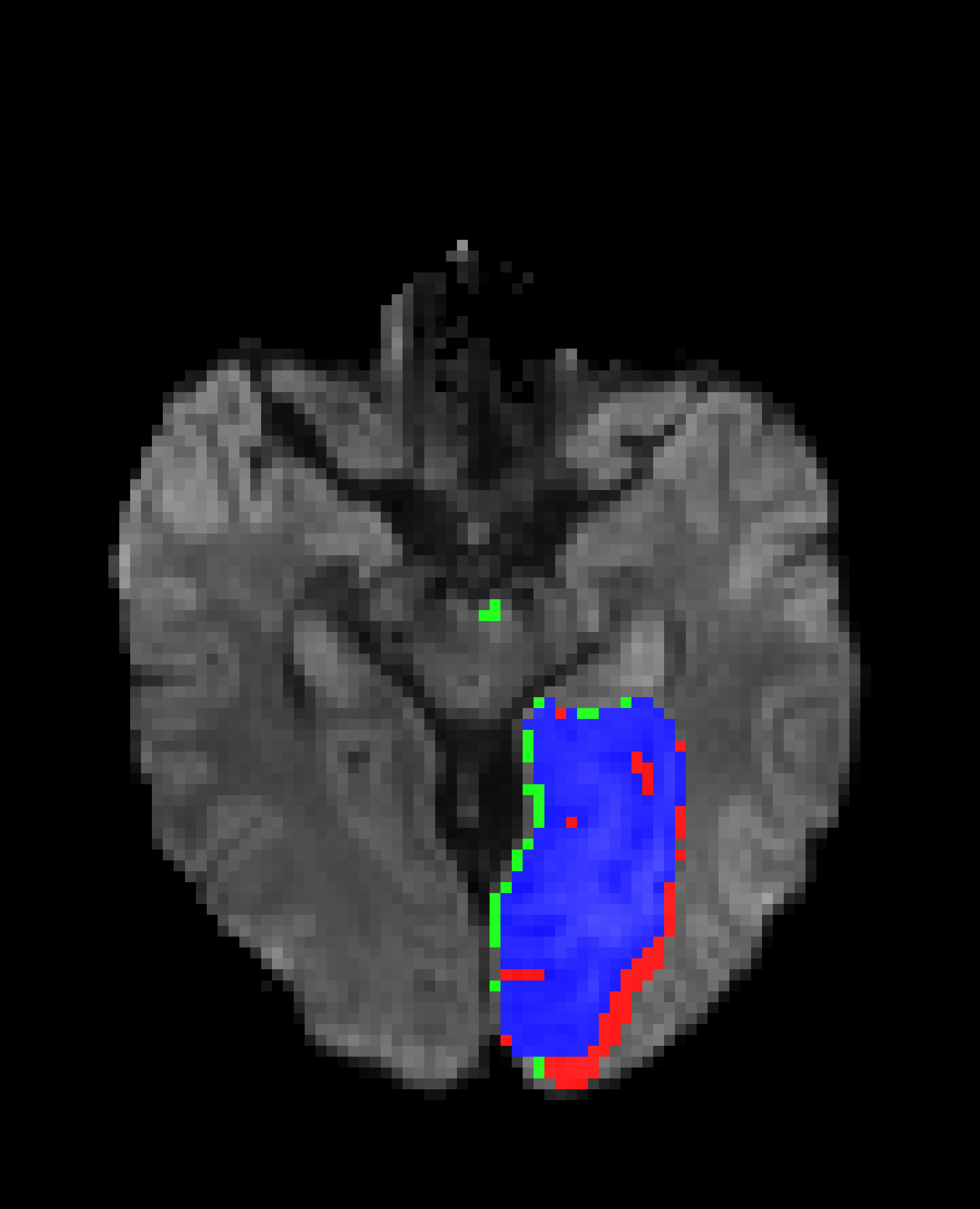}{72.5}{(1.91,2.3)}{.25}
    \end{subfigure}%
    \begin{subfigure}{0.2\textwidth}
        \includegraphix[.98\textwidth]{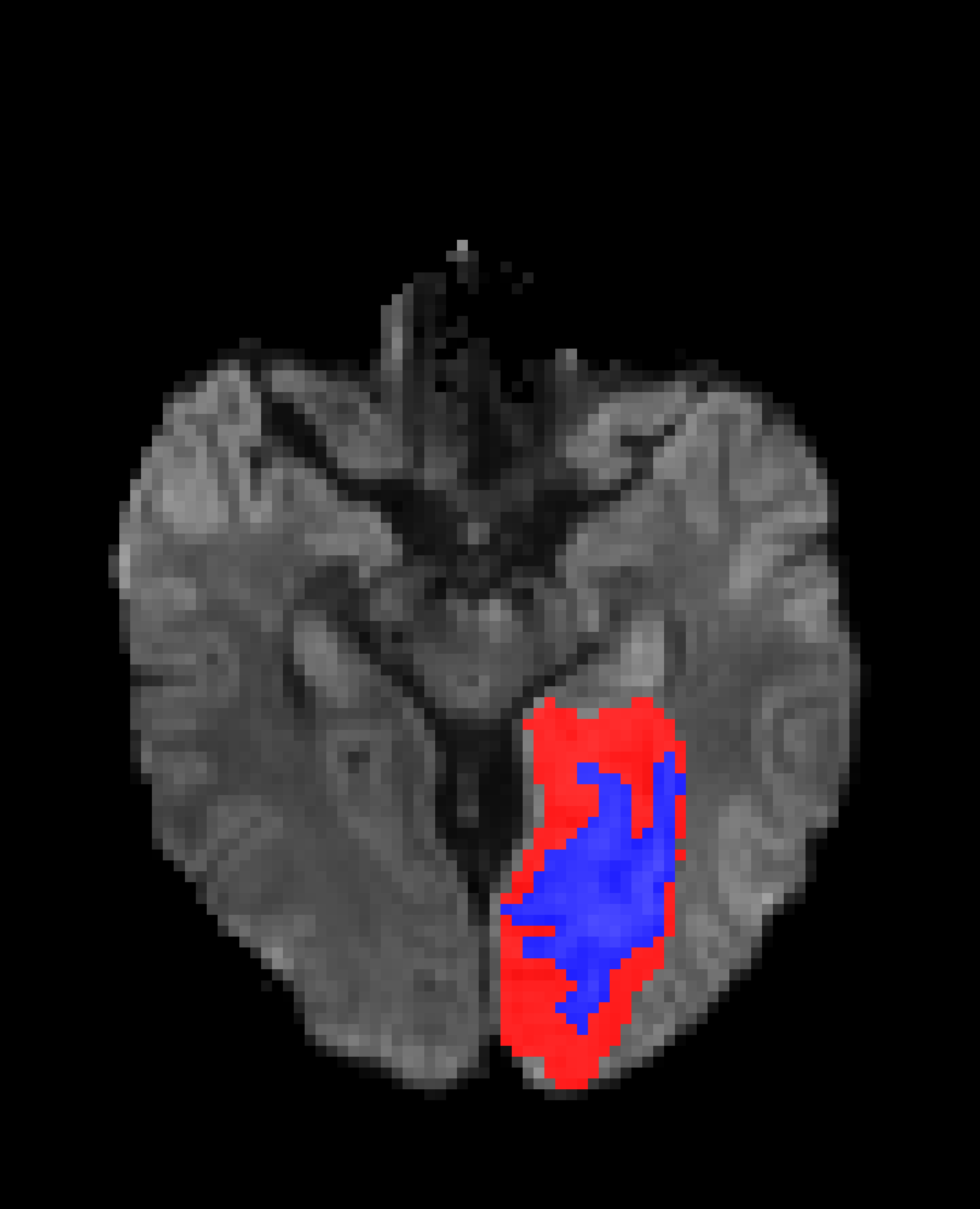}{43.7}
    \end{subfigure}%
    \begin{subfigure}{0.2\textwidth}
        \includegraphix[.98\textwidth]{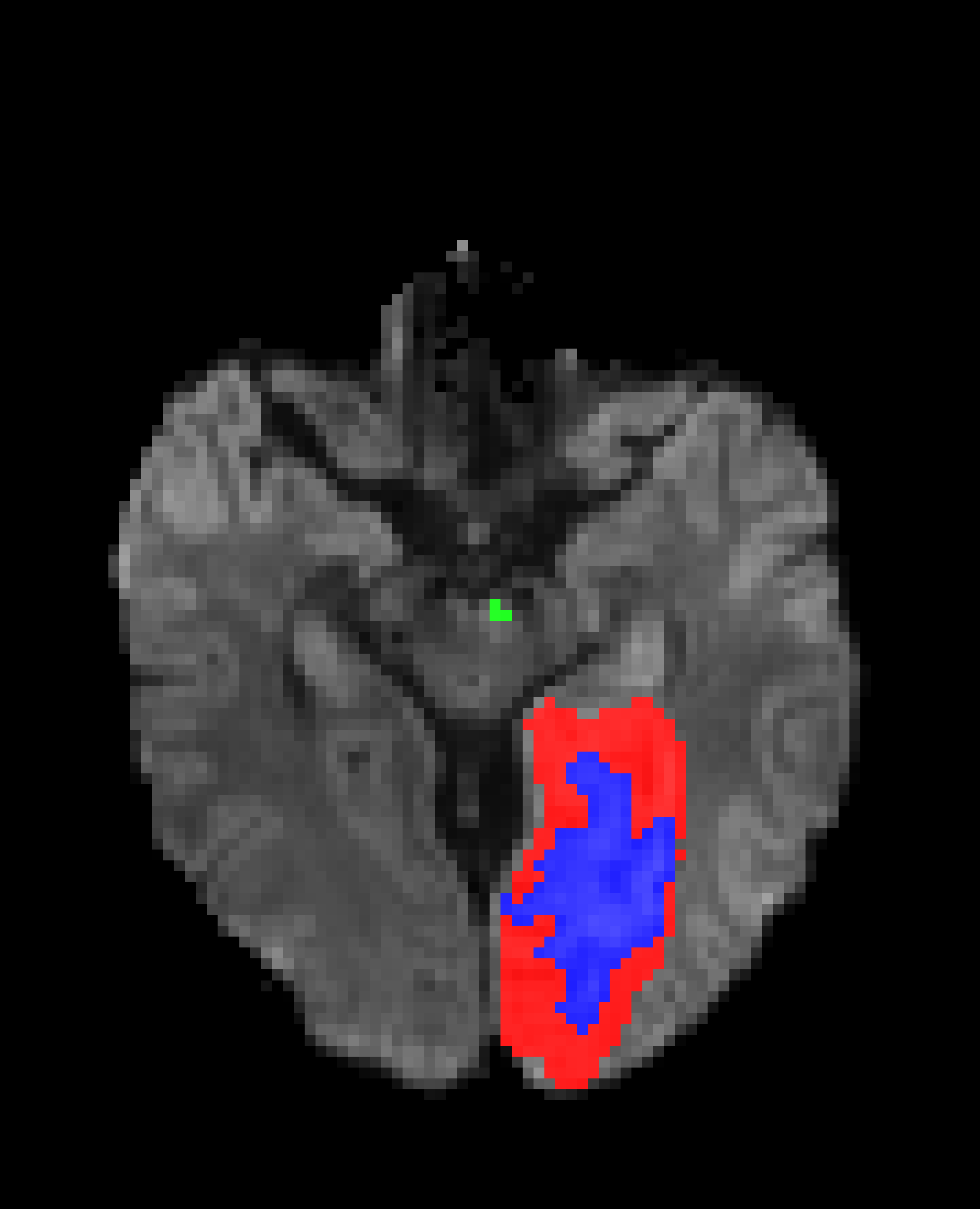}{40.6}{(1.93,2.3)}{.25}
    \end{subfigure}

    \vspace{-5pt}

    \begin{subfigure}{0.2\textwidth}
        \includegraphix[.98\textwidth]{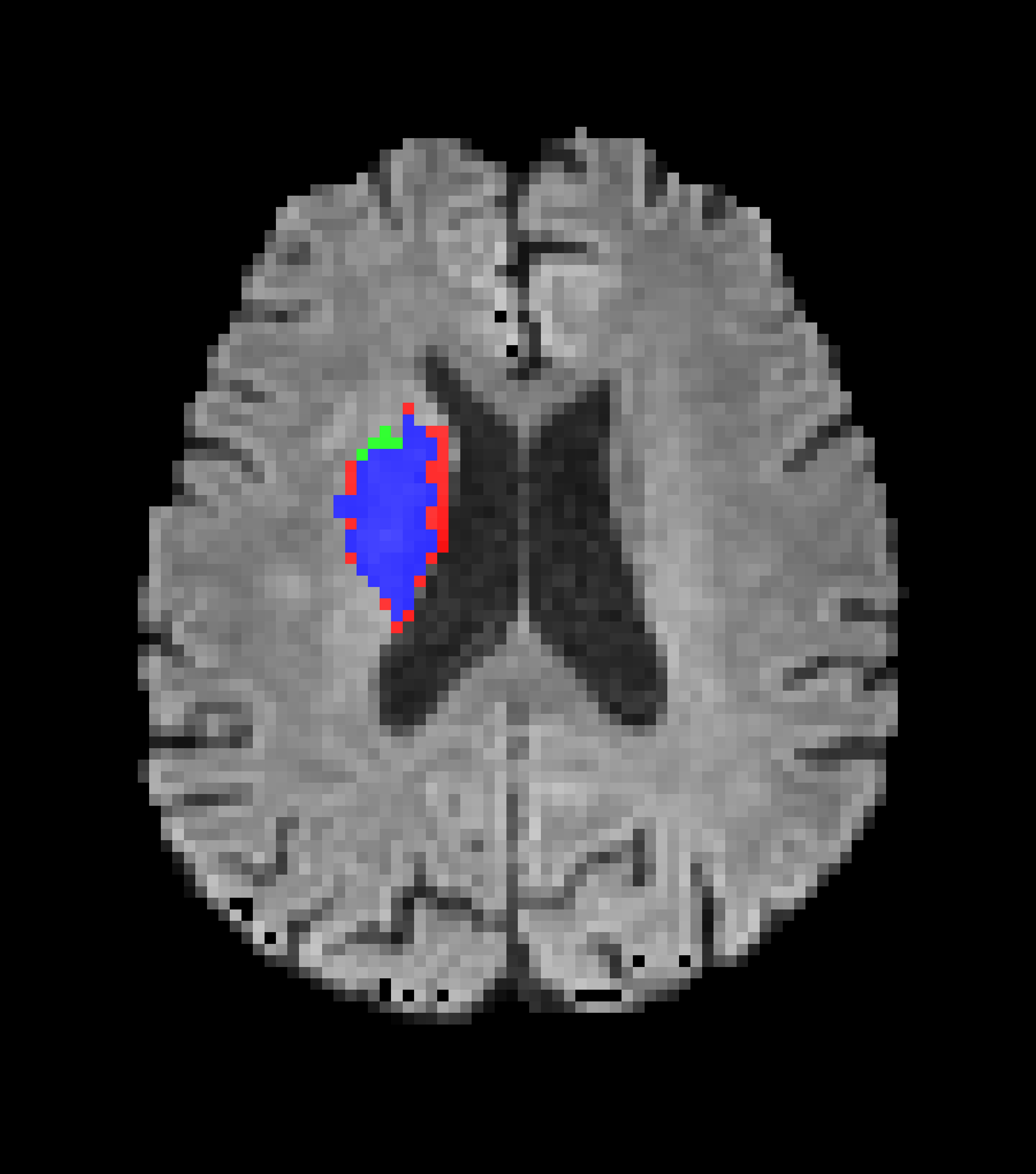}{82.6}
        \caption*{Deconver}
    \end{subfigure}%
    \begin{subfigure}{0.2\textwidth}
        \includegraphix[.98\textwidth]{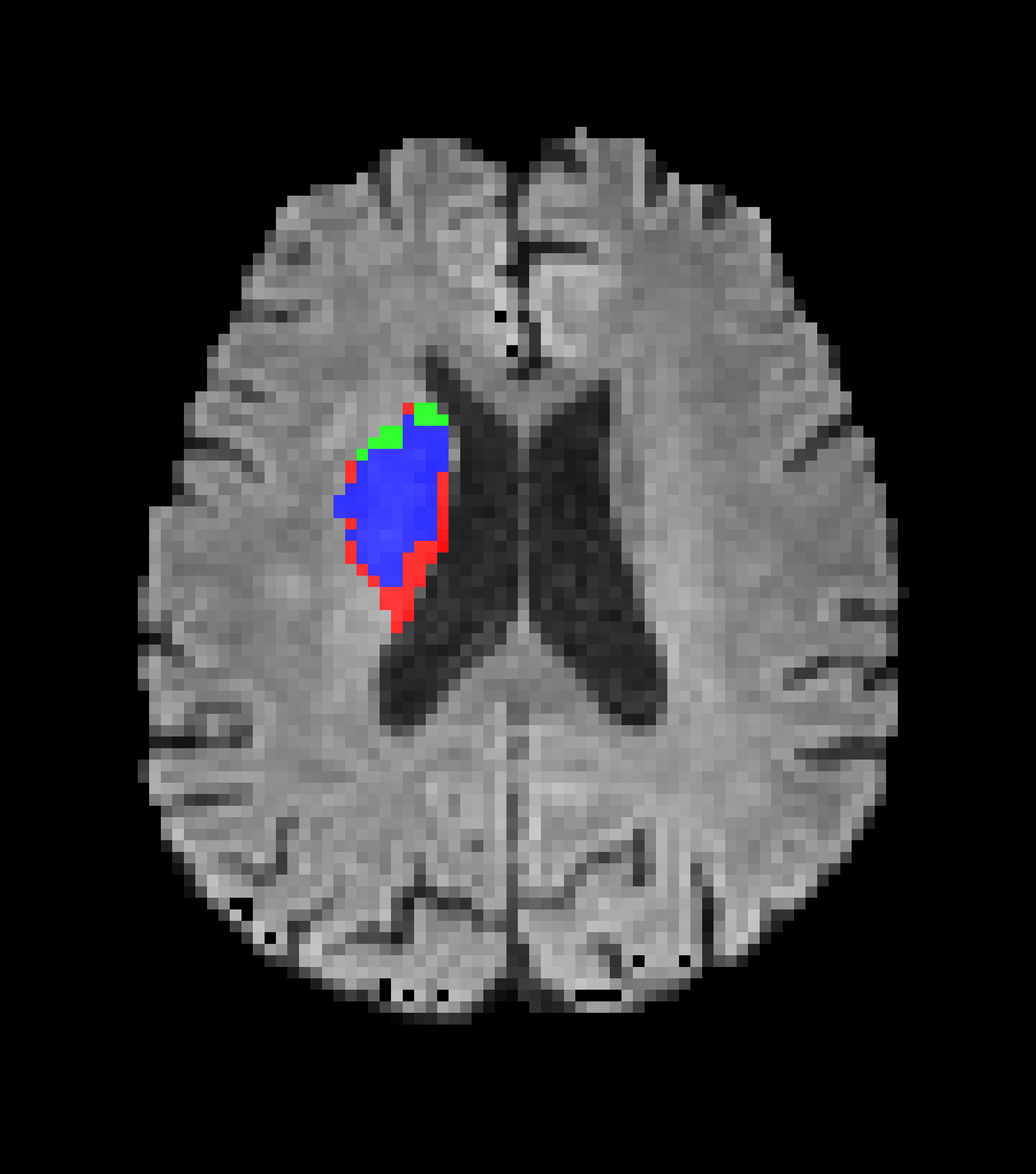}{71.8}
        \caption*{nnU-Net}
    \end{subfigure}%
    \begin{subfigure}{0.2\textwidth}
        \includegraphix[.98\textwidth]{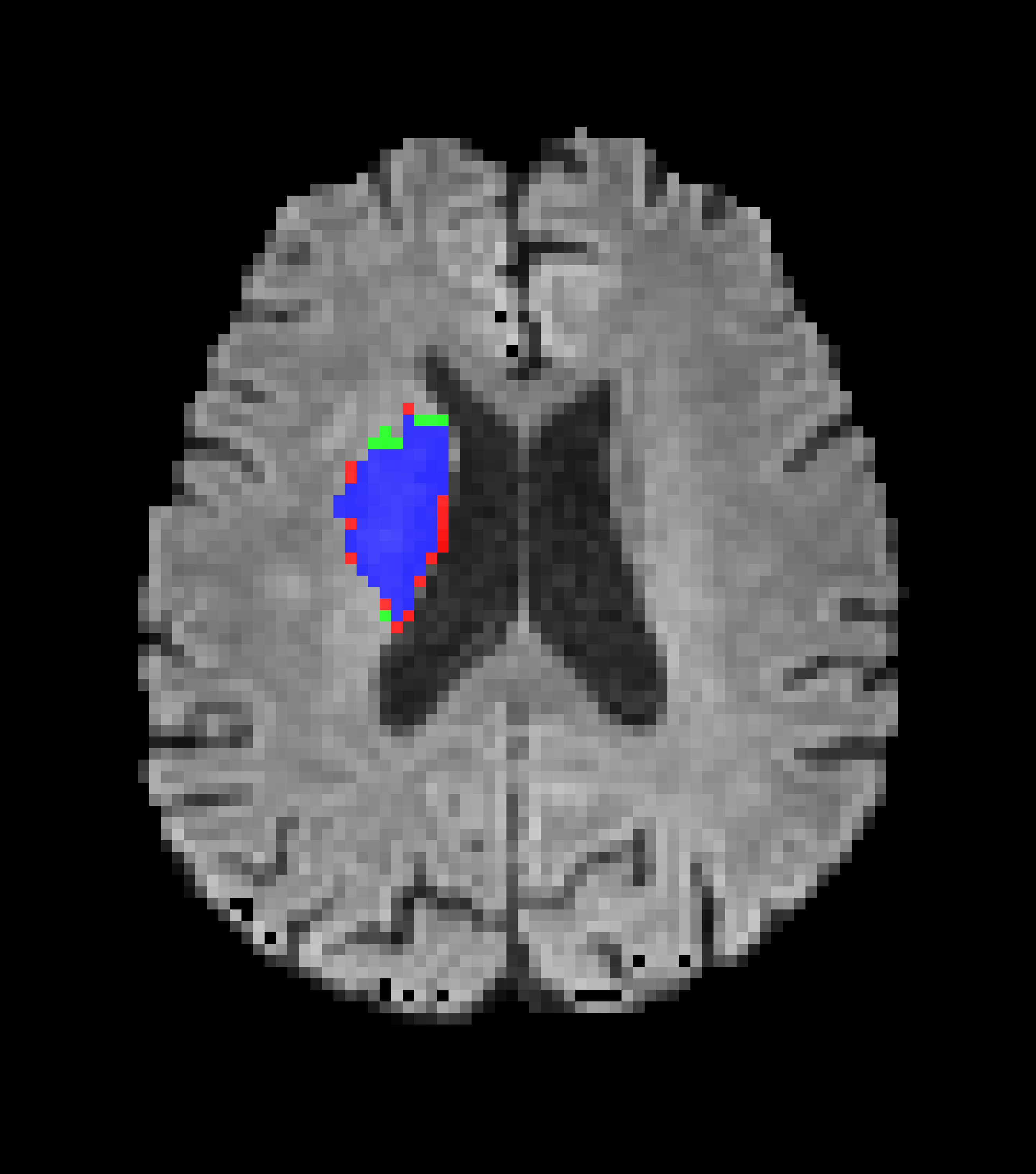}{82.4}
        \caption*{SegResNet}
    \end{subfigure}%
    \begin{subfigure}{0.2\textwidth}
        \includegraphix[.98\textwidth]{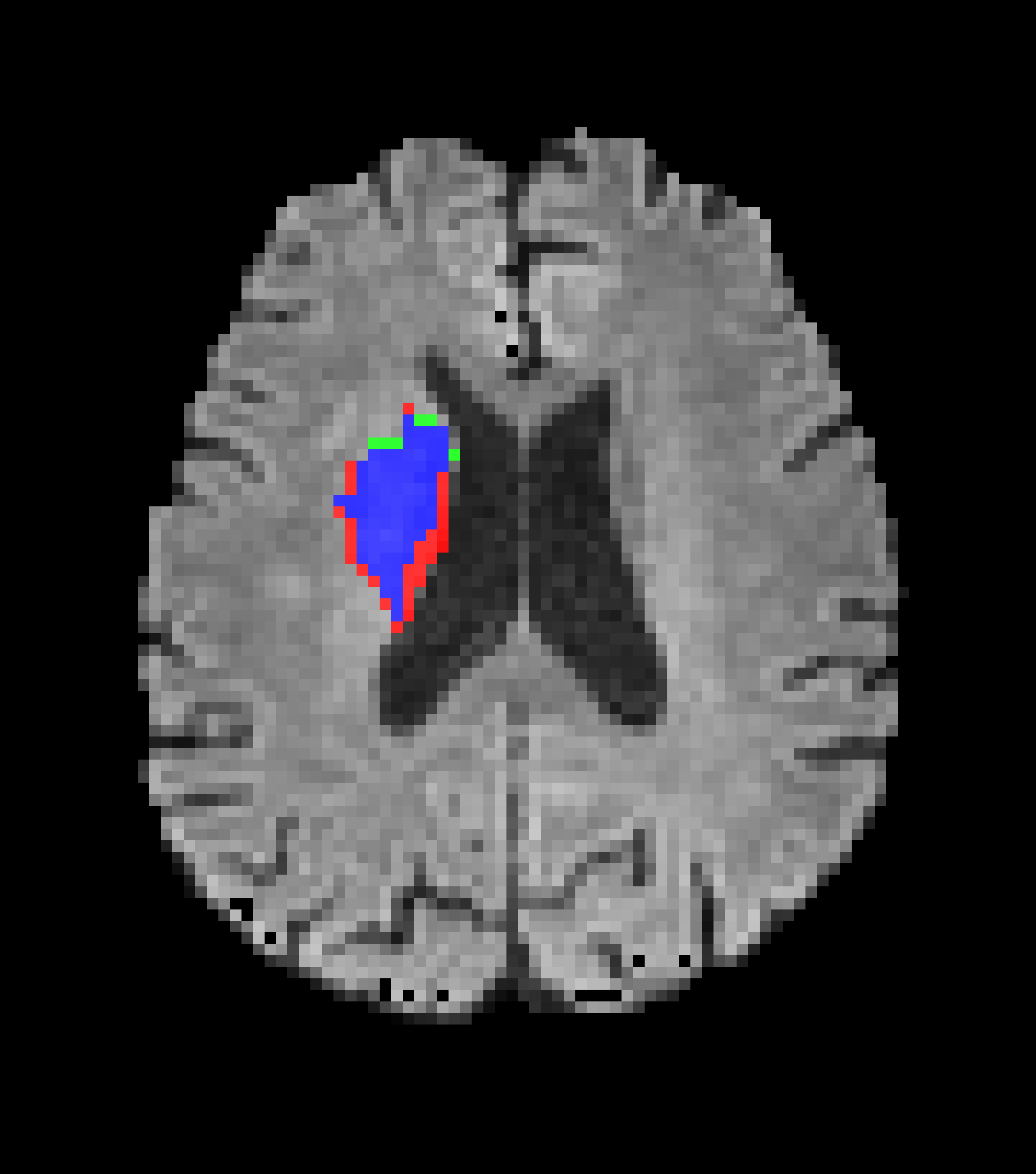}{81.5}
        \caption*{Swin UNETR}
    \end{subfigure}%
    \begin{subfigure}{0.2\textwidth}
        \includegraphix[.98\textwidth]{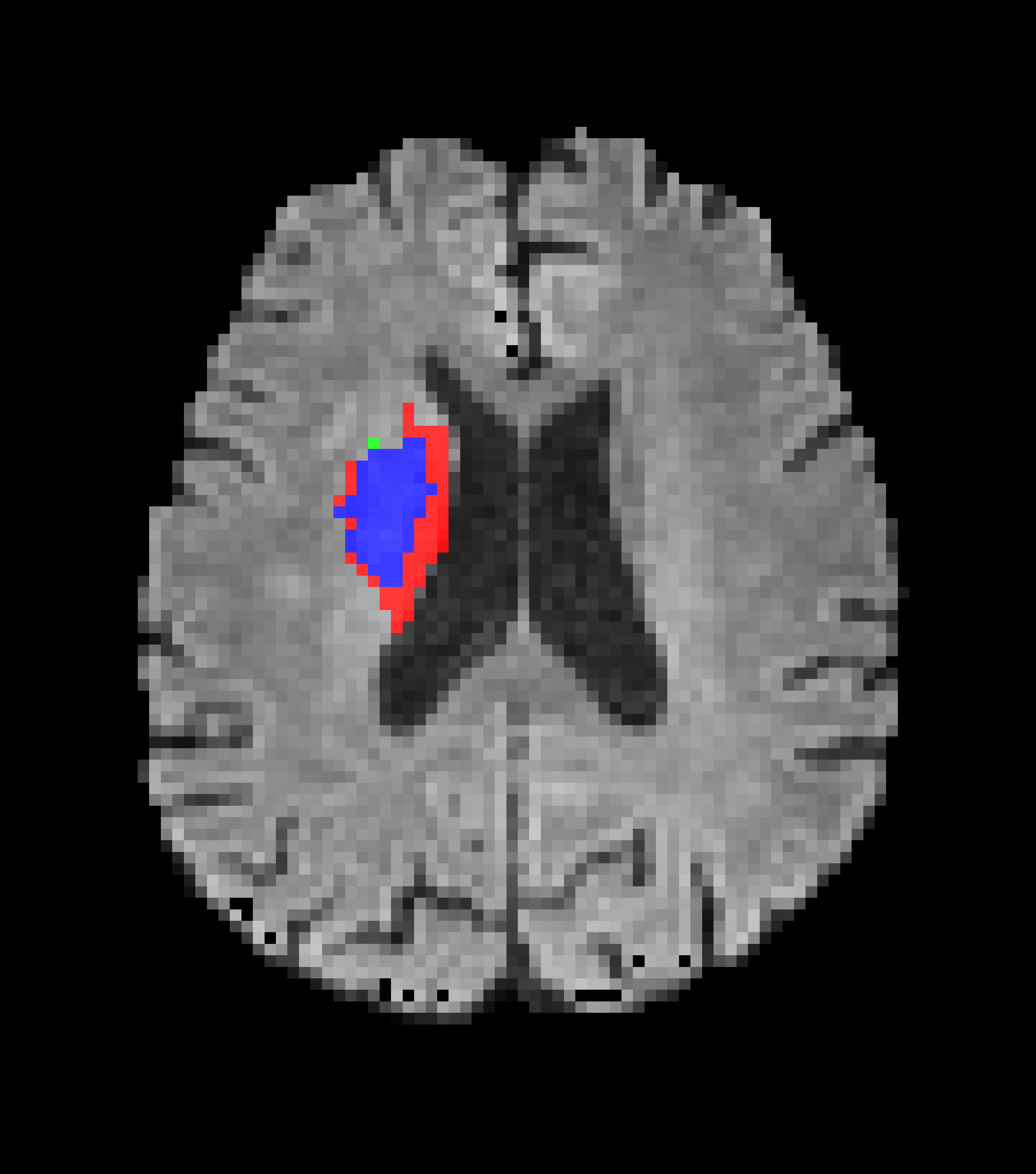}{74.9}
        \caption*{UNETR}
    \end{subfigure}

    \caption{Qualitative results of stroke lesion segmentation on ISLES'22. The regions of true positives are marked in blue, false positives in green, and false negatives in red. The DSC score is presented for each case. SegResNet introduces false positives in the center of the image in the first row (marked with the orange circle), while Swin UNETR, UNETR, and nnU-Net consistently under-segment the lesion in both examples.}
    \label{fig:qualitative_isles}
\end{figure*}

\subsubsection{Brain Tumor Segmentation (BraTS'23)} \label{sec:results_brats}

Table \ref{tab:quantitative_results_brats} presents segmentation performance on BraTS'23, comparing different models across three tumor subregions: enhancing tumor (ET), tumor core (TC), and whole tumor (WT). Both Deconver variants outperform all baselines in average DSC. Notably, Deconver with a kernel size of 3 achieves the highest DSC scores for ET and TC, while Deconver with a kernel size of 5 achieves the top DSC for WT, maintaining strong overall performance.

\begin{table*}[!t]
    \caption{Segmentation performance comparison on BraTS'23. Best results are \textbf{bold}, second-best are \underline{underlined}.}
    \label{tab:quantitative_results_brats}
    \centering
    \renewcommand{\arraystretch}{1.2}
    \resizebox{\textwidth}{!}{
        \begin{tabular}{l | c | c | c c c c | c c c c}
            \toprule
            \multirow{2}{*}{\textbf{Model}} & \multirow{2}{*}{\textbf{Params}} & \multirow{2}{*}{\textbf{FLOPs / voxel}} & \multicolumn{4}{c|}{\textbf{DSC (\%)}} & \multicolumn{4}{c}{\textbf{HD95}}                                                                                                                     \\
            \cline{4-11}
                                            &                                  &                                         & \textbf{ET}                            & \textbf{TC}                       & \textbf{WT}       & \textbf{Avg.}     & \textbf{ET}      & \textbf{TC}      & \textbf{WT}      & \textbf{Avg.}    \\
            \midrule
            nnU-Net                         & 22.6M                            & 921.0K                                  & 86.73                                  & 91.40                             & 93.25             & 90.46             & 3.33             & \underline{3.88} & 5.54             & \underline{4.34} \\
            SegResNet                       & 75.9M                            & 2235.7K                                 & 86.68                                  & 91.49                             & \underline{93.44} & 90.53             & \textbf{3.26}    & 3.92             & \textbf{5.21}    & \textbf{4.17}    \\
            UNETR                           & 139.8M                           & 536.8K                                  & 85.47                                  & 89.92                             & 92.64             & 89.34             & 4.22             & 4.98             & 6.93             & 5.47             \\
            Swin UNETR                      & 62.2M                            & 4098.6K                                 & 86.71                                  & 91.27                             & 93.41             & 90.46             & 3.42             & 3.94             & \underline{5.42} & 4.36             \\
            Factorizer                      & 7.6M                             & 1087.5K                                 & 85.99                                  & \underline{90.51}                 & 93.13             & 89.88             & 3.69             & 4.36             & 5.68             & 4.67             \\
            \midrule
            Deconver (3$\times$3$\times$3)  & 10.6M                            & 167.5K                                  & \textbf{87.01}                         & \textbf{91.56}                    & 93.42             & \textbf{90.66}    & \underline{3.30} & \textbf{3.80}    & 5.63             & 4.45             \\
            Deconver (5$\times$5$\times$5)  & 11.0M                            & 167.5K                                  & \underline{86.97}                      & 91.41                             & \textbf{93.47}    & \underline{90.62} & 3.50             & 3.99             & 5.59             & 4.49             \\
            \bottomrule
        \end{tabular}
    }
\end{table*}

In terms of HD95, Deconver variants demonstrate comparable or superior performance relative to the leading, particularly excelling in TC segmentation. Even when not ranked first, Deconver consistently produces results on par with the best-performing methods.

Similar to the ISLES'22, these competitive segmentation results are achieved with significantly reduced computational complexity. Compared to the second-best performing baseline (SegResNet), Deconver uses approximately 85\% fewer parameters and over 90\% fewer FLOPs.

\begin{figure*}[!t]
    \captionsetup[subfigure]{aboveskip=0.25em, belowskip=0.5em, font={bf,small}}
    \centering

    \definecolor{customgreen}{RGB}{0, 151, 57}
    \begin{tikzpicture}
        \fill[yellow] (0,0) rectangle (0.4,0.4);
        \node[right] at (0.4,0.2) {Enhancing Tumor};

        \fill[red] (4,0) rectangle (4.4,0.4);
        \node[right] at (4.4,0.2) {necrotic/non-enhancing Tumor};

        \fill[customgreen] (9.6,0) rectangle (10,0.4);
        \node[right] at (10,0.2) {Edema};
    \end{tikzpicture}

    \vspace{5pt}

    \begin{subfigure}{0.16\textwidth}
        \includegraphix[.98\textwidth]{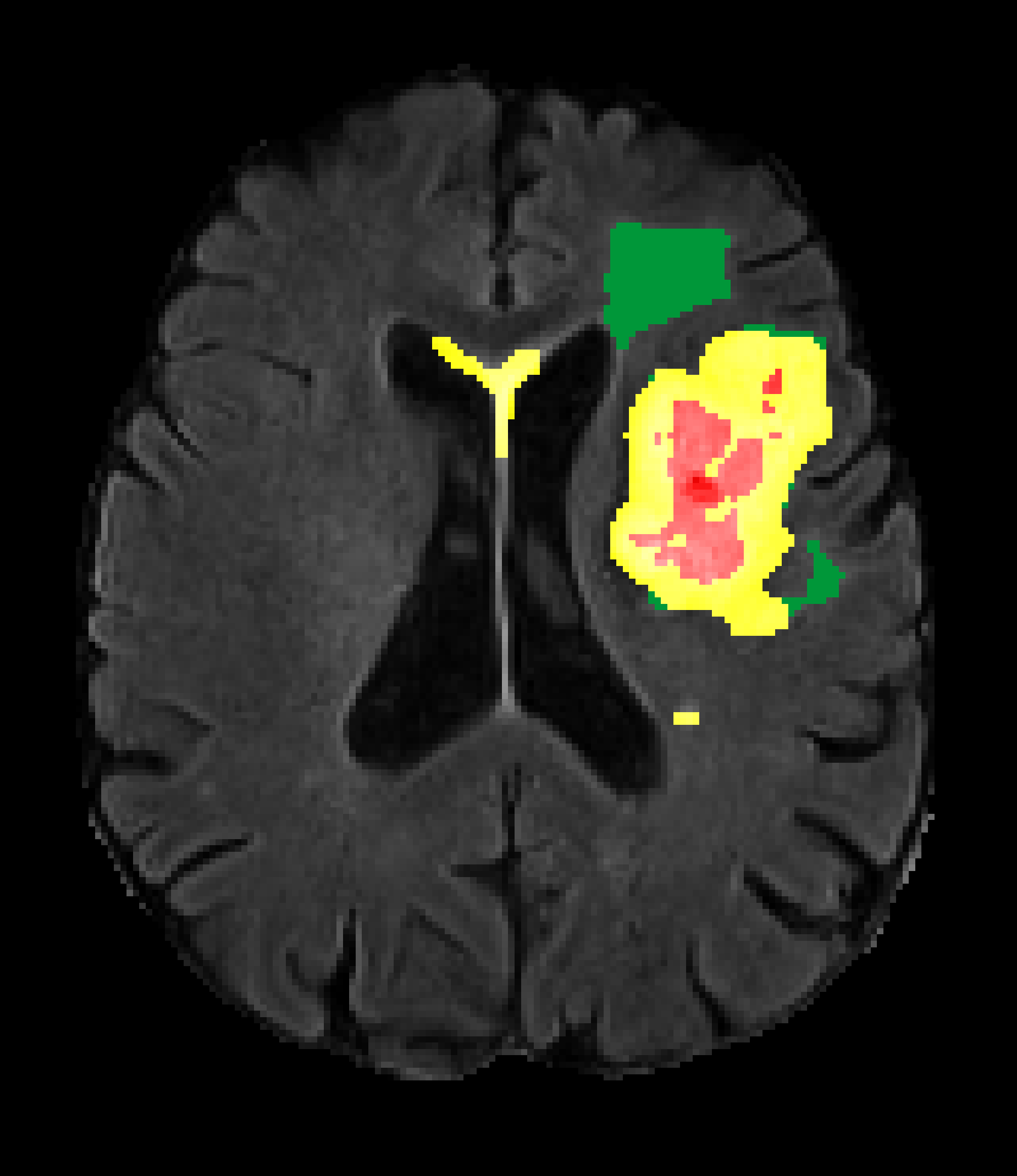}
    \end{subfigure}%
    \begin{subfigure}{0.16\textwidth}
        \includegraphix[.98\textwidth]{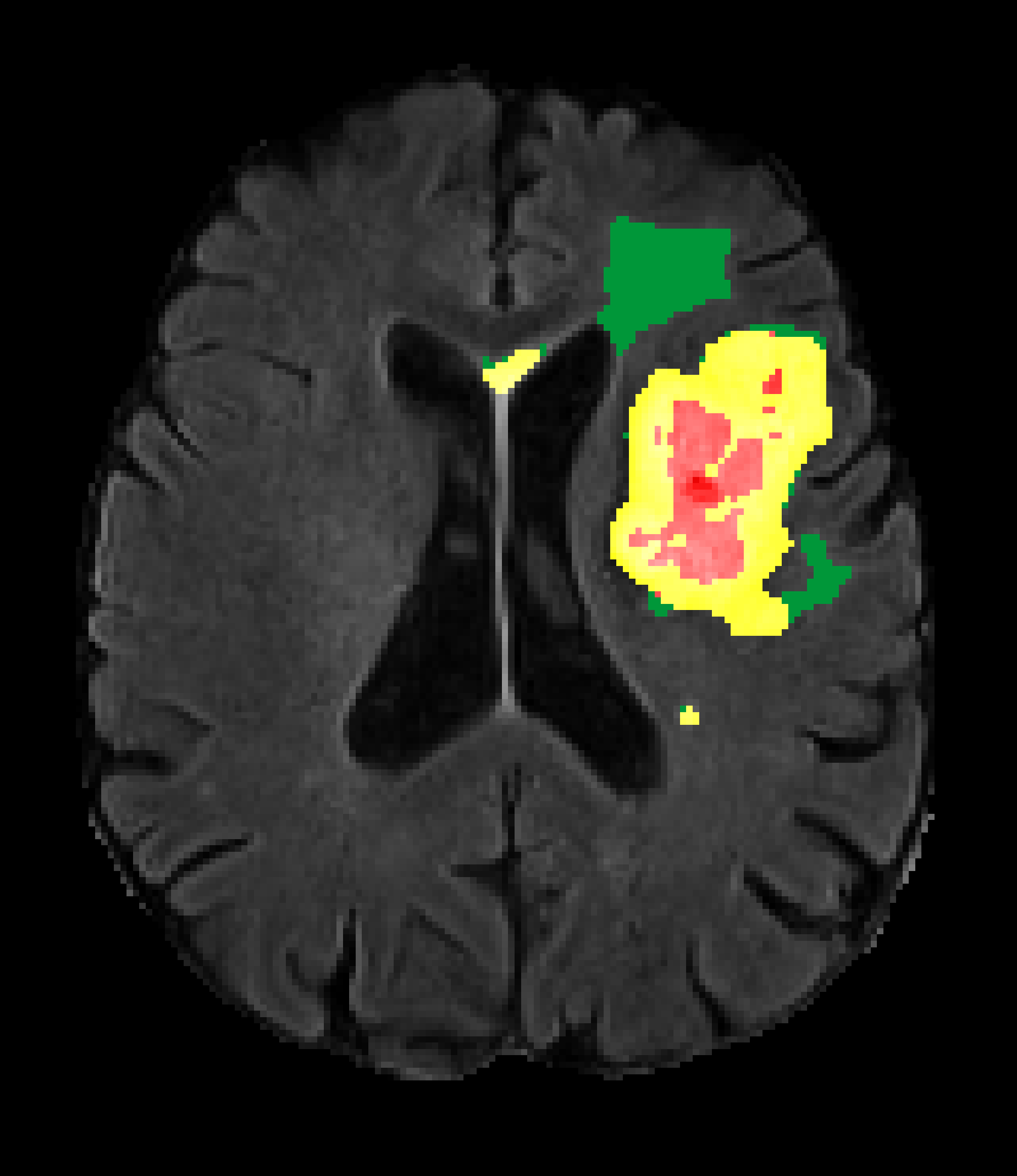}{95.9}
    \end{subfigure}%
    \begin{subfigure}{0.16\textwidth}
        \includegraphix[.98\textwidth]{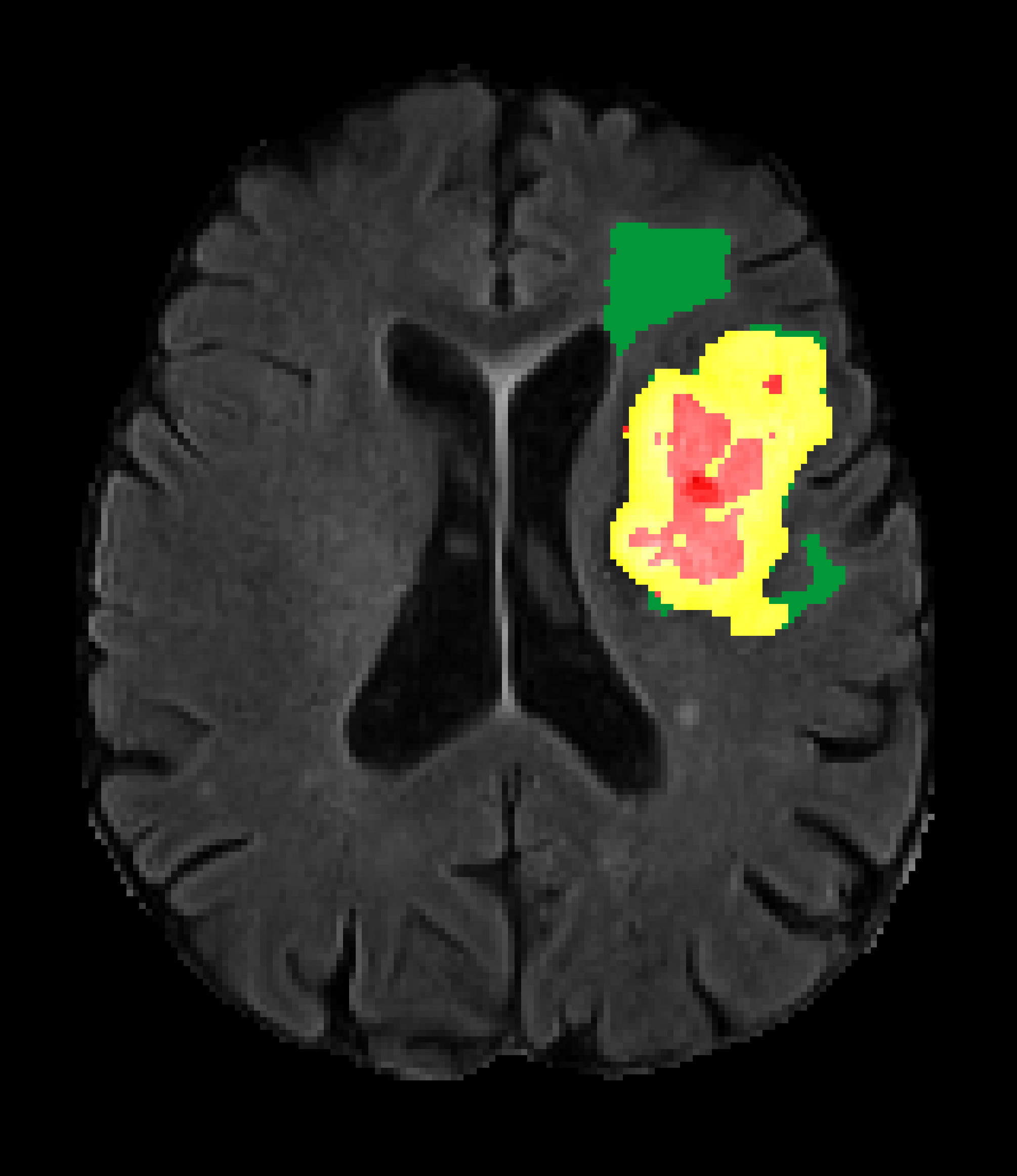}{95.4}{(1.55,2.4)}{.2}{(2.05,1.4)}{.2}
    \end{subfigure}%
    \begin{subfigure}{0.16\textwidth}
        \includegraphix[.98\textwidth]{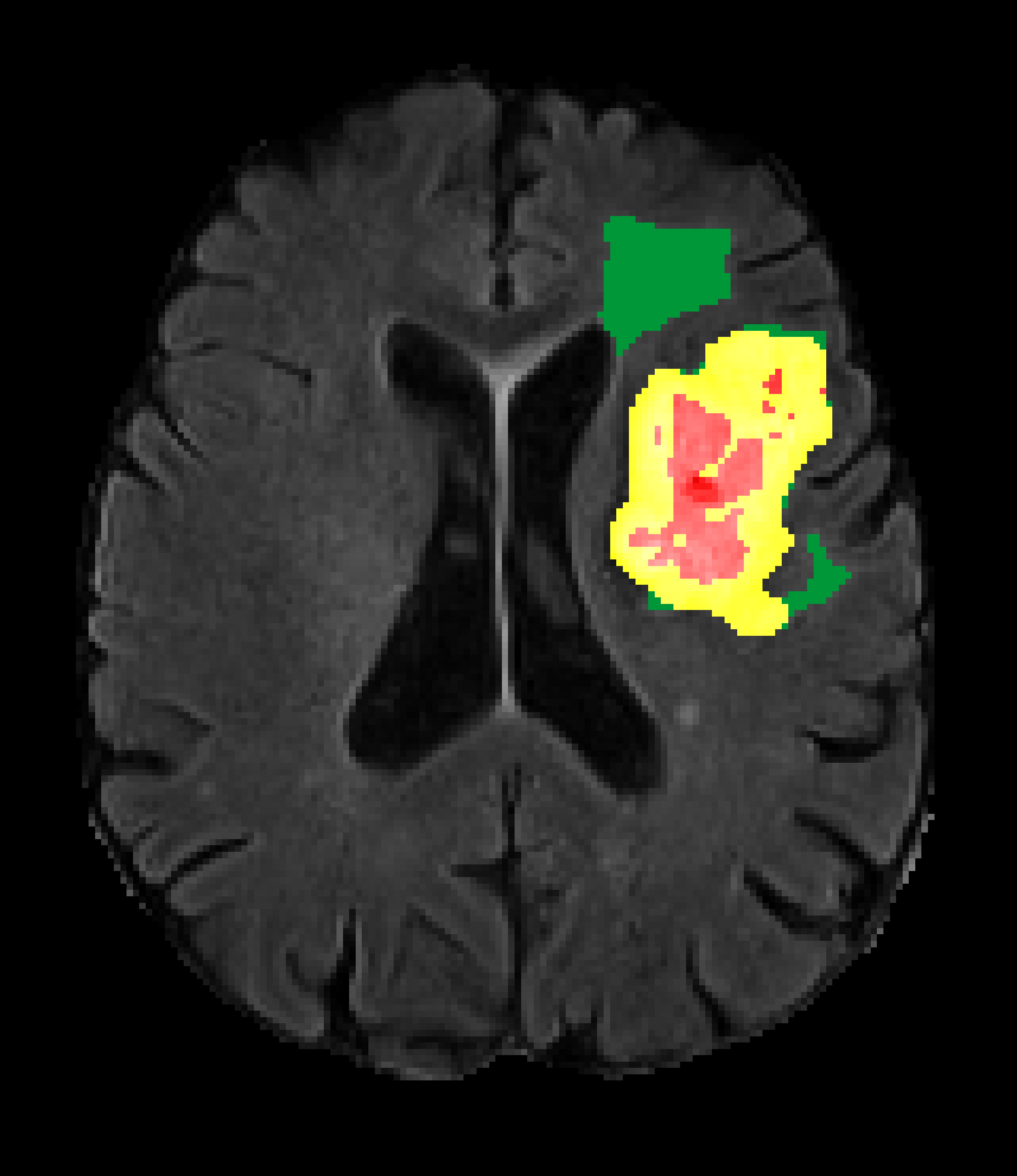}{94.5}{(1.55,2.4)}{.2}{(2.05,1.4)}{.2}
    \end{subfigure}%
    \begin{subfigure}{0.16\textwidth}
        \includegraphix[.98\textwidth]{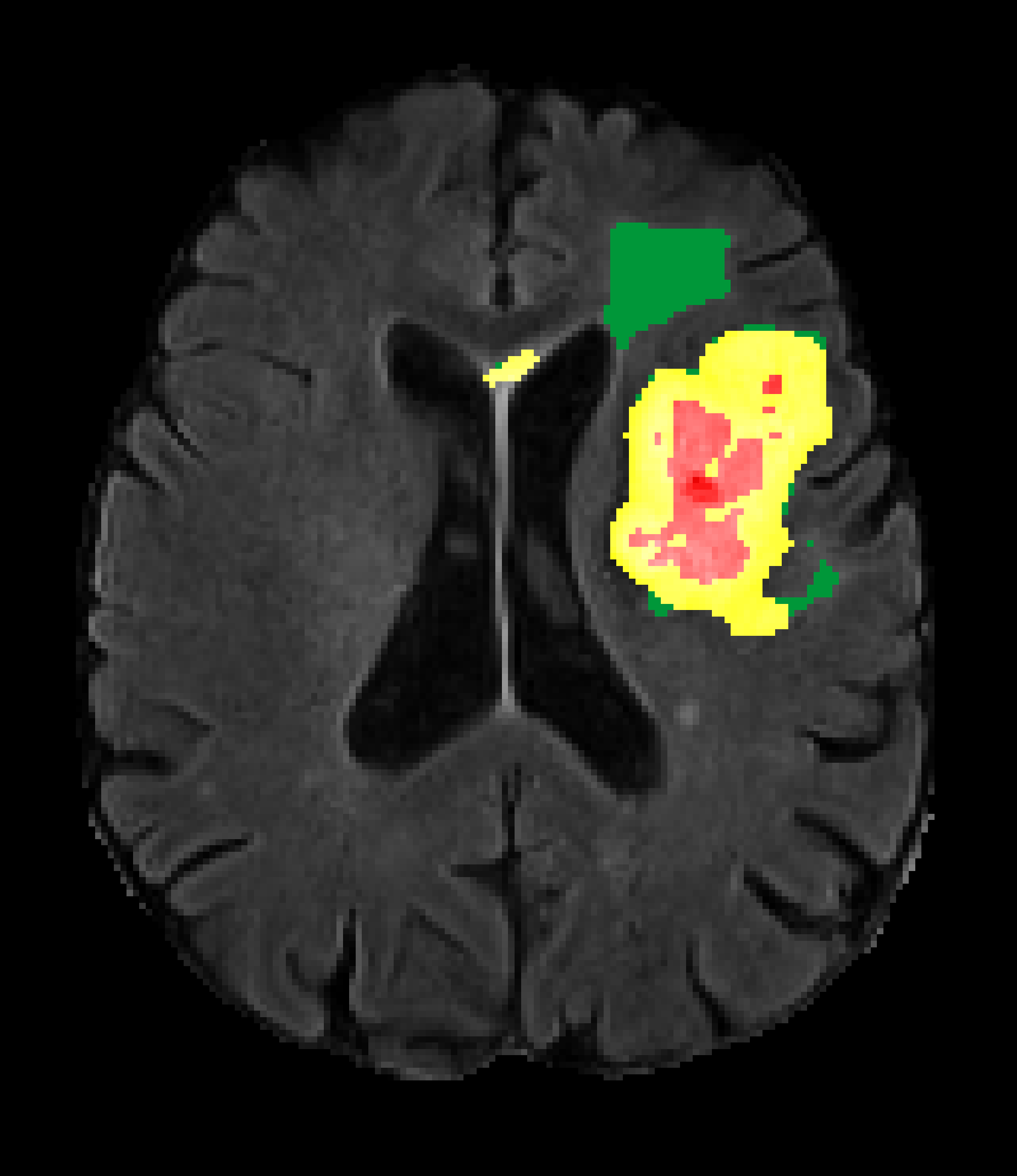}{95.8}{(2.05,1.4)}{.2}
    \end{subfigure}%
    \begin{subfigure}{0.16\textwidth}
        \includegraphix[.98\textwidth]{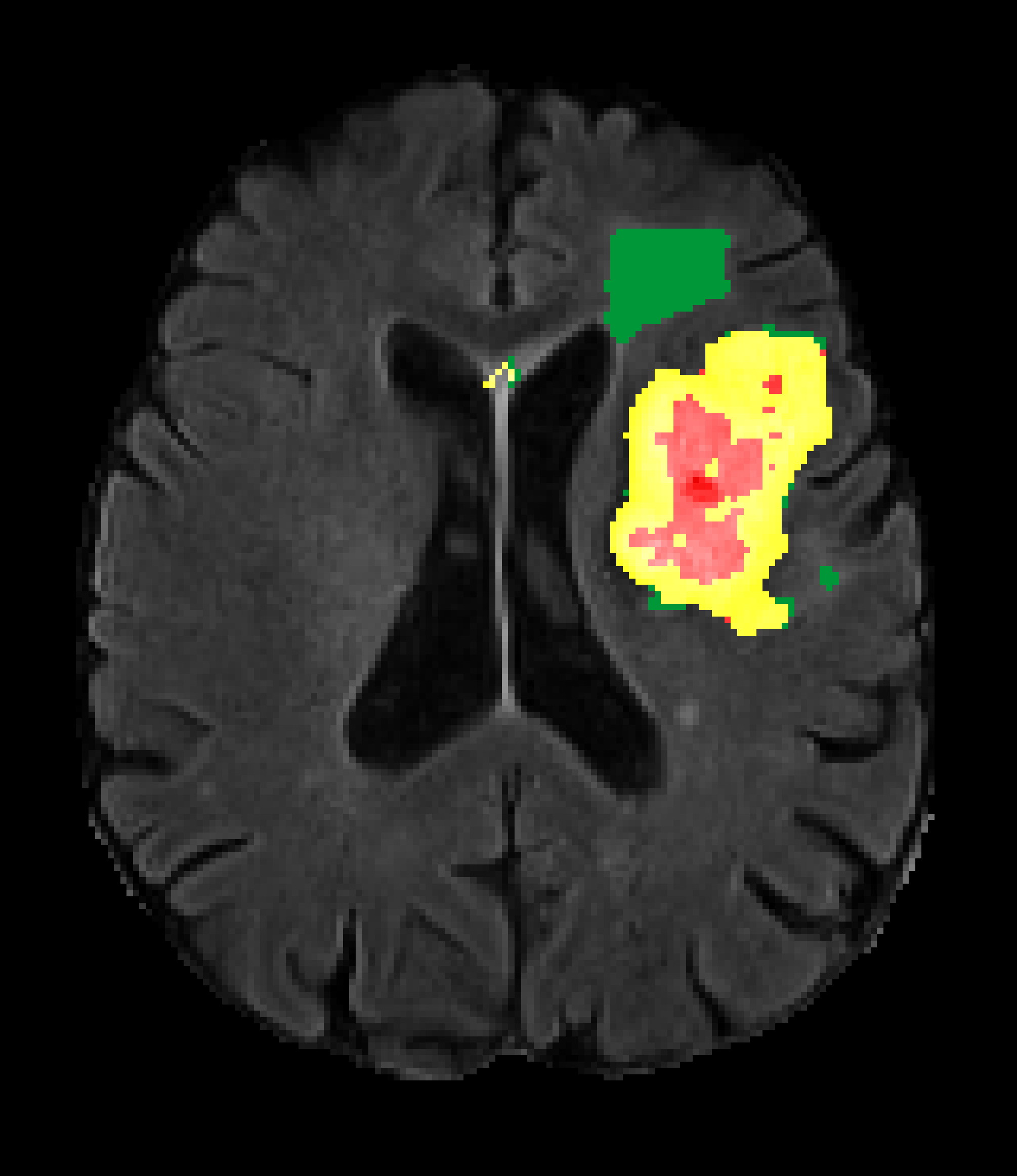}{95.0}{(2.05,1.4)}{.2}{(1.55,2.4)}{.2}
    \end{subfigure}

    \vspace{-5pt}

    \begin{subfigure}{0.16\textwidth}
        \includegraphix[.98\textwidth]{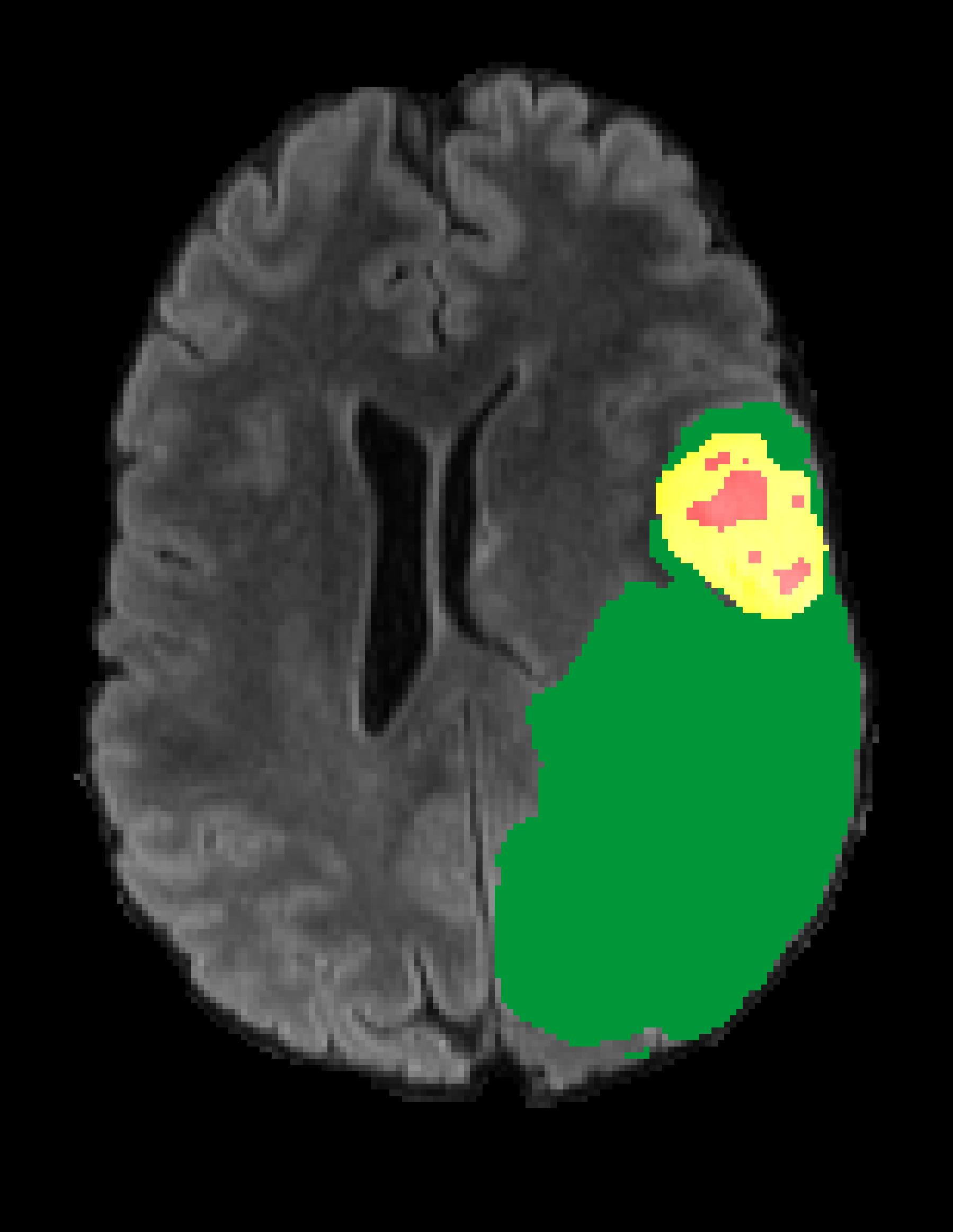}
        \caption*{Ground Truth}
    \end{subfigure}%
    \begin{subfigure}{0.16\textwidth}
        \includegraphix[.98\textwidth]{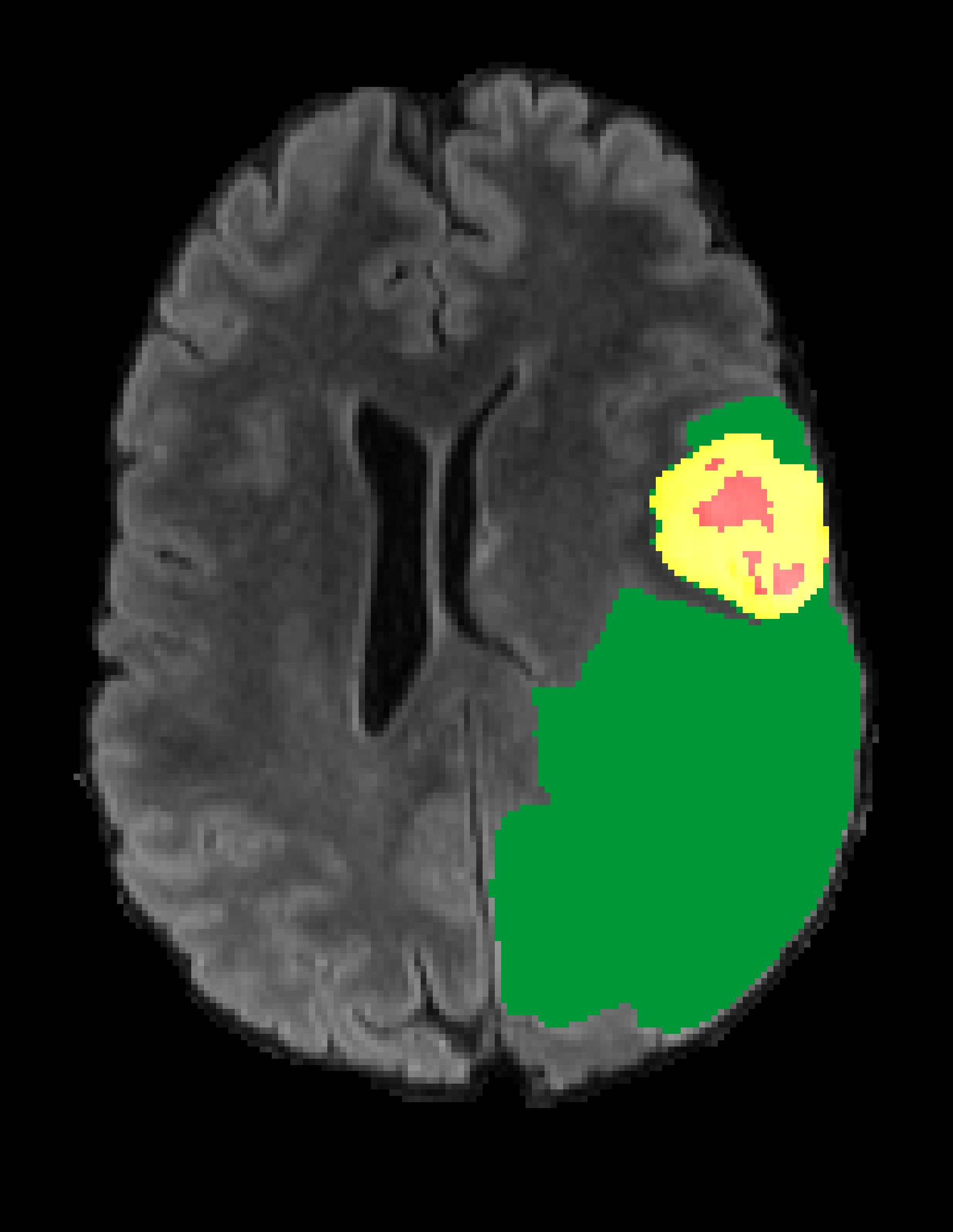}{92.1}
        \caption*{Deconver}
    \end{subfigure}%
    \begin{subfigure}{0.16\textwidth}
        \includegraphix[.98\textwidth]{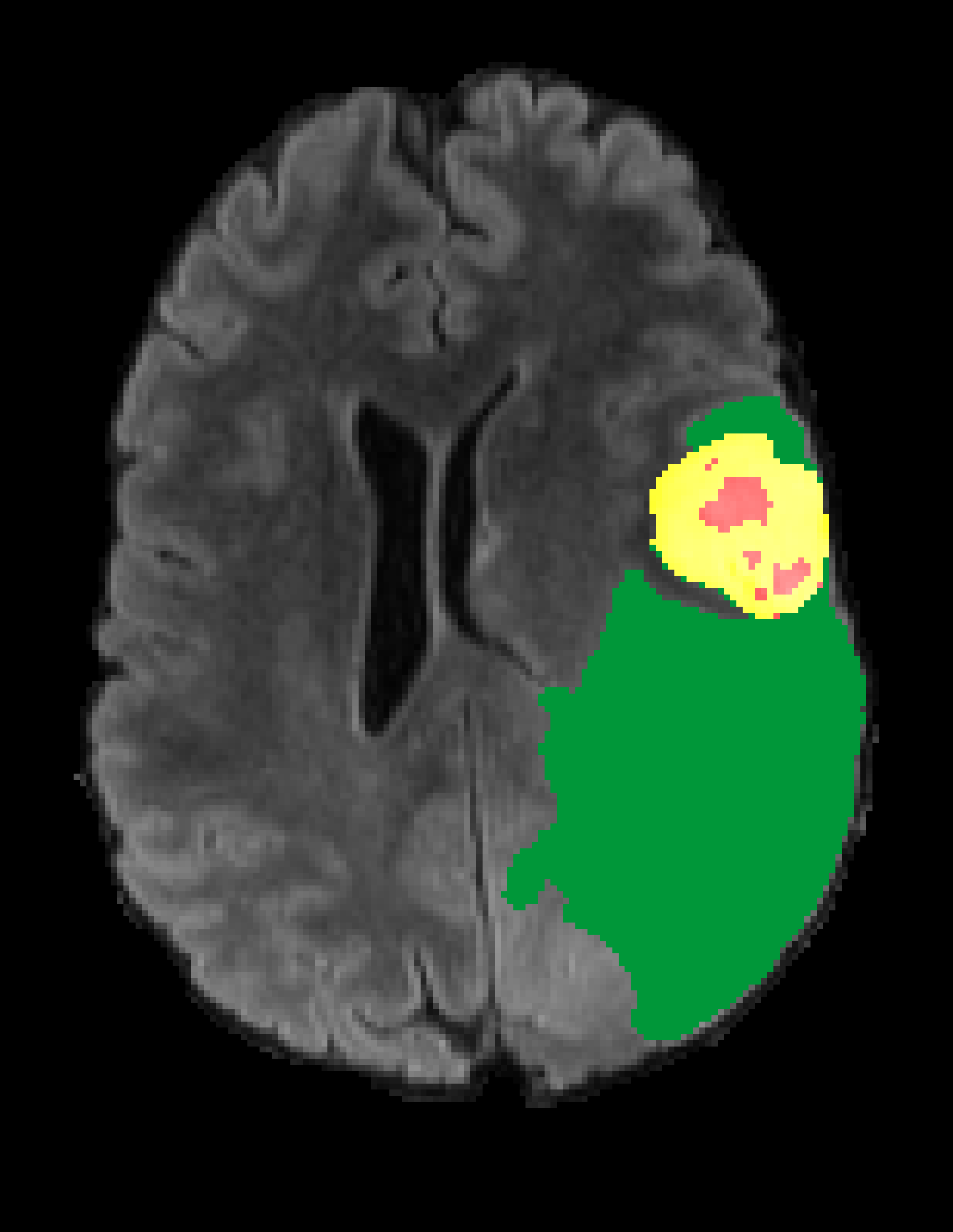}{91.2}
        \caption*{nnU-Net}
    \end{subfigure}%
    \begin{subfigure}{0.16\textwidth}
        \includegraphix[.98\textwidth]{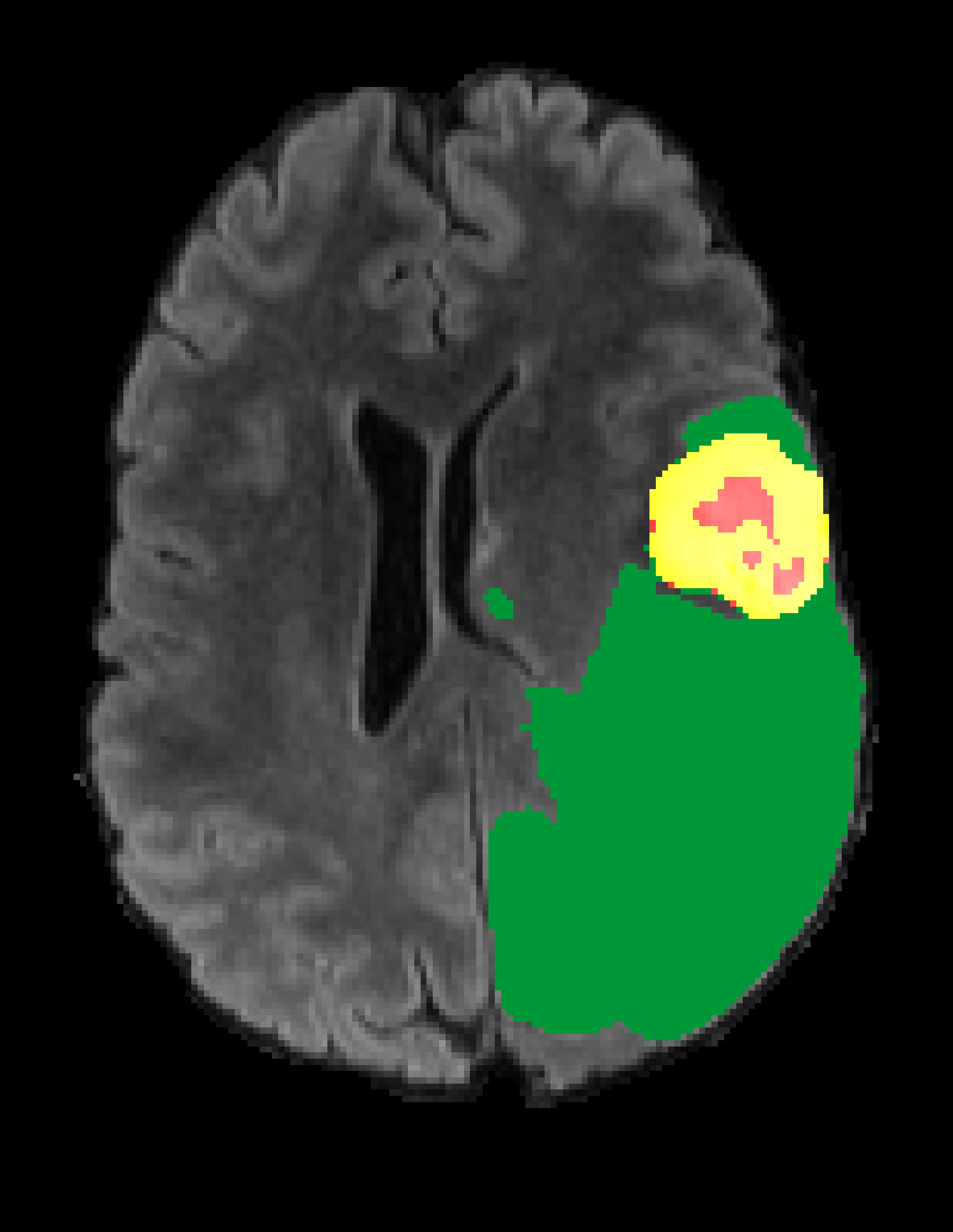}{90.6}{(1.62,2)}{.2}
        \caption*{SegResNet}
    \end{subfigure}%
    \begin{subfigure}{0.16\textwidth}
        \includegraphix[.98\textwidth]{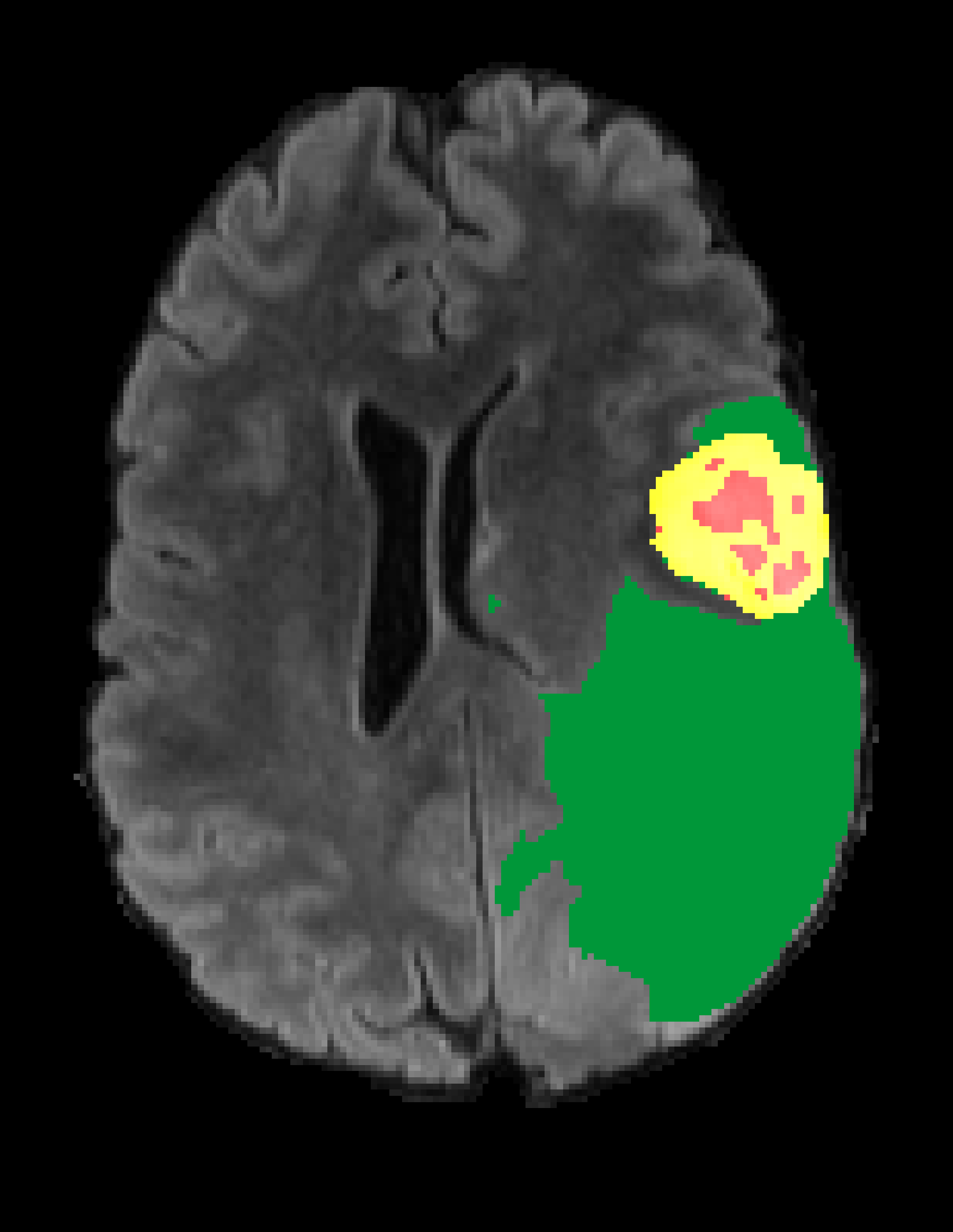}{90.3}{(1.62,2)}{.2}
        \caption*{Swin UNETR}
    \end{subfigure}%
    \begin{subfigure}{0.16\textwidth}
        \includegraphix[.98\textwidth]{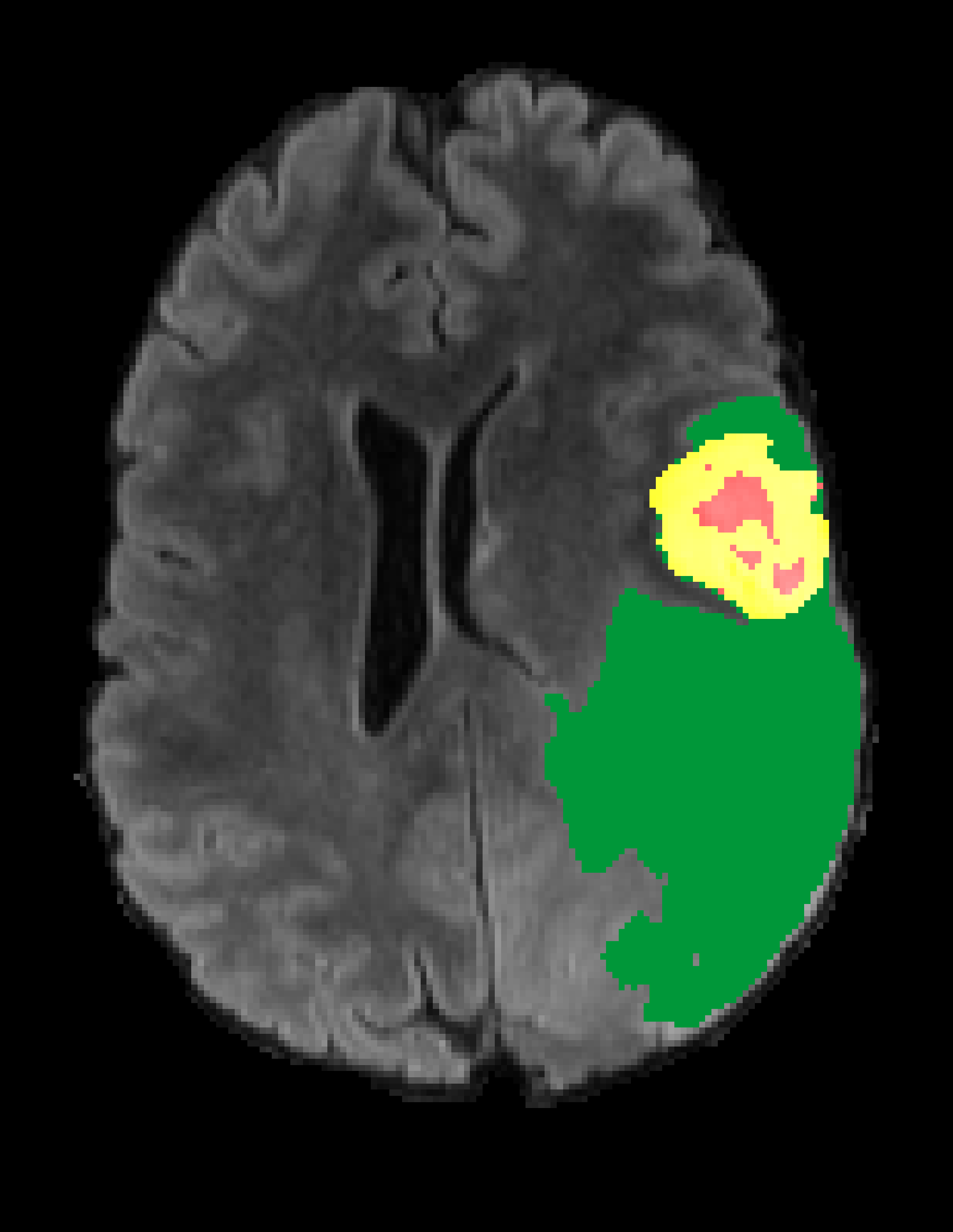}{89.8}
        \caption*{UNETR}
    \end{subfigure}

    \caption{Qualitative results of brain tumor segmentation on BraTS'23. Tumor core (TC) is the union of red (NCR/NET) and yellow (ET) regions, and whole tumor (WT) is the union of green (edema), red, and yellow regions. Each row displays a sample slice from a subject in the validation set. The average DSC score is presented for each case. In the first row example, all of the baselines fail to detect part of the enhancing tumor marked by the orange circle. In the second row example, nnU-Net, Swin UNETR and UNETR do not capture fully the edema, while SegResNet and Swin UNETR falsy predict the area marked by the orange circle as edema.}
    \label{fig:qualitative_brats}
\end{figure*}

Fig. \ref{fig:qualitative_brats} provides qualitative segmentation results on the BraTS'23 dataset. In the first row example, Deconver successfully captures both the centrally located enhancing tumor and the smaller enhancing spot at the lower half, while all other baseline models miss one or both of them (marked by the orange circles). In the second row example, Deconver provides a more accurate delineation of the edema region. Conversely, nnU-Net, Swin UNETR, and UNETR significantly undersegment the edema area, leading to incomplete tumor coverage. SegResNet and Swin UNETR falsely predict the normal brain tissue marked by the orange circle as edema.

\subsection{Results: 2D Segmentation} \label{sec:results_2d}

We further evaluated Deconver on 2D medical image segmentation tasks using the GlaS and FIVES datasets. Table \ref{tab:quantitative_results_2d} presents the quantitative results. Deconver (\( 5 \times 5 \)) achieved the highest DSC scores on both datasets, with {92.12\%} on GlaS and 92.72\% on FIVES. Furthermore, it obtained the lowest HD95 value on GlaS (60.49) and the second-best FIVES (30.26). While Deconver (\( 3 \times 3 \)) demonstrated comparable performance relative to baseline methods, it remained outperformed by its \( 5 \times 5 \) counterpart. These results suggest that larger kernel sizes lead to improved performance in 2D medical image segmentation.

\begin{table}[!t]
    \caption{Segmentation performance comparison on 2D datasets (GlaS and FIVES). Best results are \textbf{bold}, second-best are \underline{underlined}.}
    \label{tab:quantitative_results_2d}
    \centering
    \renewcommand{\arraystretch}{1.2}
    \resizebox{\linewidth}{!}{
        \begin{tabular}{l | c | c | c c | c c}
            \toprule
            \multirow{2}{*}{\textbf{Model}} & \multirow{2}{*}{\textbf{Params}} & \multirow{2}{*}{\textbf{FLOPs / pixel}} & \multicolumn{2}{c|}{\textbf{GlaS}} & \multicolumn{2}{c}{\textbf{FIVES}}                                         \\
            \cline{4-7}
                                            &                                  &                                         & \textbf{DSC (\%)}                  & \textbf{HD95}                      & \textbf{DSC (\%)} & \textbf{HD95}     \\
            \midrule
            nnU-Net                         & 20.6M                            & 874.8K                                  & 91.61                              & 87.17                              & 92.65             & 33.01             \\
            SegResNet                       & 25.5M                            & 1928.8K                                 & 91.23                              & 69.64                              & \underline{92.71} & \textbf{28.80}    \\
            UNETR                           & 120.2M                           & 1195.1K                                 & 90.45                              & 73.38                              & 90.98             & 35.40             \\
            Swin UNETR                      & 25.1M                            & 1406.0K                                 & \underline{91.70}                  & \underline{67.27}                  & 92.69             & 30.87             \\
            \midrule
            Deconver (3\(\times\)3)         & 20.6M                            & 422.8K                                  & 91.52                              & 68.52                              & 92.48             & 35.45             \\
            Deconver (5\(\times\)5)         & 20.8M                            & 422.8K                                  & \textbf{92.12}                     & \textbf{60.49}                     & \textbf{92.72}    & \underline{30.26} \\
            \bottomrule
        \end{tabular}
    }
\end{table}

\begin{figure*}[!t]
    \captionsetup[subfigure]{aboveskip=0.25em, belowskip=0.5em, font={bf,small}}
    \centering

    \begin{tikzpicture}
        \fill[blue] (0,0) rectangle (0.4,0.4);
        \node[right] at (0.4,0.2) {True Positives};

        \fill[green] (4,0) rectangle (4.4,0.4);
        \node[right] at (4.4,0.2) {False Positives};

        \fill[red] (8,0) rectangle (8.4,0.4);
        \node[right] at (8.4,0.2) {False Negatives};
    \end{tikzpicture}

    \vspace{5pt}

    \begin{subfigure}{0.19\textwidth}
        \includegraphix[.98\textwidth]{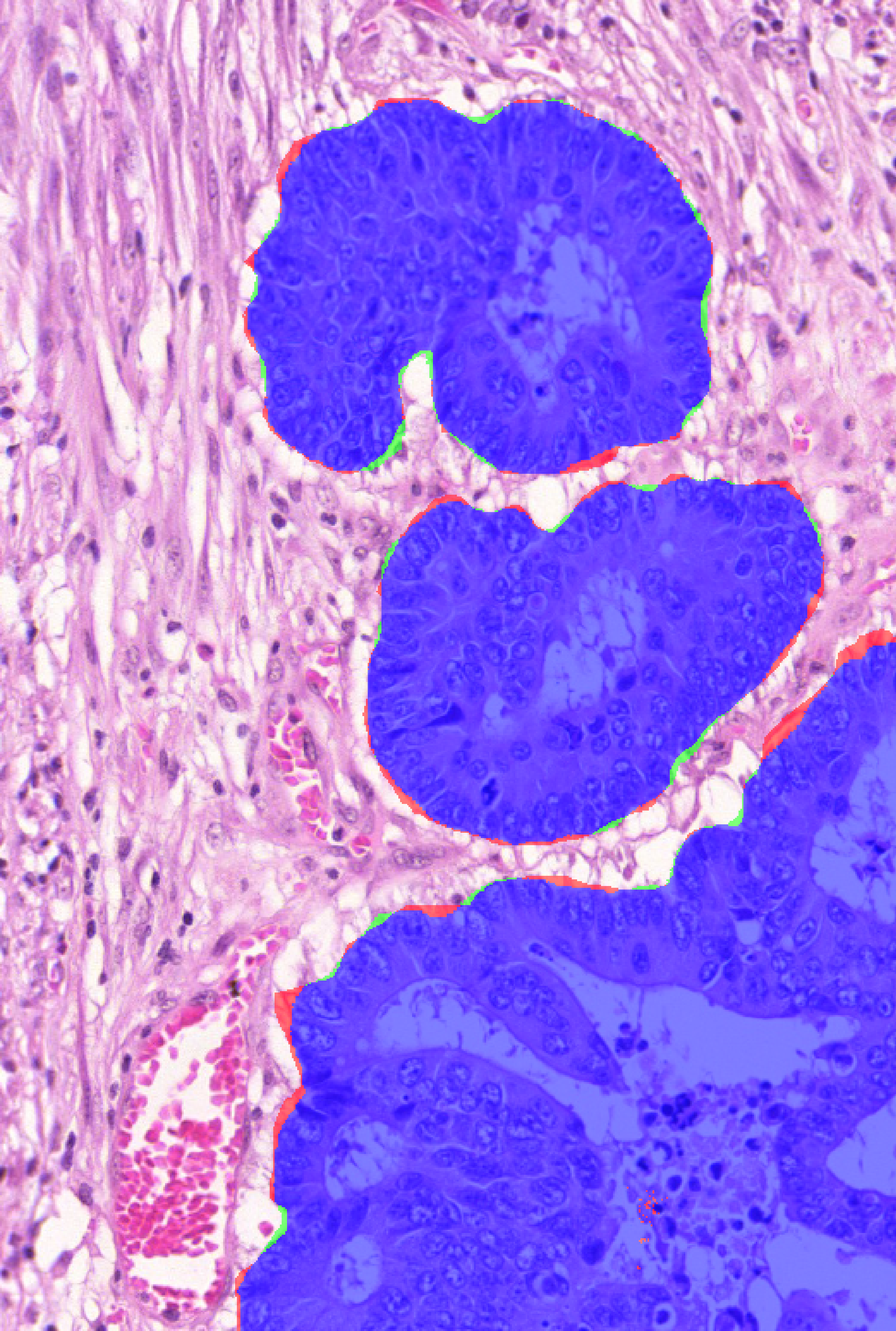}{98.6}
    \end{subfigure}%
    \begin{subfigure}{0.19\textwidth}
        \includegraphix[.98\textwidth]{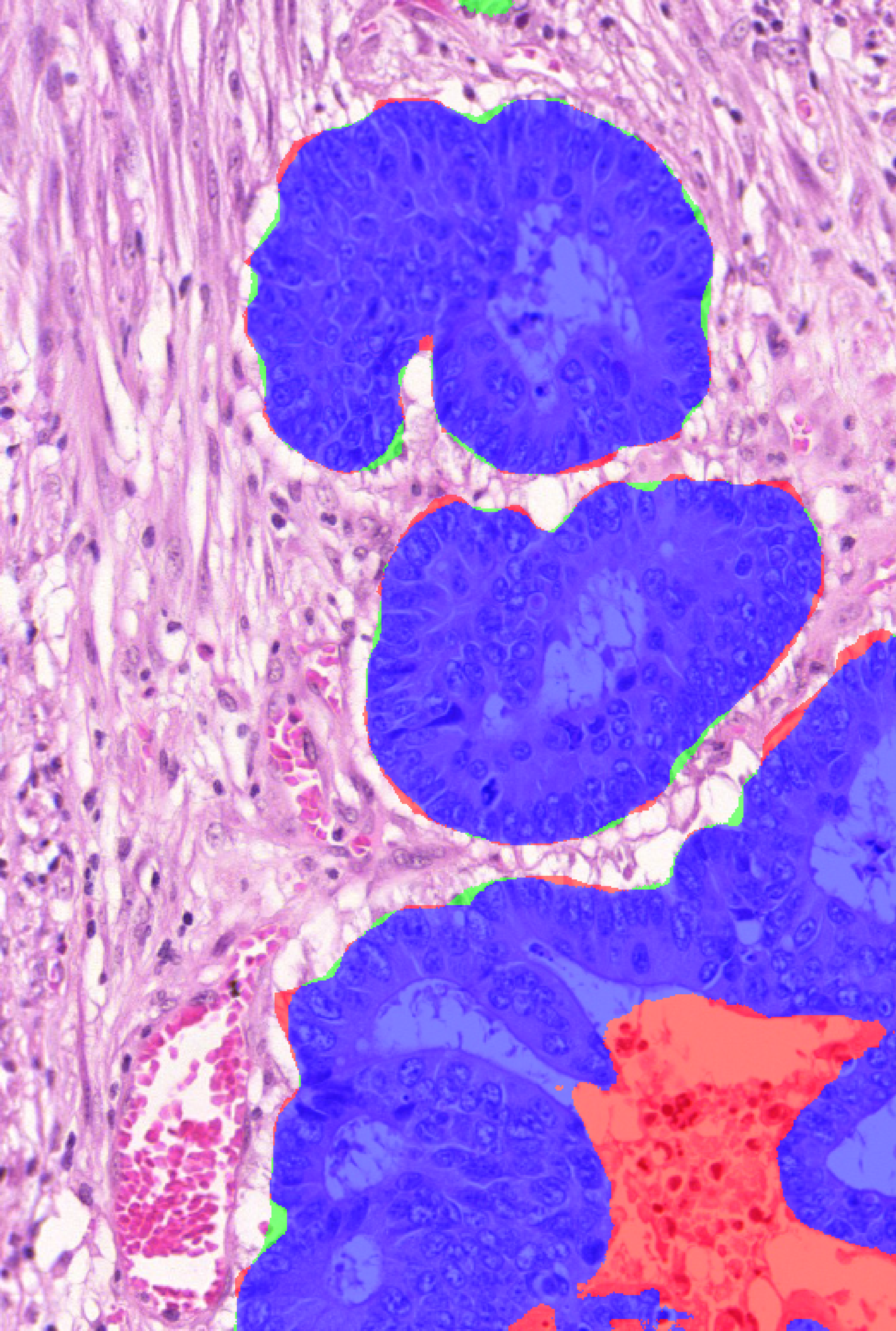}{91.4}
    \end{subfigure}%
    \begin{subfigure}{0.19\textwidth}
        \includegraphix[.98\textwidth]{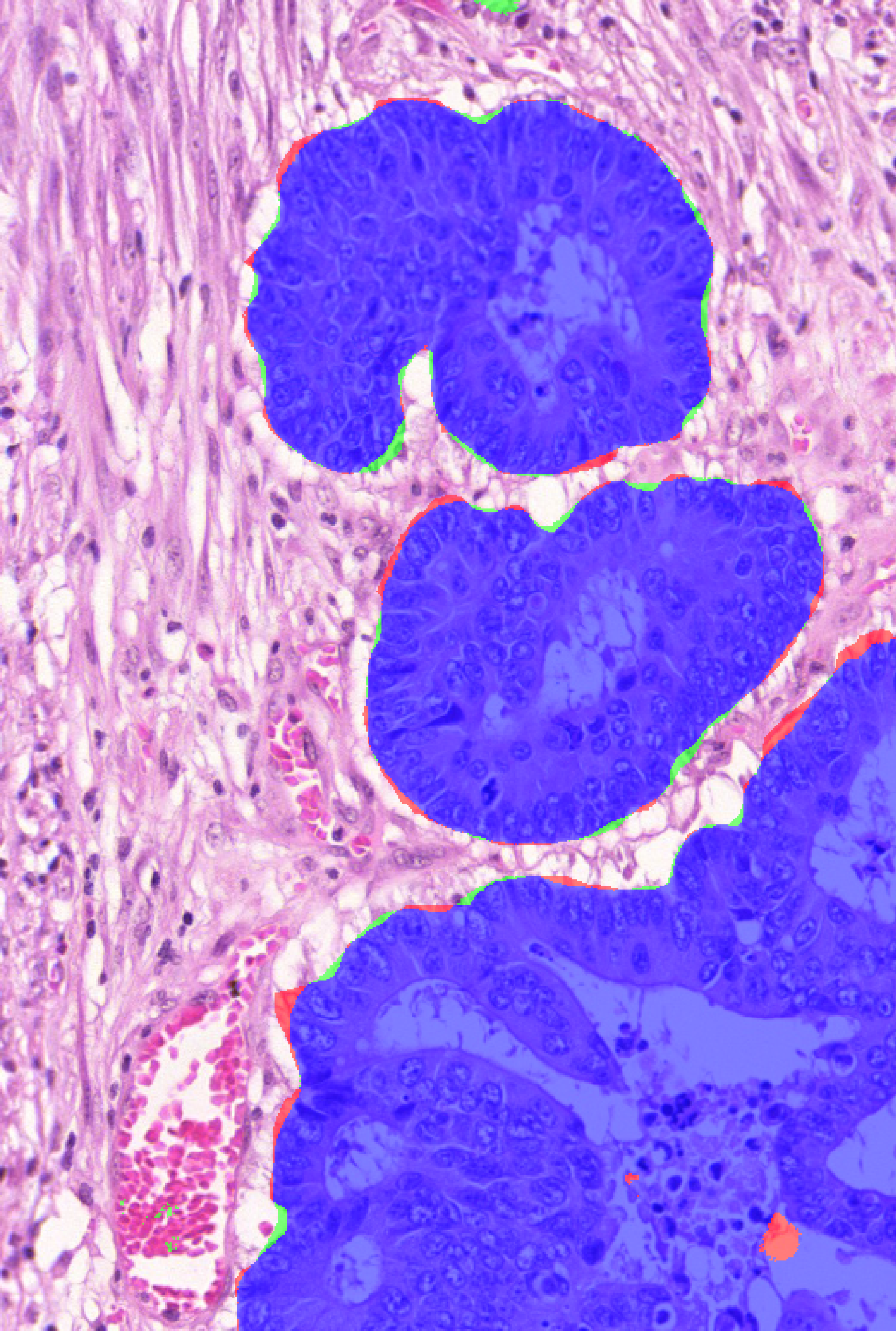}{98.5}{(3.05,0.5)}{.3}
    \end{subfigure}%
    \begin{subfigure}{0.19\textwidth}
        \includegraphix[.98\textwidth]{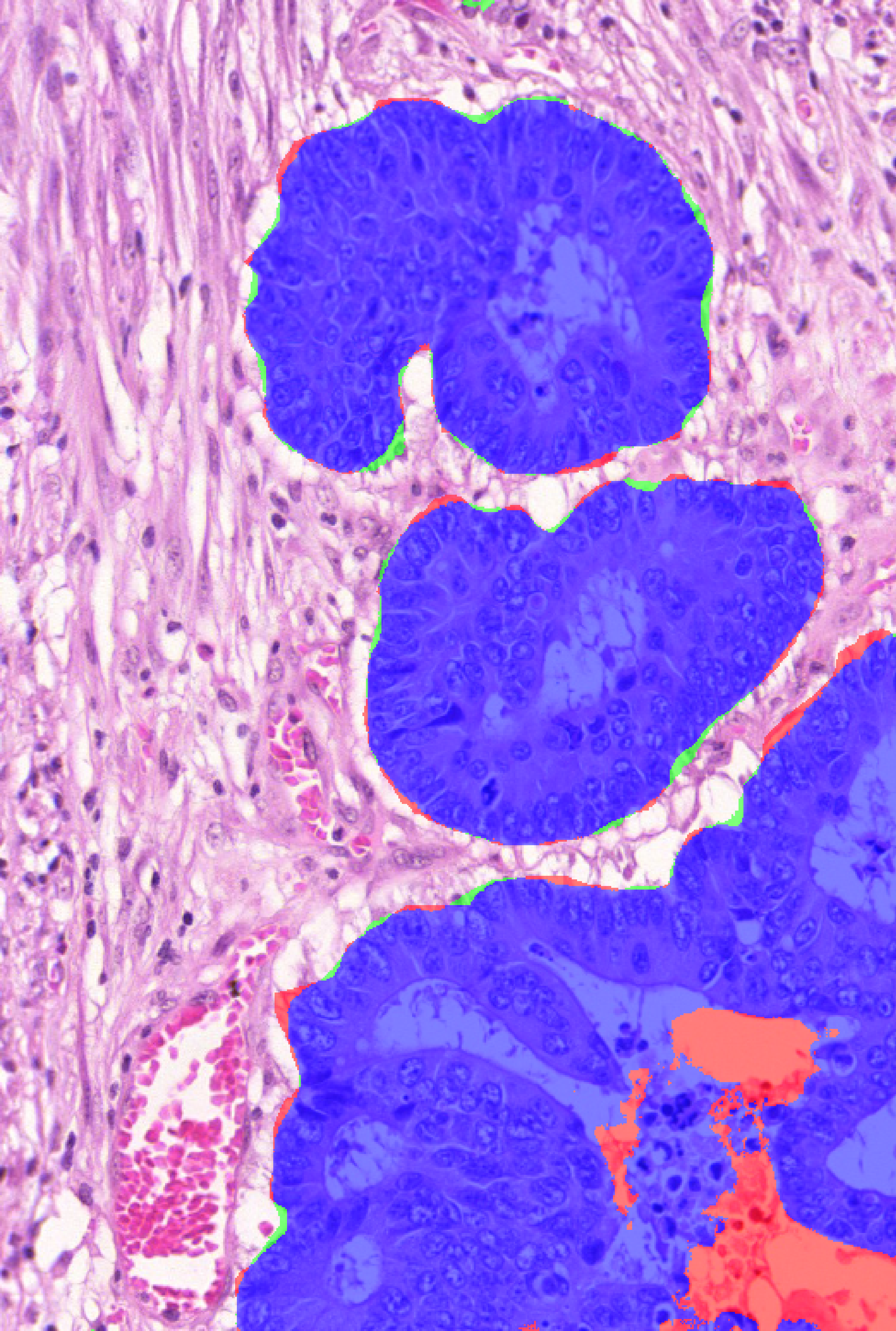}{95.3}
    \end{subfigure}%
    \begin{subfigure}{0.19\textwidth}
        \includegraphix[.98\textwidth]{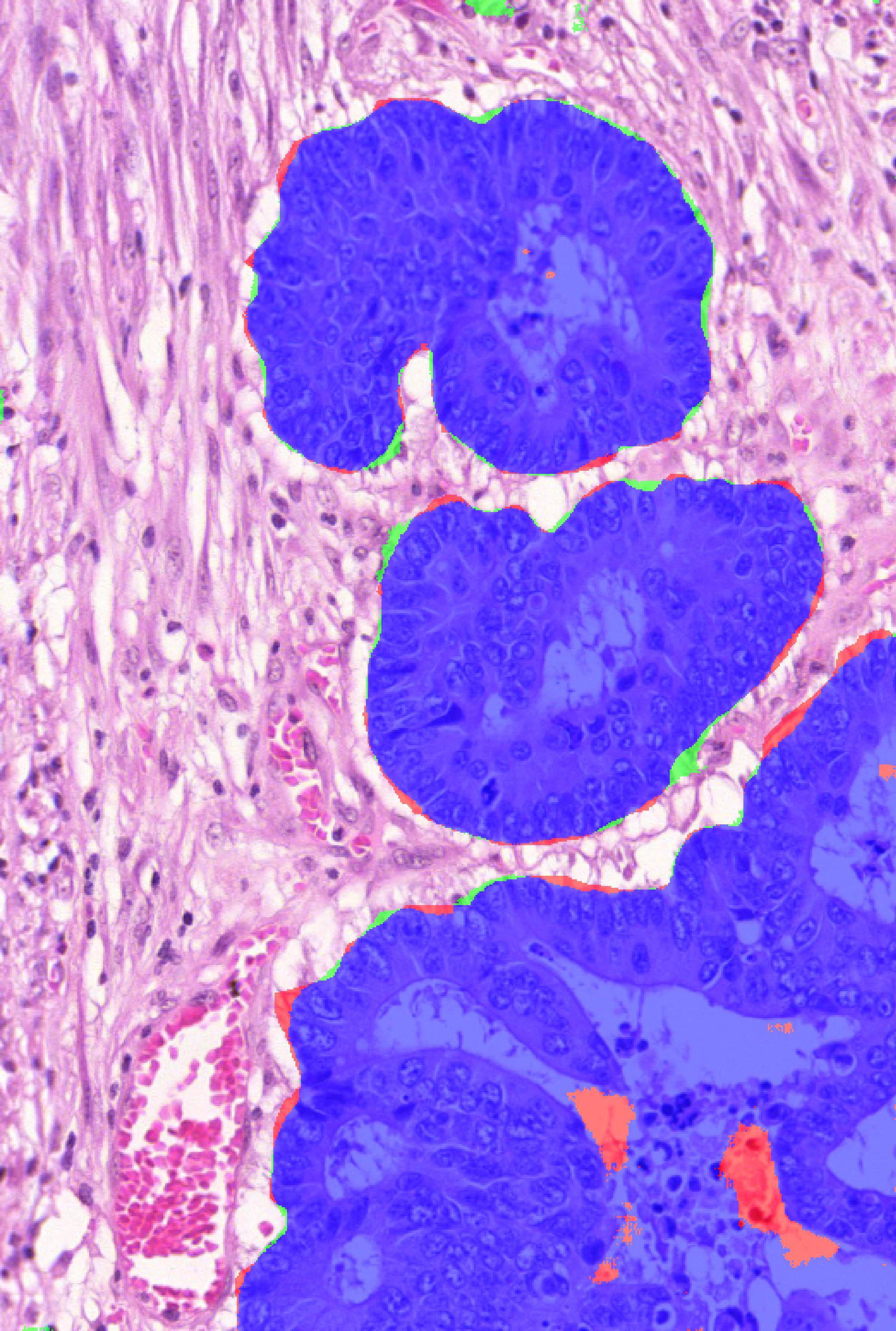}{97.6}
    \end{subfigure}

    \vspace{-5pt}

    \begin{subfigure}{0.19\textwidth}
        \includegraphix[.98\textwidth]{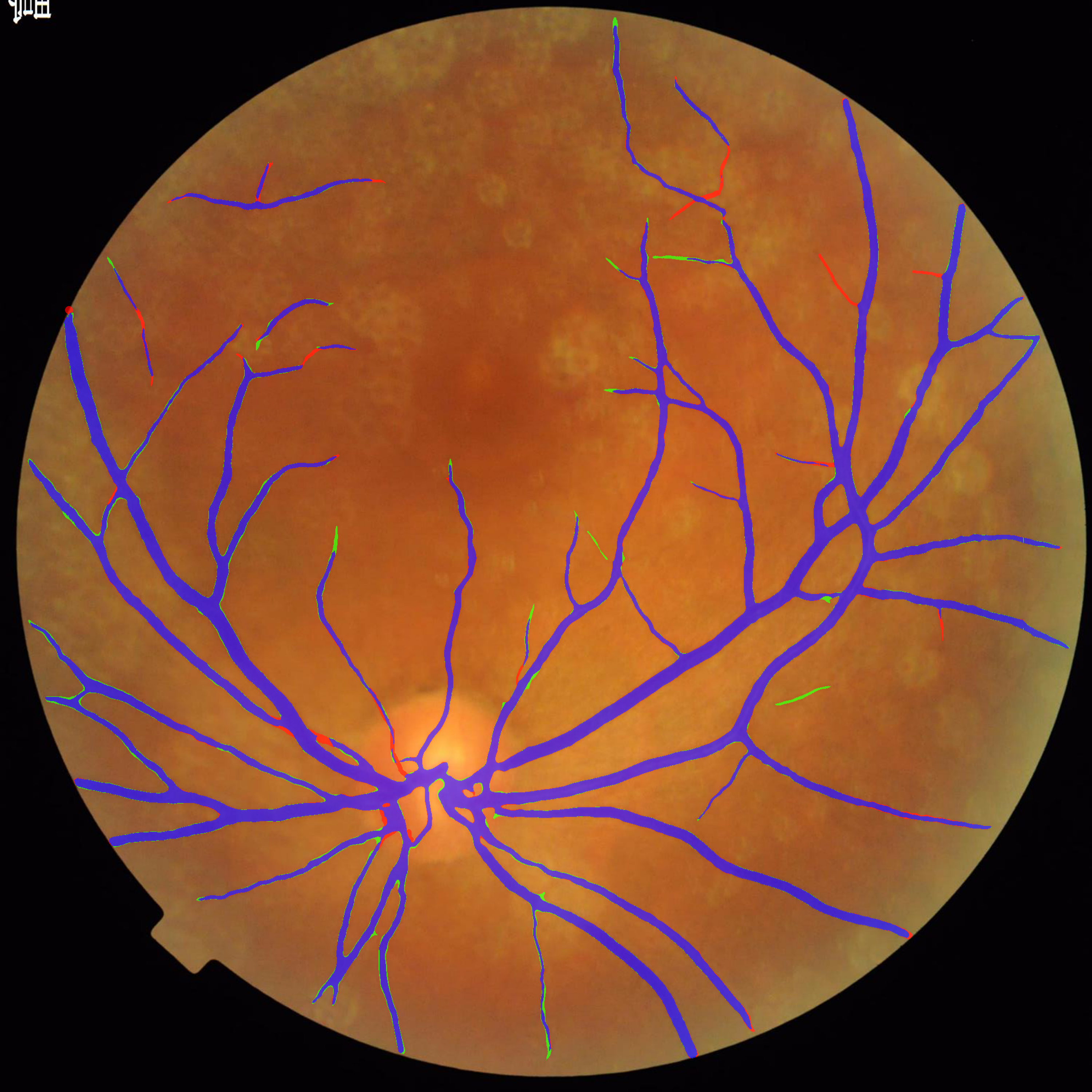}{95.4}
        \caption*{Deconver}
    \end{subfigure}%
    \begin{subfigure}{0.19\textwidth}
        \includegraphix[.98\textwidth]{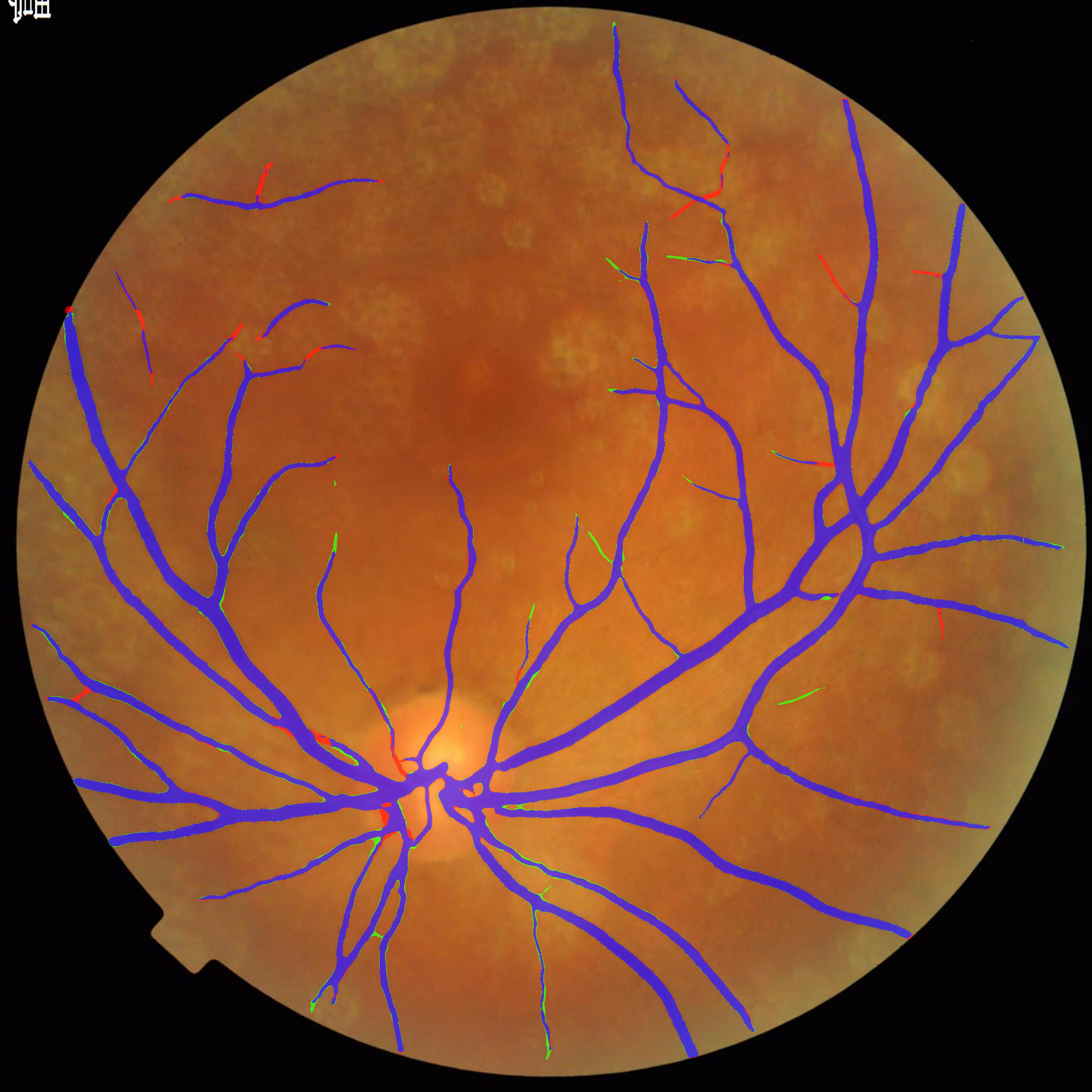}{95.3}
        \caption*{nnU-Net}
    \end{subfigure}%
    \begin{subfigure}{0.19\textwidth}
        \includegraphix[.98\textwidth]{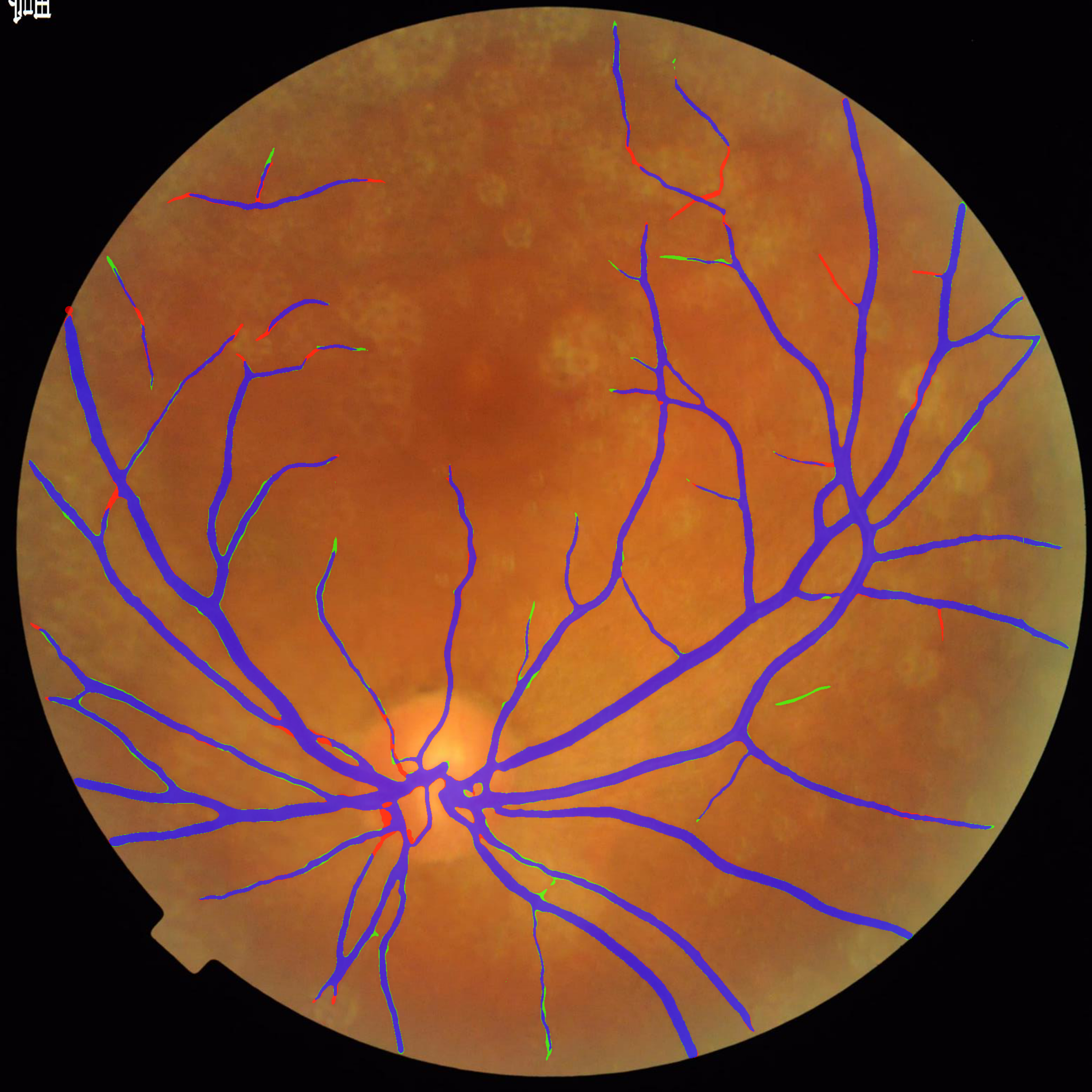}{95.1}
        \caption*{SegResNet}
    \end{subfigure}%
    \begin{subfigure}{0.19\textwidth}
        \includegraphix[.98\textwidth]{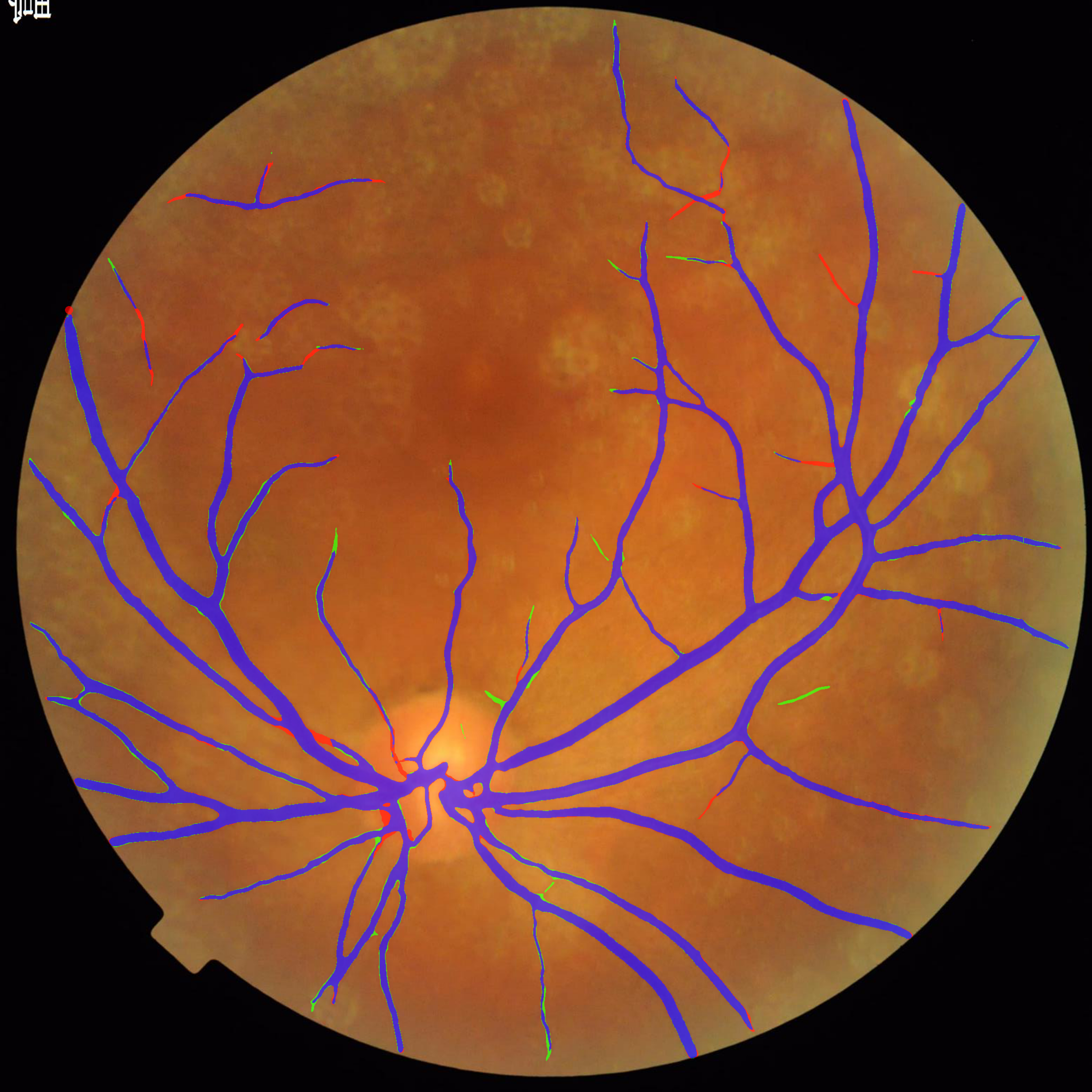}{95.2}
        \caption*{Swin UNETR}
    \end{subfigure}%
    \begin{subfigure}{0.19\textwidth}
        \includegraphix[.98\textwidth]{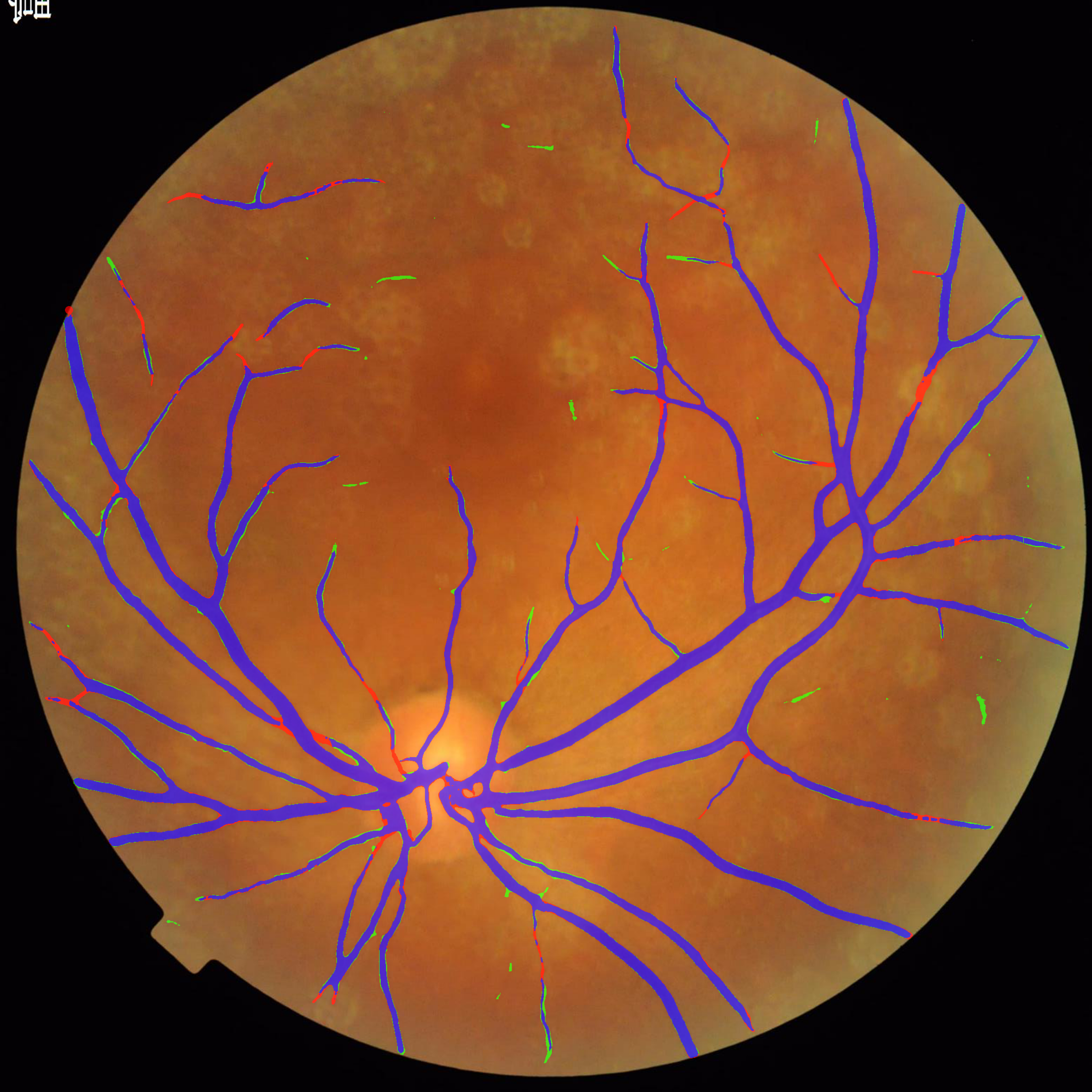}{93.2}
        \caption*{UNETR}
    \end{subfigure}

    \caption{Qualitative results of 2D segmentation on GlaS (first row) and FIVES (second row). The regions of true positives are marked in blue, false positives in green, and false negatives in red. The DSC score is presented for each case. The first row presents an example from the validation set of the GlaS dataset, where all baseline models undersegment the gland. The second row shows an example from the FIVES validation set, where, consistent with the quantitative results, most models perform similarly.
    }
    \label{fig:qualitative_results_2d}
\end{figure*}

Consistent with results on 3D datasets, Deconver demonstrates an excellent trade-off between accuracy and computational efficiency on the GlaS and FIVES datasets. Both Deconver variants (\( 3 \times 3 \) and \( 5 \times 5 \)) achieve superior or competitive DSC scores compared to SegResNet, while requiring over 75\% fewer FLOPs per pixel. Furthermore, the parameter count for Deconver remains close to that of nnU-Net and SegResNet, and is nearly six times lower than UNETR.

In line with our analysis of 3D datasets, we also conducted a qualitative evaluation of the 2D segmentation results, as shown in Fig. \ref{fig:qualitative_results_2d}. The first row illustrates an example from the GlaS dataset, highlighting clear distinctions among the models. All baseline methods, notably nnU-Net and Swin UNETR, fail to completely segment the lowest gland, resulting in substantial false negatives, particularly in its central region. In contrast, Deconver accurately captures the full glandular structure, achieving an almost perfect segmentation mask. We used Deconver with the kernel size of 5 for this experiment.

The second row presents segmentation results from the FIVES dataset. Here, as anticipated by quantitative metrics, most models produce consistently accurate segmentations, with minimal observable differences between their outputs. Just one notable observation is that UNETR produces a higher number of false positives compared to the other methods.

\subsection{Ablation Studies} \label{sec:ablation_studies}

To better understand the impact of parameter choices on Deconver, we performed two ablation studies on the ISLES'22 dataset. We evaluated two key hyperparameters of NDC layers: the source channel ratio (\( R \)) and the number of groups (\( G \)). In both experiments, we varied one parameter while keeping the other fixed to assess its independent effect. In both ablation experiments, we used the kernel size of \((3, 3)\), which was found to be optimal for the ISLES'22 dataset.

\subsubsection{Effect of Channel Ratio (\( R \))}

In this experiment, we fixed \( G \) to the number of channels and varied \( R \) to analyze its effect on the performance (Table \ref{tab:Ablation_ratio}). We observe that increasing \( R \) from 1 to 2 does not change the results significantly, while increasing it from 1 to 4 leads to major improvements in DSC and HD95. However, these gains come at the cost of increased computational complexity as both the number of parameters and FLOPs increase by 35.23\% and 42.55\% respectively.

\begin{table}[!t]
    \centering
    \caption{The results of the ablation study on the ratio (R) parameter of Deconver using ISLES'22 dataset. Best results are \textbf{bold}, second-best are \underline{underlined}.}
    \label{tab:Ablation_ratio}
    \renewcommand{\arraystretch}{1.2}
    \resizebox{\linewidth}{!}{
        \begin{tabular}{l | c c c c}
            \toprule
            \textbf{Ratio (R)} & \textbf{Params} & \textbf{FLOPs / pixel } & \textbf{DSC (\%)} & \textbf{HD95}    \\
            \midrule
            1.0                & 7.8M            & 425.8K                  & \underline{77.39} & \underline{5.07} \\
            2.0                & 8.7M            & 486.2K                  & {77.32}           & {5.17}           \\
            4.0                & 10.5M           & 607.0K                  & \textbf{78.16}    & \textbf{4.99}    \\
            \bottomrule
        \end{tabular}
    }
\end{table}

\subsubsection{Effect of Number of Groups (\( G \))}

In this experiment, we fixed \( R = 4 \) and varied \( G \) to assess its impact (Table \ref{tab:Ablation_group}). The results reveal a surprising pattern. Decreasing \( G \) leads to a significant increase in parameters without proportional improvements in DSC. In particular, setting \( G = 1 \) drastically inflates the size of the model (almost 5 times more parameters) while providing no gains, setting \( G \) to the number of channels achieves the best DSC and HD95 while keeping the model compact. Notably, changing the Groups parameter does not lead to major changes in the FLOPs.

\begin{table}[!t]
    \centering
    \caption{The results of the ablation study on the Groups (G) parameter of deconver using ISLES'22 dataset. Best results are \textbf{bold}, second-best are \underline{underlined}.}
    \label{tab:Ablation_group}
    \renewcommand{\arraystretch}{1.2}
    \resizebox{\linewidth}{!}{
        \begin{tabular}{l | c c c c}
            \toprule
            \textbf{Groups} & \textbf{Params } & \textbf{FLOPs / pixel } & \textbf{DSC (\%)} & \textbf{HD95}    \\
            \midrule
            1               & 57.21M           & 607.4K                  & {77.76}           & {5.08}           \\
            8               & 16.18M           & 607.1K                  & \underline{77.90} & \underline{5.05} \\
            Channels        & 10.48M           & 607.0K                  & \textbf{78.16}    & \textbf{4.99}    \\
            \bottomrule
        \end{tabular}
    }
\end{table}

In general, these studies show that higher channel ratios improve segmentation quality but increase computational cost, while a higher number of groups significantly reduces the number of parameters while improving the performance. The best results are achieved when \( G \) is set to the number of channels and \( R = 4 \), as this configuration produces optimal segmentation accuracy with minimal overhead.

\section{Conclusion} \label{sec:conclusion}

In this work, we introduce Deconver, a powerful segmentation network that integrates nonnegative deconvolution (NDC) as a learnable module within a U-shaped architecture. By replacing computationally expensive attention mechanisms with efficient deconvolution operations, Deconver restores high-frequency details while effectively suppressing artifacts.

Extensive experiments on four diverse medical imaging datasets (ISLES'22, BraTS'23, GlaS, and FIVES) demonstrate that Deconver consistently achieves state-of-the-art segmentation performance, outperforming or matching leading CNN- and Transformer-based models while significantly reducing computational costs. Notably, Deconver reduces FLOPs by up to 90\% compared to attention-based baselines, making it well-suited for resource-constrained clinical applications. Ablation studies further highlight the importance of key design choices, such as the source channel ratio and grouping strategy in NDC layers, in balancing accuracy and efficiency. We believe Deconver represents a promising step toward high-precision, computationally efficient medical image segmentation, bridging the gap between classical image restoration and modern deep learning.

\appendix
\section*{Proof of Theorem \ref{the:monotonicity}} \label{app:monotonicity_proof}

We prove the theorem using the majorization-minimization (MM) framework. This involves iteratively minimizing a surrogate function that upperbounds the original objective. The MM approach guarantees a monotonic decrease in the reconstruction error \( \mathcal{E}(\tns{S}) \) through two key steps:
\begin{enumerate}
    \item \textbf{Majorization}: Construct a surrogate function \( Q(\tns{S} \mid \tns{S}^{(t)}) \) that satisfies:
          \begin{equation} \label{eq:majorization}
              Q(\tns{S} \mid \tns{S}^{(t)}) \geq \mathcal{E}(\tns{S}),
          \end{equation}
          for all \( \tns{S} \), with equality when \( \tns{S} = \tns{S}^{(t)} \).
    \item \textbf{Minimization}: Update the source to minimize the surrogate:
          \begin{equation} \label{eq:minimization}
              \tns{S}^{(t+1)} = \argmin_{\tns{S} \geq 0} Q(\tns{S} \mid \tns{S}^{(t)})
          \end{equation}
\end{enumerate}
Combining these, we directly obtain:
\begin{equation*} \label{eq:monotonicity}
    \mathcal{E}(\tns{S}^{(t+1)}) \leq Q(\tns{S}^{(t+1)} \mid \tns{S}^{(t)}) \leq Q(\tns{S}^{(t)} \mid \tns{S}^{(t)}) = \mathcal{E}(\tns{S}^{(t)}),
\end{equation*}
proving the reconstruction error is non-increasing across iterations.

\subsection*{Step 1: Majorization}
Let's first expand the reconstruction error as
\begin{align}
    \mathcal{E}(\tns{S}) & = \|\tns{X} - \tns{S} \ast \tns{V}\|_\text{F}^2 \nonumber                                                                                    \\
               & = \|\tns{X}\|_\text{F}^2 - 2 \langle \tns{X}, \tns{S} \ast \tns{V} \rangle + \|\tns{S} \ast \tns{V}\|_\text{F}^2. \label{eq:error_expansion}
\end{align}

The main challenge lies in majorizing the quadratic term \( \|\tns{S} \ast \tns{V}\|_F^2 \). To achieve this, we apply the elementwise Cauchy--Schwarz inequality to \( (\tns{S} \ast \tns{V})^2 \). For each output element \((c,h,w)\), define:
\begin{align*}
    A_{d,m,n} & = \frac{\tns{S}_p[d, h+m, w+n] \tns{V}[c,d,m,n]}{\sqrt{\tns{S}_p^{(t)}[d, h+m, w+n] \tns{V}[c,d,m,n]}}, \\
    B_{d,m,n} & = \sqrt{\tns{S}_p^{(t)}[d, h+m, w+n]  \tns{V}[c,d,m,n]},
\end{align*}
where \( \tns{S}_p^{(t)} = \text{pad}(\tns{S}^{(t)}, (M,N)) \) and \( \tns{S}_p = \text{pad}(\tns{S}, (M,N)) \). By Cauchy--Schwarz, we have:
\begin{equation*} \label{eq:cauchy-schwarz}
    \bigg( \sum_{d,m,n} A_{d,m,n} B_{d,m,n} \bigg)^2 \leq \bigg(  \sum_{d,m,n} A_{d,m,n}^2 \bigg) \bigg(  \sum_{d,m,n} B_{d,m,n}^2 \bigg).
\end{equation*}
Substituting back and using the definition of cross-correlation, we obtain:
\begin{equation}  \label{eq:quadratic_elementwise_upperbound}
    (\tns{S} \ast \tns{V})^2 \leq \left(\frac{\tns{S}^2}{\tns{S}^{(t)}} \ast \tns{V} \right) \odot \left( \tns{S}^{(t)} \ast \tns{V} \right).
\end{equation}
Where \( (\cdot)^2 \) denotes elementwise squaring. Summing over all elements gives:
\begin{align} \label{eq:quadratic_upperbound}
    \|\tns{S} \ast \tns{V}\|_\text{F}^2 & \leq \sum_{c,h,w} \left(\frac{\tns{S}^2}{\tns{S}^{(t)}} \ast \tns{V}\right)[c,h,w] \cdot \left(\tns{S}^{(t)} \ast \tns{V}\right)[c,h,w] \nonumber \\
                                        & = \langle \frac{\tns{S}^2}{\tns{S}^{(t)}} \ast \tns{V}, \tns{S}^{(t)} \ast \tns{V} \rangle.
\end{align}

This constructs the majorizing surrogate function:
\begin{equation} \label{eq:surrogate_function}
    Q(\tns{S} \mid \tns{S}^{(t)}) = \|\tns{X}\|_\text{F}^2 - 2 \langle \tns{X}, \tns{S} \ast \tns{V} \rangle + \langle \frac{\tns{S}^2}{\tns{S}^{(t)}} \ast \tns{V}, \tns{S}^{(t)} \ast \tns{V} \rangle.
\end{equation}

\subsection*{Step 2: Minimization}
To minimize the surrogate function \( Q(\tns{S} \mid \tns{S}^{(t)}) \), we first derive its gradient with respect to \( \tns{S} \).

\subsubsection*{Linear Term Gradient}
The linear term \( \langle \tns{X}, \tns{S} \ast \tns{V} \rangle \) has gradient:
\begin{equation} \label{eq:grad_linear}
    \nabla_\mathcal{S} \langle \tns{X}, \tns{S} \ast \tns{V} \rangle[d',h',w'] = \sum_{c,h,w} \tns{X}[c,h,w] \frac{\partial (\tns{S} \ast \tns{V})[c,h,w]}{\partial \tns{S}[d',h',w']}.
\end{equation}
Expanding the cross-correlation \( (\tns{S} \ast \tns{V})[c,h,w] \), we find that the partial derivative is nonzero when:
\begin{equation} \label{eq:index_condition}
    h + m = h' + M, \quad w + n = w' + N.
\end{equation}
Thus,
\begin{align}
    \frac{\partial (\tns{S} \ast \tns{V})[c,h,w]}{\partial \tns{S}[d',h',w']} = \begin{cases}
                                                                                    \tns{V}[c,d',m,n], & \text{if \eqref{eq:index_condition} holds}, \\
                                                                                    0,                 & \text{otherwise}.
                                                                                \end{cases}
    \label{eq:partial_derivative}
\end{align}
Substituting back, \( \nabla_\mathcal{S} \langle \tns{X}, \tns{S} \ast \tns{V} \rangle[d',h',w'] \) simplifies to:
\begin{equation*}  \label{eq:grad_linear_expanded}
    \sum_{c,m,n} \tns{X}[c, h' - m + M, w' - n + N] \tns{V}[c,d',m,n].
\end{equation*}
Recognizing this as a cross-correlation operation, we obtain:
\begin{align}
    \nabla_\mathcal{S} \langle \tns{X}, \tns{S} \ast \tns{V} \rangle & = \tns{X} \ast \tns{V}^-.
    \label{eq:grad_linear_final}
\end{align}

\subsubsection*{Quadratic Term Gradient}
Using \eqref{eq:grad_linear_final} together with chain rule, the gradient of the quadratic term can be derived as
\begin{align}
    \nabla_\mathcal{S} \langle \frac{\tns{S}^2}{\tns{S}^{(t)}} \ast \tns{V}, \tns{S}^{(t)} \ast \tns{V} \rangle & = \frac{2\tns{S}}{\tns{S}^{(t)}} \odot \left[ (\tns{S}^{(t)} \ast \tns{V}) \ast \tns{V}^- \right].
    \label{eq:grad_quad_final}
\end{align}

\subsubsection*{Solving for \( \tns{S}^{(t+1)} \)}
Combining both gradients and setting the total gradient \( \nabla_\mathcal{S} Q(\tns{S} \mid \tns{S}^{(t)}) = 0 \) yields
\begin{equation} \label{eq:total_grad}
    -2(\tns{X} \ast \tns{V}^-) + \frac{2\tns{S}}{\tns{S}^{(t)}} \odot \left[ (\tns{S}^{(t)} \ast \tns{V}) \ast \tns{V}^- \right] = 0.
\end{equation}
Solving for \( \tns{S} \), we derive the multiplicative update rule:
\begin{equation*} \label{eq:update_rule}
    \tns{S}^{(t+1)} = \tns{S}^{(t)} \odot \frac{\tns{X} \ast \tns{V}^-}{(\tns{S}^{(t)} \ast \tns{V}) \ast \tns{V}^-}.
\end{equation*}

If for any index \((d,h,w)\), we have \((\tns{S}^{(t)} \ast \tns{V} \ast \tns{V}^-)[d,h,w] = 0\), then the nonnegativity of \( \tns{X} \), \( \tns{V} \), and \( \tns{S}^{(t)} \) implies \((\tns{X} \ast \tns{V}^-)[d,h,w] = 0\). In this case, the indeterminate form \(0/0\) is resolved by setting \( \tns{S}^{(t+1)}[d,h,w] = 0 \), thereby preserving nonnegativity.

\printbibliography

@article{ashtari2023factorizer,
    author    = {Ashtari, Pooya and Sima, Diana M and De Lathauwer, Lieven and Sappey-Marinier, Dominique and Maes, Frederik and Van Huffel, Sabine},
    journal   = {Medical image analysis},
    pages     = 102706,
    publisher = {Elsevier},
    title     = {{Factorizer}: A scalable interpretable approach to context modeling for medical image segmentation},
    volume    = 84,
    year      = 2023
}

@misc{baid2021rsna,
    archiveprefix = {arXiv},
    author        = {Ujjwal Baid and Satyam Ghodasara and Suyash Mohan and Michel Bilello and Evan Calabrese and Errol Colak and Keyvan Farahani and Jayashree Kalpathy-Cramer and Felipe C. Kitamura and Sarthak Pati and Luciano M. Prevedello and Jeffrey D. Rudie and Chiharu Sako and Russell T. Shinohara and Timothy Bergquist and Rong Chai and James Eddy and Julia Elliott and Walter Reade and Thomas Schaffter and Thomas Yu and Jiaxin Zheng and Ahmed W. Moawad and Luiz Otavio Coelho and Olivia McDonnell and Elka Miller and Fanny E. Moron and Mark C. Oswood and Robert Y. Shih and Loizos Siakallis and Yulia Bronstein and James R. Mason and Anthony F. Miller and Gagandeep Choudhary and Aanchal Agarwal and Cristina H. Besada and Jamal J. Derakhshan and Mariana C. Diogo and Daniel D. Do-Dai and Luciano Farage and John L. Go and Mohiuddin Hadi and Virginia B. Hill and Michael Iv and David Joyner and Christie Lincoln and Eyal Lotan and Asako Miyakoshi and Mariana Sanchez-Montano and Jaya Nath and Xuan V. Nguyen and Manal Nicolas-Jilwan and Johanna Ortiz Jimenez and Kerem Ozturk and Bojan D. Petrovic and Chintan Shah and Lubdha M. Shah and Manas Sharma and Onur Simsek and Achint K. Singh and Salil Soman and Volodymyr Statsevych and Brent D. Weinberg and Robert J. Young and Ichiro Ikuta and Amit K. Agarwal and Sword C. Cambron and Richard Silbergleit and Alexandru Dusoi and Alida A. Postma and Laurent Letourneau-Guillon and Gloria J. Guzman Perez-Carrillo and Atin Saha and Neetu Soni and Greg Zaharchuk and Vahe M. Zohrabian and Yingming Chen and Milos M. Cekic and Akm Rahman and Juan E. Small and Varun Sethi and Christos Davatzikos and John Mongan and Christopher Hess and Soonmee Cha and Javier Villanueva-Meyer and John B. Freymann and Justin S. Kirby and Benedikt Wiestler and Priscila Crivellaro and Rivka R. Colen and Aikaterini Kotrotsou and Daniel Marcus and Mikhail Milchenko and Arash Nazeri and Hassan Fathallah-Shaykh and Roland Wiest and Andras Jakab and Marc-Andre Weber and Abhishek Mahajan and Bjoern Menze and Adam E. Flanders and Spyridon Bakas},
    eprint        = {2107.02314},
    primaryclass  = {cs.CV},
    title         = {The RSNA-ASNR-MICCAI BraTS 2021 Benchmark on Brain Tumor Segmentation and Radiogenomic Classification},
    url           = {https://arxiv.org/abs/2107.02314},
    year          = 2021
}

@article{bakas2017advancing,
    author  = {Bakas, Spyridon and Akbari, Hamed and Sotiras, Aristeidis and Bilello, Michel and Rozycki, Martin and Kirby, Justin S. and Freymann, John B. and Farahani, Keyvan and Davatzikos, Christos},
    doi     = {10.1038/sdata.2017.117},
    journal = {Scientific Data},
    title   = {Advancing The Cancer Genome Atlas glioma MRI collections with expert segmentation labels and radiomic features},
    volume  = 4,
    year    = 2017
}

@article{cciccek2016,
    author       = {Çiçek, Özgün and Abdulkadir, Ahmed and Lienkamp, Soeren S and Brox, Thomas and Ronneberger, Olaf},
    journal      = {International Conference on Medical Image Computing and Computer-Assisted Intervention (MICCAI)},
    organization = {Springer},
    pages        = {424--432},
    title        = {{3D U-Net}: Learning dense volumetric segmentation from sparse annotation},
    year         = 2016
}

@article{chen2024transunet,
    author    = {Chen, Jieneng and Mei, Jieru and Li, Xianhang and Lu, Yongyi and Yu, Qihang and Wei, Qingyue and Luo, Xiangde and Xie, Yutong and Adeli, Ehsan and Wang, Yan and others},
    journal   = {Medical Image Analysis},
    pages     = 103280,
    publisher = {Elsevier},
    title     = {{TransUNet}: Rethinking the U-Net architecture design for medical image segmentation through the lens of transformers},
    volume    = 97,
    year      = 2024
}

@misc{de2024robust,
    archiveprefix = {arXiv},
    author        = {Ezequiel de la Rosa and Mauricio Reyes and Sook-Lei Liew and Alexandre Hutton and Roland Wiest and Johannes Kaesmacher and Uta Hanning and Arsany Hakim and Richard Zubal and Waldo Valenzuela and David Robben and Diana M. Sima and Vincenzo Anania and Arne Brys and James A. Meakin and Anne Mickan and Gabriel Broocks and Christian Heitkamp and Shengbo Gao and Kongming Liang and Ziji Zhang and Md Mahfuzur Rahman Siddiquee and Andriy Myronenko and Pooya Ashtari and Sabine Van Huffel and Hyun-su Jeong and Chi-ho Yoon and Chulhong Kim and Jiayu Huo and Sebastien Ourselin and Rachel Sparks and Albert Clèrigues and Arnau Oliver and Xavier Lladó and Liam Chalcroft and Ioannis Pappas and Jeroen Bertels and Ewout Heylen and Juliette Moreau and Nima Hatami and Carole Frindel and Abdul Qayyum and Moona Mazher and Domenec Puig and Shao-Chieh Lin and Chun-Jung Juan and Tianxi Hu and Lyndon Boone and Maged Goubran and Yi-Jui Liu and Susanne Wegener and Florian Kofler and Ivan Ezhov and Suprosanna Shit and Moritz R. Hernandez Petzsche and Bjoern Menze and Jan S. Kirschke and Benedikt Wiestler},
    eprint        = {2403.19425},
    primaryclass  = {eess.IV},
    title         = {A Robust Ensemble Algorithm for Ischemic Stroke Lesion Segmentation: Generalizability and Clinical Utility Beyond the ISLES Challenge},
    url           = {https://arxiv.org/abs/2403.19425},
    year          = 2024
}

@article{debnath2013deblurring,
    author    = {Debnath, Arunabha and Rai, Hari Mohan and Yadav, Chahat and Agarwal, Anjali and Bhatia, Ankit},
    journal   = {International Journal of Computer Applications},
    number    = 10,
    pages     = {7--12},
    publisher = {Foundation of Computer Science},
    title     = {Deblurring and denoising of magnetic resonance images using blind deconvolution method},
    volume    = 81,
    year      = 2013
}

@inproceedings{dosovitskiy2020image,
    author    = {Dosovitskiy, Alexey and Beyer, Lucas and Kolesnikov, Alexander and Weissenborn, Dirk and Zhai, Xiaohua and Unterthiner, Thomas and Dehghani, Mostafa and Minderer, Matthias and Heigold, G and Gelly, S and others},
    booktitle = {International Conference on Learning Representations},
    title     = {An Image is Worth 16x16 Words: Transformers for Image Recognition at Scale},
    year      = 2020
}

@inproceedings{hatamizadeh2021swin,
    author       = {Hatamizadeh, Ali and Nath, Vishwesh and Tang, Yucheng and Yang, Dong and Roth, Holger R and Xu, Daguang},
    booktitle    = {International MICCAI brainlesion workshop},
    organization = {Springer},
    pages        = {272--284},
    title        = {{Swin UNETR}: Swin Transformers for Semantic Segmentation of Brain Tumors in MRI Images},
    year         = 2021
}

@inproceedings{hatamizadeh2022unetr,
    author    = {Hatamizadeh, Ali and Tang, Yucheng and Nath, Vishwesh and Yang, Dong and Myronenko, Andriy and Landman, Bennett and Roth, Holger R. and Xu, Daguang},
    booktitle = {Proceedings of the IEEE/CVF Winter Conference on Applications of Computer Vision (WACV)},
    month     = {January},
    pages     = {574--584},
    title     = {{UNETR}: Transformers for 3D Medical Image Segmentation},
    year      = 2022
}

@misc{he2015delving,
    archiveprefix = {arXiv},
    author        = {Kaiming He and Xiangyu Zhang and Shaoqing Ren and Jian Sun},
    eprint        = {1502.01852},
    primaryclass  = {cs.CV},
    title         = {Delving Deep into Rectifiers: Surpassing Human-Level Performance on ImageNet Classification},
    url           = {https://arxiv.org/abs/1502.01852},
    year          = 2015
}

@article{hernandez2022isles,
    author    = {Hernandez Petzsche, Moritz R and de la Rosa, Ezequiel and Hanning, Uta and Wiest, Roland and Valenzuela, Waldo and Reyes, Mauricio and Meyer, Maria and Liew, Sook-Lei and Kofler, Florian and Ezhov, Ivan and others},
    journal   = {Scientific data},
    number    = 1,
    pages     = 762,
    publisher = {Nature Publishing Group UK London},
    title     = {{ISLES 2022}: A multi-center magnetic resonance imaging stroke lesion segmentation dataset},
    volume    = 9,
    year      = 2022
}

@article{huang2022missformer,
    author   = {Huang, Xiaohong and Deng, Zhifang and Li, Dandan and Yuan, Xueguang and Fu, Ying},
    doi      = {10.1109/TMI.2022.3230943},
    journal  = {IEEE Transactions on Medical Imaging},
    keywords = {Transformers;Image segmentation;Task analysis;Bridges;Medical diagnostic imaging;Feature extraction;Merging;Context bridge;ReMix-FFN;global dependencies;local context;segmentation},
    number   = 5,
    pages    = {1484--1494},
    title    = {{MISSFormer}: An Effective Transformer for 2D Medical Image Segmentation},
    volume   = 42,
    year     = 2023
}

@article{isensee2021nnu,
    author    = {Isensee, Fabian and Jaeger, Paul F and Kohl, Simon AA and Petersen, Jens and Maier-Hein, Klaus H},
    journal   = {Nature methods},
    number    = 2,
    pages     = {203--211},
    publisher = {Nature Publishing Group},
    title     = {{nnU-Net}: a self-configuring method for deep learning-based biomedical image segmentation},
    volume    = 18,
    year      = 2021
}

@article{jin2022fives,
    author    = {Jin, Kai and Huang, Xingru and Zhou, Jingxing and Li, Yunxiang and Yan, Yan and Sun, Yibao and Zhang, Qianni and Wang, Yaqi and Ye, Juan},
    journal   = {Scientific data},
    number    = 1,
    pages     = 475,
    publisher = {Nature Publishing Group UK London},
    title     = {{FIVES}: A Fundus Image Dataset for Artificial Intelligence based Vessel Segmentation},
    volume    = 9,
    year      = 2022
}

@misc{katoh2024recent,
    author    = {Katoh, Kazuo},
    journal   = {Current medical imaging},
    number    = 1,
    pages     = {e020623217605},
    publisher = {Bentham Science Publishers},
    title     = {Recent Applications of Deconvolution Microscopy in Medicine},
    volume    = 20,
    year      = 2024
}

@inproceedings{lee2000algorithms,
    author    = {Lee, Daniel and Seung, H. Sebastian},
    booktitle = {Advances in Neural Information Processing Systems},
    editor    = {T. Leen and T. Dietterich and V. Tresp},
    pages     = {},
    publisher = {MIT Press},
    title     = {Algorithms for Non-negative Matrix Factorization},
    url       = {https://proceedings.neurips.cc/paper_files/paper/2000/file/f9d1152547c0bde01830b7e8bd60024c-Paper.pdf},
    volume    = 13,
    year      = 2000
}

@article{liu2008deconvolution,
    author    = {Liu, Yiheng and Liang, Yanmei and Mu, Guoguang and Zhu, Xiaonong},
    journal   = {Journal of the Optical Society of America A},
    number    = 1,
    pages     = {72--77},
    publisher = {Optical Society of America},
    title     = {Deconvolution methods for image deblurring in optical coherence tomography},
    volume    = 26,
    year      = 2008
}

@inproceedings{liu2021swin,
    author    = {Liu, Ze and Lin, Yutong and Cao, Yue and Hu, Han and Wei, Yixuan and Zhang, Zheng and Lin, Stephen and Guo, Baining},
    booktitle = {Proceedings of the IEEE/CVF International Conference on Computer Vision (ICCV)},
    month     = {October},
    pages     = {10012--10022},
    title     = {{Swin Transformer}: Hierarchical Vision Transformer Using Shifted Windows},
    year      = 2021
}

@article{lucy1974iterative,
    author  = {Lucy, L. B.},
    journal = {The Astronomical Journal},
    pages   = 745,
    title   = {An iterative technique for the rectification of observed distributions},
    volume  = 79,
    year    = 1974
}

@article{menze2015multimodal,
    author   = {Menze, Bjoern H. and Jakab, Andras and Bauer, Stefan and Kalpathy-Cramer, Jayashree and Farahani, Keyvan and Kirby, Justin and Burren, Yuliya and Porz, Nicole and Slotboom, Johannes and Wiest, Roland and Lanczi, Levente and Gerstner, Elizabeth and Weber, Marc-André and Arbel, Tal and Avants, Brian B. and Ayache, Nicholas and Buendia, Patricia and Collins, D. Louis and Cordier, Nicolas and Corso, Jason J. and Criminisi, Antonio and Das, Tilak and Delingette, Hervé and Demiralp, Çağatay and Durst, Christopher R. and Dojat, Michel and Doyle, Senan and Festa, Joana and Forbes, Florence and Geremia, Ezequiel and Glocker, Ben and Golland, Polina and Guo, Xiaotao and Hamamci, Andac and Iftekharuddin, Khan M. and Jena, Raj and John, Nigel M. and Konukoglu, Ender and Lashkari, Danial and Mariz, José António and Meier, Raphael and Pereira, Sérgio and Precup, Doina and Price, Stephen J. and Raviv, Tammy Riklin and Reza, Syed M. S. and Ryan, Michael and Sarikaya, Duygu and Schwartz, Lawrence and Shin, Hoo-Chang and Shotton, Jamie and Silva, Carlos A. and Sousa, Nuno and Subbanna, Nagesh K. and Szekely, Gabor and Taylor, Thomas J. and Thomas, Owen M. and Tustison, Nicholas J. and Unal, Gozde and Vasseur, Flor and Wintermark, Max and Ye, Dong Hye and Zhao, Liang and Zhao, Binsheng and Zikic, Darko and Prastawa, Marcel and Reyes, Mauricio and Van Leemput, Koen},
    doi      = {10.1109/TMI.2014.2377694},
    journal  = {IEEE Transactions on Medical Imaging},
    keywords = {Image segmentation;Educational institutions;Benchmark testing;Biomedical imaging;Lesions;MRI;Brain;Oncology/tumor;Image segmentation;Benchmark},
    number   = 10,
    pages    = {1993--2024},
    title    = {The Multimodal Brain Tumor Image Segmentation Benchmark (BRATS)},
    volume   = 34,
    year     = 2015
}

@inproceedings{milletari2016v,
    author    = {Milletari, Fausto and Navab, Nassir and Ahmadi, Seyed-Ahmad},
    booktitle = {2016 Fourth International Conference on 3D Vision (3DV)},
    doi       = {10.1109/3DV.2016.79},
    keywords  = {Image segmentation;Feature extraction;Biomedical imaging;Three-dimensional displays;Neural networks;Magnetic resonance imaging;Two dimensional displays;Deep learning;segmentation;prostate;machine learning;convolutional neural networks},
    number    = {},
    pages     = {565--571},
    title     = {{V-Net}: Fully Convolutional Neural Networks for Volumetric Medical Image Segmentation},
    volume    = {},
    year      = 2016
}

@inproceedings{myronenko20193d,
    author       = {Myronenko, Andriy},
    booktitle    = {Brainlesion: Glioma, Multiple Sclerosis, Stroke and Traumatic Brain Injuries: 4th International Workshop, BrainLes 2018, Held in Conjunction with MICCAI 2018, Granada, Spain, September 16, 2018, Revised Selected Papers, Part II 4},
    organization = {Springer},
    pages        = {311--320},
    title        = {{3D MRI} brain tumor segmentation using autoencoder regularization},
    year         = 2019
}

@article{richardson1972bayesian,
    author  = {Richardson, William H},
    journal = {Journal of the Optical Society of America},
    number  = 1,
    pages   = {55--59},
    title   = {Bayesian-based iterative method of image restoration},
    volume  = 62,
    year    = 1972
}

@inproceedings{ronneberger2015u,
    author       = {Ronneberger, Olaf and Fischer, Philipp and Brox, Thomas},
    booktitle    = {Medical image computing and computer-assisted intervention--MICCAI 2015: 18th international conference, Munich, Germany, October 5-9, 2015, proceedings, part III 18},
    organization = {Springer},
    pages        = {234--241},
    title        = {{U-Net}: Convolutional Networks for Biomedical Image Segmentation},
    year         = 2015
}

@inproceedings{roy2023mednext,
    author       = {Roy, Saikat and Koehler, Gregor and Ulrich, Constantin and Baumgartner, Michael and Petersen, Jens and Isensee, Fabian and Jaeger, Paul F and Maier-Hein, Klaus H},
    booktitle    = {International Conference on Medical Image Computing and Computer-Assisted Intervention},
    organization = {Springer},
    pages        = {405--415},
    title        = {{MedNeXt}: Transformer-Driven Scaling of ConvNets for Medical Image Segmentation},
    year         = 2023
}

@article{sample2024neural,
    author    = {Sample, Caleb and Rahmim, Arman and Uribe, Carlos and B{\'e}nard, Fran{\c{c}}ois and Wu, Jonn and Fedrigo, Roberto and Clark, Haley},
    journal   = {Physics in Medicine \& Biology},
    number    = 8,
    pages     = {085025},
    publisher = {IOP Publishing},
    title     = {Neural blind deconvolution for deblurring and supersampling PSMA PET},
    volume    = 69,
    year      = 2024
}

@article{satish2020comprehensive,
    author  = {Satish, Pooja and Srikantaswamy, Mallikarjunaswamy and Ramaswamy, Nataraj Kanathur},
    journal = {Traitement du Signal},
    number  = 3,
    title   = {A Comprehensive Review of Blind Deconvolution Techniques for Image Deblurring.},
    volume  = 37,
    year    = 2020
}

@inproceedings{sharma2016mse,
    author    = {Sharma, Poonam and Sharma, Saloni and Goyal, Ayush},
    booktitle = {Proceedings of the Second International Conference on Information and Communication Technology for Competitive Strategies},
    pages     = {1--5},
    title     = {An MSE (mean square error) based analysis of deconvolution techniques used for deblurring/restoration of MRI and CT Images},
    year      = 2016
}

@article{sirinukunwattana2015stochastic,
    author   = {Sirinukunwattana, Korsuk and Snead, David R. J. and Rajpoot, Nasir M.},
    doi      = {10.1109/TMI.2015.2433900},
    journal  = {IEEE Transactions on Medical Imaging},
    keywords = {Silicon;Glands;Colon;Markov processes;Image segmentation;Bayes methods;Image color analysis;Bayesian inference;gland modeling;histology image analysis;random polygons;reversible-jump Markov chain Monte Carlo},
    number   = 11,
    pages    = {2366--2378},
    title    = {A Stochastic Polygons Model for Glandular Structures in Colon Histology Images},
    volume   = 34,
    year     = 2015
}

@article{sirinukunwattana2017gland,
    author    = {Sirinukunwattana, Korsuk and Pluim, Josien PW and Chen, Hao and Qi, Xiaojuan and Heng, Pheng-Ann and Guo, Yun Bo and Wang, Li Yang and Matuszewski, Bogdan J and Bruni, Elia and Sanchez, Urko and others},
    journal   = {Medical image analysis},
    pages     = {489--502},
    publisher = {Elsevier},
    title     = {Gland segmentation in colon histology images: The glas challenge contest},
    volume    = 35,
    year      = 2017
}

@misc{ulyanov2016instance,
    archiveprefix = {arXiv},
    author        = {Dmitry Ulyanov and Andrea Vedaldi and Victor Lempitsky},
    eprint        = {1607.08022},
    primaryclass  = {cs.CV},
    title         = {Instance Normalization: The Missing Ingredient for Fast Stylization},
    url           = {https://arxiv.org/abs/1607.08022},
    year          = 2017
}

@book{wiener1949extrapolation,
    author    = {Wiener, Norbert},
    publisher = {MIT Press},
    title     = {Extrapolation, Interpolation, and Smoothing of Stationary Time Series},
    year      = 1949
}

@inproceedings{zhou2018unet++,
    author       = {Zhou, Zongwei and Rahman Siddiquee, Md Mahfuzur and Tajbakhsh, Nima and Liang, Jianming},
    booktitle    = {Deep learning in medical image analysis and multimodal learning for clinical decision support: 4th international workshop, DLMIA 2018, and 8th international workshop, ML-CDS 2018, held in conjunction with MICCAI 2018, Granada, Spain, September 20, 2018, proceedings 4},
    organization = {Springer},
    pages        = {3--11},
    title        = {{UNet++}: A Nested U-Net Architecture for Medical Image Segmentation},
    year         = 2018
}

@article{zhou2021nnformer,
    author    = {Zhou, Hong-Yu and Yang, Chang and Ma, Lanfen and Zhou, Shun and Wang, Cheng and Liu, Yandong and Zhang, Yi and Zhou, Jie},
    journal   = {IEEE Transactions on Image Processing},
    pages     = {1--14},
    publisher = {IEEE},
    title     = {{nnFormer}: Interleaved transformer for volumetric segmentation},
    volume    = 32,
    year      = 2023
}

\end{document}